\newcommand{\cmark}{\ding{51}}%
\newcommand{\xmark}{\ding{55}}%
\definecolor{darkred}  {rgb}{0.5,0,0}
\definecolor{darkblue} {rgb}{0,0,0.5}
\definecolor{darkgreen}{rgb}{0,0.5,0}
\theoremstyle{definition}
\newtheorem{corollary}{Corollary}
\newtheorem{definition}{Definition}
\newtheorem{conjecture}{Conjecture}
\newtheorem{lemma}{Lemma}
\newtheorem{proposition}{Proposition}
\newtheorem{theorem}{Theorem}
\newtheorem*{remark}{Remark}
\newcommand{\mbb}{\mathbb}
\newcommand{\mc}{\mathcal}
\newcommand{\tr}{\textrm{Tr}}
\newcommand{\ket}[1]{|#1\rangle}
\newcommand{\bra}[1]{\langle #1|}
\newcommand{\op}[2]{|#1\rangle\langle#2|}
\definecolor{cool_green}{rgb}{0.0, 0.5, 0.0}
\newcommand{\yujie}{\color{black}}
\newcommand{\blk}{\color{black}}
\begin{document}

\preprint{APS/123-QED}

\title{Criteria for optimal entanglement-assisted long baseline telescopy}

\author{Yujie Zhang}
\email{yujie4physics@gmail.com}
\affiliation{Institute for Quantum Computing and Department of Physics \& Astronomy,
University of Waterloo, 200 University Ave W, Waterloo, Ontario, N2L 3G1, Canada}
\affiliation{Perimeter Institute for Theoretical Physics, 31 Caroline Street North, Waterloo, Ontario, Canada N2L 2Y5}
\author{Thomas Jennewein}
\affiliation{Institute for Quantum Computing and Department of Physics \& Astronomy,
University of Waterloo, 200 University Ave W, Waterloo, Ontario, N2L 3G1, Canada}
\affiliation{Department of Physics, Simon Fraser University, 8888 University Dr W, Burnaby, BC V5A 1S6, Canada}
\date{\today}
\begin{abstract}

    Entanglement-assisted telescopy protocols have been proposed as a means to extend the baseline of optical interferometric telescopes. However, the optimal entangled resource and a clear optimality criterion have remained unclear. Here, we propose a novel framework for systematically characterizing entanglement-assisted telescopy by integrating quantum metrology tools with the superselection rule (SSR) framework from quantum information theory. In our approach, the estimation problem in quantum telescopy is rigorously quantified using the quantum Fisher information (QFI) under SSR constraints. Building on this framework, we derive the fundamental limits of astronomical parameter estimation with finite entanglement resources and introduce new protocols that outperform previous methods and asymptotically saturate the optimal bound. Moreover, our proposed protocols are compatible with existing linear-optical technology and could inspire practical quantum telescopy schemes for near-term, lossy, and repeaterless quantum networks.
\end{abstract}

\maketitle

\section{Introduction}
The basic principles of stellar interferometry involve the coherent measurement of light using distinct collection telescopes to form an effective large-scale imaging system. This method enhances the angular resolution well beyond the diffraction limit of a single telescope, restricted by its aperture~\cite{Cittert1934, Zernike1938}. Unlike interferometric imaging in the radio-frequency spectrum, where Earth-sized telescope arrays have been highly successful, for example, in imaging black holes~\cite{thompson2017, Akiyama2019a}, optical systems face significant scaling limitations.  In the standard `direct detection' approach, photons collected at two telescopes are coherently recombined for interference~\cite{Monnier2003, shao1992}. However, this method requires a stable optical channel between the telescopes, and current technologies achieve a baseline of approximately 300 meters~\cite{Monnier2003}. Furthermore, methods based on optical homodyning using local oscillators have fundamental difficulties in achieving high signal-to-noise ratios in this regime~\cite{lawson2000}.  These challenges have motivated interferometric telescopy assisted by quantum network technologies~\cite{Awschalom2021,   Simon2017, huang2025}. \par   

The first quantum-networking-enhanced optical interferometry method was proposed by Gottesman, Jennewein, and Croke (GJC)~\cite{Gottesman2012}, who showed that high signal-to-noise ratio local measurements can be achieved at arbitrarily long baselines by utilizing single-photon-entangled (SPE) states provided by quantum repeater networks. This surprising result was rigorously explained by Tsang~\cite{tsang2011} using tools from quantum optics and quantum estimation theory. It was recently demonstrated experimentally~\cite{Matthew2023, Diaz21}, where it was highlighted that the \textit{nonlocal} (NL) or entangled property of the ancilla single-photon states is essential for the enhancement in quantum telescopy. Building on this, several entanglement-assisted protocols have been developed, such as using quantum memories and error correction to reduce the number of required single-photon entangled states~\cite{Khabiboulline2019, Huang2022},  implementing control gates to boost the success rate of the original GJC protocol~\cite{Czupryniak2023}, or using multiple copies of SPE states to achieve optimal performance~\cite{Marchese2023, Czupryniak2022} asymptotically. Recently, continuous-variable quantum teleportation using two-mode squeezed vacuum states has shown promising results~\cite{wang2023astronomical, huang2024}. Despite these advances, a clear, operational \emph{criterion} for when a given entanglement-assisted scheme is optimal under the physical constraints relevant to long-baseline optics has been lacking. Here, we provide a systematic study of entanglement-assisted protocols, recognizing that both \textit{locality} and \textit{superselection rule} (SSR) constraints must be taken into account to quantify the performance of different entanglement-assisted protocols clearly. \par 

The locality constraint, as introduced in~\cite{tsang2011}, is a well-recognized limitation in classical astronomical imaging operating without direct detection. While entangled (‘nonlocal’) resources can overcome this constraint, studying locality alone is insufficient for comparing different quantum telescopy protocols. A previously overlooked yet critical constraint arises from the absence of a phase reference -- also known as the superselection rule (SSR)~\cite{Barlett2007}. By examining the quantum Fisher information of astronomical photon states under the SSR constraint, we show that an SSR associated with $U(1)$ symmetry~\cite{marvian2012, Gour2009} places additional limitations on the amount of information extractable from astronomical interference measurements. To address this limitation, we systematically analyze and utilize various ancilla states as a \textit{phase reference} (PR) to effectively circumvent the SSR constraint. 

This framework enables us to identify the fundamental limits of entanglement-assisted schemes, quantify the utility of different ancilla states~\cite{Marchese2023, wang2023astronomical}, and design improved protocols. Concretely, in Section~\ref{sectionII} we review quantum telescopy as a quantum parameter-estimation problem and the photon-number $U(1)$ superselection rule (SSR) relevant to long-baseline optical telescopy. In Section~\ref{sectionIII} we analyze the implications of the local SSR, derive the general limitations it imposes on any entanglement-assisted scheme (tight QFI bounds), and provide criteria for ancilla-assisted schemes that overcome these limitations. In Section~\ref{sectionIV}, we evaluate and compare existing ancilla states used in quantum telescopy protocols within this framework and introduce a new entanglement-assisted ancilla that outperforms prior approaches under the same constraints. In Section~\ref{sectionV} we present a linear-optics, local-operations-and-classical-communication (LOCC) implementation that achieves the optimal quantum-telescopy limit by saturating the QFI bound derived here.
\blk
\par 

\section{Preliminaries}
\label{sectionII}
\subsection{Quantum Telescopy}
The optical signal received by the two telescopes can be modeled as a bipartite, two-mode, quasi-monochromatic weak thermal source with mean photon number $\epsilon \ll 1$, as is typical in optical interferometry~\cite{Mandel1995}. The quantum state of the collected light can thus be described by the following density operator:
 \begin{equation}\label{eq: rho}
\rho_s=(1-\epsilon)\rho_s^{(0)}+\epsilon\rho_s^{(1)}+O(\epsilon^2)\rho_s^{(>1)},
 \end{equation}
where the single-photon term $\rho_s^{(1)}$, which is the leading-order contribution encoding the complex visibility $g$, can be expressed in the single-photon two-mode Fock basis $\{\ket{0}_A\ket{1}_B, \ket{1}_A\ket{0}_B\}$ as:
\begin{equation}\label{eq: rho_1}
\rho_s^{(1)}=\frac{1}{2}\begin{pmatrix}
1 & g \\
g^* & 1
\end{pmatrix},
\end{equation}
with the two telescopes labeled as $A$ and $B$ (Fig.~\ref{fig:telescope1}), comprising an optical interferometer. \par 
Since the intensity distribution of an astronomical object relates to the complex visibility $g=|g|e^{i\theta}$ through the Van Cittert-Zernike theorem~\cite{Zernike1938, Cittert1934}, achieving a precise estimation of $g$ is the central objective of an astronomical interferometric imaging problem~\cite{Monnier2003}. \par 
\begin{figure}[t]
    \centering
\includegraphics[width=0.45\textwidth]{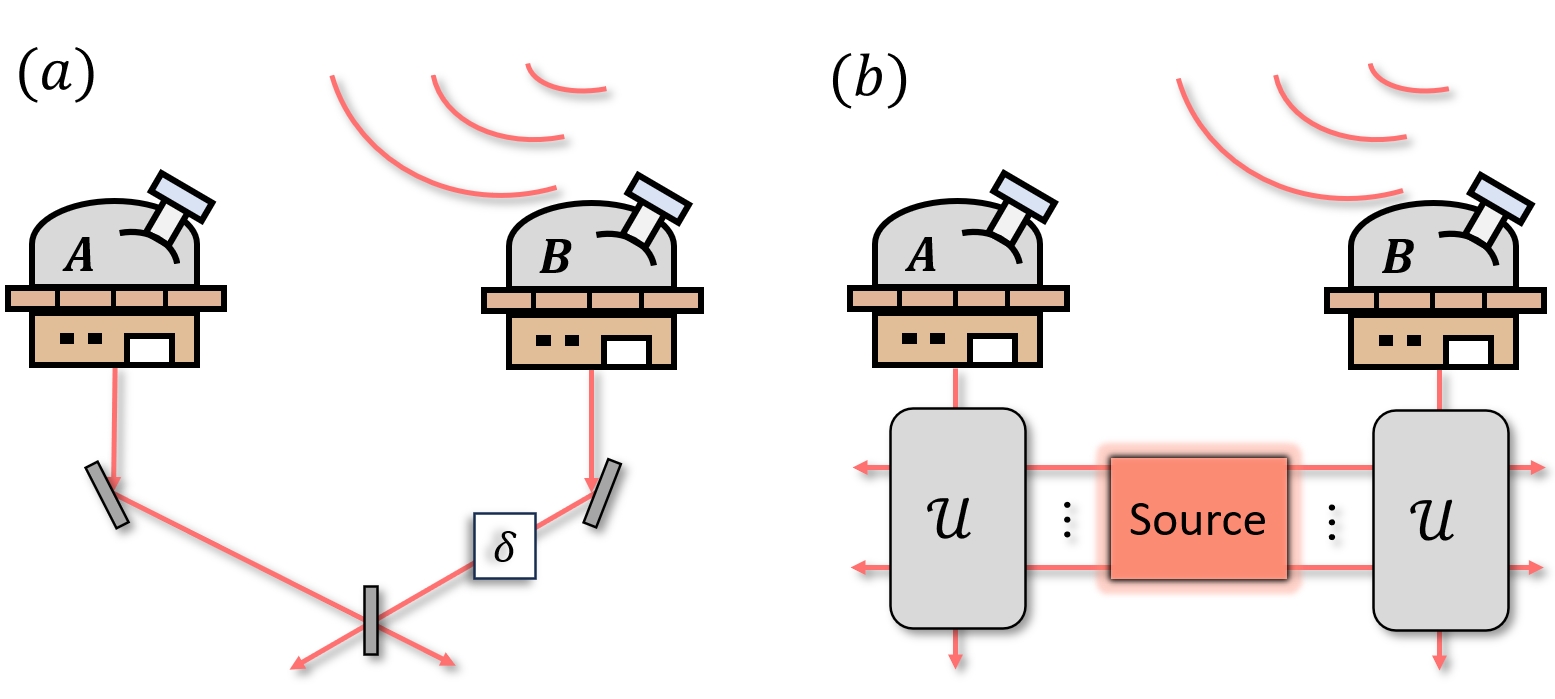}
    \caption{(a): Direct detection: light collected at two telescopes A and B is combined to interfere. (b): Entanglement-assisted protocol: shared quantum resources are provided to each telescope to enable interferometric imaging via local linear-optical circuit $\mc{U}$.}
    \label{fig:telescope1}
\end{figure}
To quantify the amount of information about $g$ contained in the source state, we consider the quantum Fisher information (QFI) matrix~\cite{Braunstein1994} $\mbb{H}_{\mu,\nu}$ of estimating parameter $\mu,\nu\in (|g|, \theta)$ given the source state $\rho_s$.  The QFI matrix upper bounds the classical Fisher information (FI) matrix $\mathbb{F}$ for any measurement, and via the quantum Cramér–Rao bound it lower-bounds the estimation covariance: $\mbb{H}^{-1}\le\mbb{F}^{-1}\le\mathrm{Cov}$. In the subsequent discussion, we will refer to  $\mathbb{F}_{\mu}$ and $\mathbb{H}_{\mu}$ as the diagonal elements for parameter $\mu$.

Unlike previous works~\cite{tsang2011, Czupryniak2022, wang2023astronomical, huang2024}, which used Fisher information $\mbb{F}$ as the metric for quantifying quantum telescopy protocols, we begin by analyzing the quantum Fisher information $\mbb{H}$ associated with the joint state of the source state $\rho_s$ and ancilla state $\rho_a$ (introduced below).

\begin{theorem}
\textit{Optimal quantum telescopy:} the QFI of estimating $|g|$ and $\theta$ in source state $\rho_s$ is given by:
\begin{equation}
\label{eq: telescope QFI}
\mbb{H}_{|g|}[\rho_s]=\frac{\epsilon}{1-|g|^2}+O(\epsilon^2),~\mbb{H}_{\theta}[\rho_s]=|g|^2\epsilon+O(\epsilon^2).
\end{equation}
These expressions are derived in the Appendix~\ref{appendixA}, assuming that no constraint is present and therefore serve as an upper bound for any quantum telescopy protocol. 
\end{theorem}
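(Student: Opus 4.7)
The plan is to exploit the orthogonal Fock-sector block structure of $\rho_s$ together with the closed-form qubit QFI. First I would note that the three terms $\rho_s^{(0)},\rho_s^{(1)},\rho_s^{(>1)}$ in Eq.~(\ref{eq: rho}) are supported on mutually orthogonal total-photon-number sectors $n=0,1,\ge 2$, while their weights $1-\epsilon,\epsilon,O(\epsilon^2)$ are independent of the astronomical parameters $|g|$ and $\theta$. Because the symmetric logarithmic derivative (SLD) equation decouples across orthogonally supported blocks, and the cross-block matrix elements $\bra{n,i}\partial_\mu\rho_s\ket{n',i'}$ vanish whenever $n\neq n'$, the QFI is additive across sectors:
\begin{equation}
\mbb{H}_\mu[\rho_s]=(1-\epsilon)\,\mbb{H}_\mu[\rho_s^{(0)}]+\epsilon\,\mbb{H}_\mu[\rho_s^{(1)}]+O(\epsilon^2).
\end{equation}
The vacuum contribution vanishes since $\rho_s^{(0)}=\op{0,0}{0,0}$ is parameter-independent, and the multi-photon contribution is swept into $O(\epsilon^2)$ because its prefactor already is and its QFI is finite for any physically admissible $\rho_s^{(>1)}$.

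The computation therefore reduces to evaluating $\mbb{H}_\mu[\rho_s^{(1)}]$ on the two-dimensional single-excitation subspace spanned by $\{\ket{0}_A\ket{1}_B,\ket{1}_A\ket{0}_B\}$, on which $\rho_s^{(1)}$ is a qubit with Bloch vector $\bs{r}=(|g|\cos\theta,-|g|\sin\theta,0)$ of length $|g|$. I would either diagonalise it (eigenvalues $(1\pm|g|)/2$ with Bell-like eigenvectors that are equal superpositions of $\ket{01}$ and $e^{\pm i\theta}\ket{10}$) and substitute into the spectral formula $\mbb{H}_\mu=2\sum_{j,k}|\bra{j}\partial_\mu\rho\ket{k}|^2/(p_j+p_k)$, or apply the standard single-qubit Bloch expression
\begin{equation}
\mbb{H}_{\mu\nu}[\rho_s^{(1)}]=\partial_\mu\bs{r}\cdot\partial_\nu\bs{r}+\frac{(\bs{r}\cdot\partial_\mu\bs{r})(\bs{r}\cdot\partial_\nu\bs{r})}{1-|\bs{r}|^2}.
\end{equation}
A direct substitution yields $\mbb{H}_{|g|}[\rho_s^{(1)}]=1/(1-|g|^2)$ and $\mbb{H}_\theta[\rho_s^{(1)}]=|g|^2$, with a vanishing off-diagonal entry (since $\partial_{|g|}\bs{r}\perp\partial_\theta\bs{r}$ and $\bs{r}\cdot\partial_\theta\bs{r}=0$). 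Multiplying by the weight $\epsilon$ from the first step reproduces Eq.~(\ref{eq: telescope QFI}).

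I do not anticipate any genuinely hard step: the vacuum block is trivial, the single-photon block is only a qubit, and cross-block SLD entries vanish by sector orthogonality. The one point warranting care is the $O(\epsilon^2)$ bookkeeping, i.e., verifying that the unspecified operator $\rho_s^{(>1)}$ cannot contaminate the $O(\epsilon)$ coefficient of the QFI; this follows because its prefactor is already $O(\epsilon^2)$ and its QFI is bounded. Finally, the ``upper bound for any quantum telescopy protocol'' clause of the theorem is immediate from the quantum Cram\'er--Rao inequality together with the data-processing inequality applied to any measurement channel acting on $\rho_s$, so that no unconstrained protocol can exceed the expressions stated above.
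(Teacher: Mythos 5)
Your proposal is correct and follows essentially the same route as the paper's Appendix~\ref{appendixA}: isolate the single-photon block $\rho_s^{(1)}$, argue that the vacuum and multi-photon sectors contribute nothing at order $\epsilon$, and evaluate the qubit QFI of $\rho_s^{(1)}$, whether via the Bloch-vector formula you quote or via the explicit SLDs $\hat L_{|g|},\hat L_\theta$ the paper writes down (the two computations are equivalent and both give $\mbb{H}_{|g|}=1/(1-|g|^2)$, $\mbb{H}_\theta=|g|^2$, and vanishing off-diagonal term). Your explicit justification of sector-wise additivity and of the $O(\epsilon^2)$ bookkeeping is slightly more careful than the paper, which simply asserts $\mathbb{H}_{\nu,\mu}[\rho_s]=\epsilon\,\tr[\rho_s^{(1)}\tfrac12\{\hat L_\nu,\hat L_\mu\}]+O(\epsilon^2)$, but this is a refinement rather than a different argument.
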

The QFI above is attainable in idealized direct interference with a lossless optical channel~\cite{tsang2011}. In the long-baseline setting, however, SSR constraints arise when direct interference is unavailable, making an explicit phase reference essential. Rather than overlooking this critical constraint, we provide a detailed analysis of the impact of SSR and phase references. 

\par 
\subsection{Superselection Rule}

In what follows, we briefly review the formulation of photon-number superselection rules ($U(1)$-SSR) as presented in~\cite{Barlett2007} and provide further details in Appendix~\ref{appendixB}. 

\textit{Global photon-number SSR--} In quantum optics experiments, the quantum states of optical modes are always referred to a phase reference. The absence of an absolute phase reference imposes constraints on the types of states that can be prepared and measured~\cite{Barlett2007}. 

Consider a quantum state of $K$ different modes defined relative to a phase reference $\ket{\psi}=\sum c_{n_1,\cdots,n_K}\otimes_{i=1}^K\ket{n_i}$ with $\{\otimes_{i=1}^K\ket{n_i}_i\}$ being the multimode Fock state basis on Hilbert space $\mc{H}$, and $\ket{n_i}_i$ denoting $n_i$ photons in mode $i$. When an absolute phase reference is absent, it is operationally equivalent to replace the state by its block-diagonal representative with respect to the Hilbert space decomposition $\mc H = \bigoplus_{n=0}^{\infty} \mc H_n$, where $\mc H_n:=\text{Span}\{\otimes_{i=1}^K\ket{n_i}_i:\sum_i n_i=n\}$ denotes the subspace with total photon number $n$~\cite{Barlett2007}. Under this constraint, the state is replaced by a density operator that is block-diagonal in the total photon-number subspaces:
\begin{equation}
\mc E_\text{g-ssr}(\op{\psi}{\psi})=\sum_{n=0}^{\infty} P_n\op{\psi}{\psi}P_n:=\bigoplus_n \sigma_n,
\label{eq:gl-ssr}
\end{equation}
where $P_n=\underset{\sum_i n_i=n}{\sum}\otimes_{i=1}^K\op{n_i}{n_i}_i$ acts on the joint multimode state, and projects it onto the $n$-photon subspace $\mc{H}_n$ with $\sigma_n$ being the subnormalized state on $\mc H_n$. For the simplest single-mode case with $\ket{\psi}=\sum c_{n}\ket{n}$, we have $\mc E_{\text{g-srr}}(\op{\psi}{\psi})=\sum_n^{\infty} |c_n|^2\op{n}{n}$. 

As a concrete example, the two-mode astronomical source in Eq.~\eqref{eq: rho} is already invariant under the global SSR, consistent with there being no absolute phase shared with the receivers.

\textit{Local photon-number SSR--} Beyond global SSR, when the system is bipartite with spatially separated sites A and B (e.g., telescopes in a long baseline optical interferometer), the absence of a correlated phase reference between the sites will further lead to \textit{local superselection rule}. In particular, one can take the decomposition of the underlying Hilbert space as~\cite{Bartlett2003, Schuch2004, Verstraete2003, Barlett2007}:
\begin{align}
\mc H=\mc H^A\otimes \mc H^B&=\bigoplus_{m=0}^{\infty} \bigoplus_{n=0}^{\infty} \mc H^A_n\otimes H^B_{m}. 
\label{eq:lssr-decom}
\end{align}
With a local SSR in force, any bipartite state $\rho^{AB}$ is operationally indistinguishable (under SSR-respecting processes) from its local-SSR-invariant counterpart. 
\begin{lemma}
\label{lem:SSR}
A bipartite state $\rho^{AB}$ subject to the photon-number local superselection rule can be represented as\cite{Verstraete2003}:
\begin{align}
\mc{E}_\text{l-ssr}(\rho^{AB})&=\sum_{m=0}^{\infty}\sum_{n=0}^{\infty}(P^A_{n}\otimes P^B_{m})\rho^{AB}(P^A_{n}\otimes P^B_{m}) \label{eq:l-ssr}\\
&:=\bigoplus_{n,m=0}^{\infty}\sigma_{n,m}\notag 
\end{align}
where $P^{A(B)}_n=\underset{\sum_i n_{i}=n}{\sum}\otimes_{i=1}^{K_{A(B)}}\op{n_i}{n_i}_{A_i(B_i)}$ acts on all local modes at site $A(B)$ and projects onto the $n$-photon subspace $\mathcal H^{A(B)}n$ (with $K_A$ and $K_B$ modes at $A$ and $B$ respectively). Thus $\mc E_{\text{l-ssr}}(\rho^{AB})$ is block-diagonal in local photon-number sectors with respect to Eq.~\ref{eq:lssr-decom}, with $\sigma_{n,m}$ being the (subnormalized) state on $\mathcal H^A_n\otimes\mathcal H^B_m$.  \par 

\end{lemma}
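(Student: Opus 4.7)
The plan is to identify the local photon-number SSR with the $U(1) \times U(1)$ symmetry generated by the local total-number operators $N_A = \sum_{i=1}^{K_A} a_i^\dagger a_i$ and $N_B = \sum_{i=1}^{K_B} b_i^\dagger b_i$, and then perform the associated group-theoretic twirl. Operationally, the lack of a shared phase reference between sites $A$ and $B$ means that $\rho^{AB}$ is indistinguishable from $(U_\phi^A \otimes U_\varphi^B)\rho^{AB}(U_\phi^A \otimes U_\varphi^B)^\dagger$ for any independent local phases, where $U_\phi^A = e^{i\phi N_A}$ and $U_\varphi^B = e^{i\varphi N_B}$. The SSR-invariant representative is therefore obtained by averaging over the Haar measure on the two $U(1)$ factors, which is the definition that I would take as the starting point for $\mc E_{\text{l-ssr}}$.

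Next I would diagonalize the local generators. Since $N_A$ has integer spectrum with spectral projectors $P_n^A$ onto $\mc H_n^A$ (and similarly for $B$), one has $U_\phi^A = \sum_{n=0}^\infty e^{i n \phi} P_n^A$ and $U_\varphi^B = \sum_{m=0}^\infty e^{i m \varphi} P_m^B$. Substituting into the twirl and exchanging the integral with the (bounded) double sum gives
\begin{equation}
\mc E_{\text{l-ssr}}(\rho^{AB}) = \sum_{n,n',m,m'} \left[ \int_0^{2\pi}\!\!\frac{d\phi}{2\pi} e^{i(n-n')\phi}\right]\!\left[\int_0^{2\pi}\!\!\frac{d\varphi}{2\pi} e^{i(m-m')\varphi}\right] (P_n^A\otimes P_m^B)\,\rho^{AB}\,(P_{n'}^A\otimes P_{m'}^B).
\end{equation}
Each bracketed integral evaluates to $\delta_{n,n'}$ and $\delta_{m,m'}$, which collapses the quadruple sum to the double sum in Eq.~\eqref{eq:l-ssr}. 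Identifying the sector $\mc H_n^A \otimes \mc H_m^B$ with the range of $P_n^A \otimes P_m^B$ immediately gives the direct-sum decomposition $\bigoplus_{n,m}\sigma_{n,m}$ with $\sigma_{n,m}=(P_n^A\otimes P_m^B)\rho^{AB}(P_n^A\otimes P_m^B)$ subnormalized on $\mc H_n^A \otimes \mc H_m^B$.

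The only nontrivial step is justifying the operational equivalence that identifies the local SSR constraint with precisely the $U(1)_A\times U(1)_B$ twirl rather than, say, a larger or smaller symmetry group; this is where one must appeal to the characterization of SSR-respecting operations in~\cite{Bartlett2003, Schuch2004, Verstraete2003}, namely that a channel is implementable without a shared reference frame iff it is covariant under the symmetry, and hence two states yield the same statistics under all such channels iff they have the same $G$-twirl. Once this point is granted, the remaining content of the lemma is the elementary spectral calculation above; consistency with Eq.~\eqref{eq:gl-ssr} in the single-site limit ($K_B=0$, trivial $B$ factor) provides a useful sanity check.
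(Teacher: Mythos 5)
Your proposal is correct and follows essentially the same route as the paper's Appendix~\ref{appendixB} proof: both define $\mc E_{\text{l-ssr}}$ as the twirl over independent local $U(1)$ phases $e^{i\phi_1\hat N_A}\otimes e^{i\phi_2\hat N_B}$, expand in the spectral projectors $P_n^A\otimes P_m^B$, and use the phase integrals to kill the off-diagonal $(n,n')$, $(m,m')$ terms. Your added remarks on justifying the operational equivalence of the SSR with the twirl, and the single-site consistency check, are sensible but not a different argument.
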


In summary, under a global SSR, only information within each total $n$-photon subspace is operationally accessible\footnote{Within a fixed 
$n$-photon subspace, there can be rich internal mode structure (e.g., spatial, temporal, polarization modes configuration)}, while coherence between different photon-number subspaces is effectively erased. For example, the relative phase in $\ket{0}+e^{i\phi}\ket{1}$ is unobservable. Consequently, any state may be represented by a density operator that is block-diagonal in photon-number subspaces, as in Eq.~\ref{eq:gl-ssr}.

Furthermore, in the presence of a local SSR for a bipartite system—which subsumes the global SSR—the constraint is stronger: coherence between different local photon-number sectors $(n,m)$ will be removed even within the same global total photon number subspace. For instance, the relative phase information in $\ket{0}_A\ket{1}_B+e^{i\phi}\ket{1}_A\ket{0}_B$ is inaccessible when parties $A$ and $B$ lack a correlated phase reference. Therefore, under a local SSR, any bipartite state can be replaced by its block-diagonal representative in the local photon-number subspaces as in Eq.~\ref{eq:l-ssr}.

Since local SSR subsumes the global SSR, we impose only the local SSR in what follows. 
\blk

\section{Quantum telescopy under Superselection Rule}  
\label{sectionIII}
To compare entanglement-assisted protocols on equal footing, we evaluate their performance via the quantum Fisher information (QFI) subject to the relevant local photon-number SSR. The QFI directly quantifies how well a candidate ancilla acts as a distributed phase-reference (PR) resource for entanglement-assisted quantum telescopy.
\blk

A first immediate consequence of Lemma~\ref{lem:SSR} is that, under the local SSR, the first-order term $\rho^{(1)}_s$ in Eq.~\ref{eq: rho} is fully dephased and carries no dependence on $g$:
\begin{equation}
\mc{E}_{\text{l-ssr}}(\rho_s) = (1-\epsilon)\mathbb{I}^{(0)} + \epsilon \frac{\mathbb{I}^{(1)}}{2} + O(\epsilon^2)\mathcal{E}_{\text{l-ssr}}(\rho_s^{(>1)}),
\end{equation}
where $\mbb I^{(n)}$ denotes the identity map on $\mc{H}_n$. Therefore, the achievable QFI in $\mc{E}_{\text{l-ssr}}(\rho_s)$ scales at most as $O(\epsilon^2)$, which explains why intensity interferometry --and any scheme without an explicit shared phase reference-- \blk is inefficient for long baseline astronomical imaging in the weak-thermal-sources regime~\cite{Brown1956, Monnier2003}.

To overcome this limitation, one may supply a shared phase-reference (PR) \cite{Barlett2006} in the form of a distributed ancilla state $\rho_a^{AB}$. This resource serves as a phase reference to the astronomical source $\rho_s^{AB}$. And formally, the joint state, subject to the local SSR, is:
\begin{align}
\mc E_{\text{l-ssr}}(\rho^{AB})= \sum_{n, m=0}^{\infty} (P^A_n \otimes P^B_m)[\rho_s^{AB} \otimes \rho_a^{AB}](P^A_n \otimes P^B_m), \label{eq: ancilla-state}
\end{align}
Here we note $P^A_n$ and $P^B_m$ project onto the full local n- and m-photon subspaces of the joint system at each site. Each site contains one source mode and, depending on $\rho_a^{AB}$, possibly many ancilla modes (see Examples in Section~\ref{sectionIV}). \blk

Since $\rho_a^{AB}$ and $\mc{E}_{\text{l-ssr}}$ are $g$-independent, by additivity and data-processing (monotonicity) of the QFI, we have:
\begin{lemma}
The QFI of the composite system in the presence of SSR is upper-bounded by the optimal QFI
\label{lem:QFI-bound}
\begin{subequations}
\begin{align}
&\mbb{H}_{|g|}[\mc{E}_{\text{l-ssr}}(\rho^{AB})] \le \mbb{H}_{|g|}[\rho_s^{AB}]\\
&\mbb{H}_{\theta}[\mc{E}_{\text{l-ssr}}(\rho^{AB})] \le \mbb{H}_{\theta}[\rho_s^{AB}],
\end{align}
\end{subequations}
\end{lemma}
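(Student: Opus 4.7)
The plan is to reduce the lemma to two standard properties of the quantum Fisher information (QFI): additivity across tensor factors and the data-processing inequality (monotonicity) under parameter-independent CPTP maps. Both hypotheses apply cleanly here because the ancilla state $\rho_a^{AB}$ is prepared independently of the astronomical parameters $(|g|,\theta)$, and the local-SSR twirl $\mc{E}_{\text{l-ssr}}$ is itself built from projectors $P^A_n \otimes P^B_m$ that carry no $g$-dependence.

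First, I would invoke additivity of the QFI across a tensor factor. For $\mu \in \{|g|,\theta\}$, since $\rho_a^{AB}$ has no $\mu$-dependence,
\begin{equation}
\mbb{H}_\mu[\rho_s^{AB} \otimes \rho_a^{AB}] = \mbb{H}_\mu[\rho_s^{AB}] + \mbb{H}_\mu[\rho_a^{AB}] = \mbb{H}_\mu[\rho_s^{AB}],
\end{equation}
where the second term vanishes because the QFI of any parameter-independent family is identically zero.

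Second, I would identify $\mc{E}_{\text{l-ssr}}$ as a parameter-independent CPTP (pinching) channel — this is immediate from its explicit Kraus form in Eq.~\eqref{eq: ancilla-state}, whose operators $P^A_n\otimes P^B_m$ are mutually orthogonal projectors summing to the identity. Then the data-processing inequality for the QFI under parameter-independent channels gives
\begin{equation}
\mbb{H}_\mu[\mc{E}_{\text{l-ssr}}(\rho_s^{AB} \otimes \rho_a^{AB})] \le \mbb{H}_\mu[\rho_s^{AB} \otimes \rho_a^{AB}].
\end{equation}
Chaining the two displays yields the claimed bound for both $\mu=|g|$ and $\mu=\theta$.

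There is no substantial technical obstacle: both ingredients are textbook facts about the QFI. The points that deserve explicit mention in the write-up are the two operational assumptions that make them legitimate — namely, that $\rho_a^{AB}$ is fixed independently of the unknown visibility $g$ (no adaptive tailoring of the ancilla), and that the SSR twirl $\mc{E}_{\text{l-ssr}}$, though acting jointly on source and ancilla modes, has Kraus operators that are themselves $g$-independent. Under these two conditions the reduction to additivity plus data-processing is immediate and covers both diagonal QFI components simultaneously.
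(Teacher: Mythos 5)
Your proposal is correct and follows essentially the same route as the paper's own proof in Appendix~\ref{appendixB}: additivity of the QFI over the tensor factor (with $\mbb{H}_\mu[\rho_a^{AB}]=0$ since the ancilla is parameter-independent) combined with monotonicity under the parameter-independent pinching $\mc{E}_{\text{l-ssr}}$. The only cosmetic difference is that the paper states the first step as an inequality $\mbb{H}_\mu[\rho_s^{AB}]+\mbb{H}_\mu[\rho_a^{AB}]\ge\mbb{H}_\mu[\rho_s^{AB}\otimes\rho_a^{AB}]$ while you use the (equally valid) exact additivity.
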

And we can define the following quantity:
\begin{definition}
\label{def:QFI ratio}
\textit{QFI ratio} is defined by the ratio of its QFI to the optimal QFI as:
\begin{equation}
\mathbb{h}[\rho^{AB}_a]= \min \{\frac{\mbb{H}_{|g|}[\mc{E}_{\text{l-ssr}}(\rho^{AB})]}{\mbb{H}_{|g|}[\rho_s^{AB}]}, \frac{\mbb{H}_{\theta}[\mc{E}_{\text{l-ssr}}(\rho^{AB})]}{\mbb{H}_{\theta}[\rho_s^{AB}]}\}
\label{eq: QFI ratio}
\end{equation} 
\end{definition}

We emphasize that the \textit{QFI}-ratio serves only as an upper bound for the achievable \text{FI}-ratio in any actual protocol; their achievability will be discussed later via explicit protocols. 

To establish an upper bound on the QFI ratio for arbitrary entangled states, we write the most general pure ancilla as a superposition over states $\{\ket{n_A,m_B}\}_{n,m}$ with support on different total local photon-number subspaces, i.e.,
\begin{equation}
\ket{\psi}_a = \sum_{n,m\ge 0} f_{n,m}\,\ket{n_A,m_B},\qquad \ket{n_A,m_B}\in \mc H^A_n \otimes \mc H^B_m.
\label{eq:ancilla-state1}
\end{equation}
\yujie Here $\ket{n_A,m_B}$ denotes any unit vector supported on $\mc H^A_n \otimes \mc H^B_m$, that is, any state having $n$ photons at $A$ and $m$ photons at $B$, without specifying how they are distributed among local modes.

Thus $\ket{n_A,m_B}$ may represent
\begin{itemize}
    \item $\ket{n}_A\ket{m}_B$: a two-mode product state with $n$ photons in one mode at $A$ and $m$ photons in one mode at $B$ (or equivalently, $\ket{n}_A\ket{0}_A^{\otimes (n+m-1)} \ket{m}_B\ket{0}_B^{\otimes (n+m-1)}$);
    \item $\ket{1}_A^{\otimes n}\ket{0}_A^{\otimes m}\ket{0}_B^{\otimes n}\ket{1}_B^{\otimes m}$: a $2(n+m)$-mode product state with one photon per mode in the first $n$ modes at $A$, and one photon per mode in the last $m$ modes at $B$;
    \item $\sum_{\pi}\sqrt{p(\pi)}\,\pi\!\big(\ket{1}_A^{\otimes n}\ket{0}_A^{\otimes m}\ket{0}_B^{\otimes n}\ket{1}_B^{\otimes m}\big)$: a $2(n+m)$-mode entangled state, with $\pi$ ranging over permutations of local modes at $A$ and at $B$.
\end{itemize}
This mode-agnostic notion is particularly convenient for studying different ancilla states because any two vectors in $\mc H^A_n\otimes\mc H^B_m$ (i.e, two states with different internal mode configuration) are related by some unitary $U_{nm}^{AB}$ on $\mc{H}^A_n\otimes\mc H^B_m$, e.g., 
\begin{align}
&U_{nm}^{AB}\ket{n}_A\ket{0}_A^{\otimes n+m-1}\ket{m}_B\ket{0}_B^{\otimes n+m-1}\notag \\
=&\ket{1}_A^{\otimes n} \ket{0}_A^{\otimes m}\ket{0}_B^{\otimes n}\ket{1}_B^{\otimes m}
\end{align}
\blk
By Lemma~\ref{lem:ancilla-equivalence}, all such choices of internal mode configuration then yield the same QFI, so the QFI ratio below depends only on the amplitudes $f_{nm}$ of the ancilla state $\ket{\psi}_a$. 
\yujie
\begin{lemma}
\label{lem:ancilla-equivalence}
Let $\tilde\rho_a^{AB}=U\rho_a^{AB}U^\dagger$ where $U=\bigoplus_{n,m}U_{nm}^{AB}$ is block-diagonal on $\bigoplus_{n,m}\mc{H}^A_n\otimes\mc H^B_m$. Then, the QFI ratio $\mathbb{h}[\rho_a^{AB}]=\mbb{h}[\tilde\rho_a^{AB}]$.
\end{lemma}

A prove is given in  Appendix~\ref{appendixC}. Briefly, this is because the unitary $\bigoplus_{n,m}U_{nm}^{AB}$ commutes with the local SSR map $\mc{E}_{\text{l-ssr}}$. Since QFI is invariant under parameter-independent unitaries, we may therefore consider the ancilla in the mode-agnostic form of Eq.~\ref{eq:ancilla-state1}; a straightforward calculation to the first order then yields:
\blk
\begin{proposition}
\label{prop: assisted-QFI}
Given an ancilla state $\rho_a=\op{\psi}{\psi}_a$ in Eq.~\ref{eq:ancilla-state1}, the QFI for estimating $|g|$ and $\theta$ of the source state $\rho_s$ in the presence of an SSR is given by
\begin{align}
\mathbb{h}[\ket{\psi}_a]=\sum_{n,m=1}\frac{2|f_{n,m-1}|^2|f_{n-1,m}|^2}{|f_{n,m-1}|^2+|f_{n-1,m}|^2}    
\end{align}
Here, higher-order terms $O(\epsilon)$ are neglected for simplicity, so $\mathbb{h}=0$ should be interpreted as $\mathbb{h}=O(\epsilon)$.
\end{proposition}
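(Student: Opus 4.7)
The plan is to expand $\rho_s\otimes\rho_a$ to first order in $\epsilon$, apply $\mc{E}_\text{l-ssr}$ sector by sector, and reduce the QFI to a sum over independent two-level blocks.

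\textit{Step 1 (surviving coherences).} The only $g$-dependent contribution at leading order is $\epsilon\,\rho_s^{(1)}\otimes\rho_a$. Substituting Eq.~\ref{eq: rho_1} and $|\psi\rangle_a=\sum_{n,m}f_{n,m}|n_A,m_B\rangle$, the source coherence $\tfrac{g}{2}|1,0\rangle\langle 0,1|_s$ tensored with an ancilla term $|n',m'\rangle\langle n'',m''|_a$ survives the projectors $P^A_n\otimes P^B_m$ on both sides of Eq.~\ref{eq:l-ssr} if and only if the local counts match, $(1+n',m')=(n,m)=(n'',1+m'')$. Setting $(n,m)=(1+n',m')$, the surviving off-diagonal term is proportional to $f_{n-1,m}f^*_{n,m-1}$ and lives in the local-photon-number sector $(n,m)$ with $n,m\ge 1$. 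By Lemma~\ref{lem:ancilla-equivalence}, the internal mode configuration of $|n_A,m_B\rangle$ is irrelevant, so I fix any convenient representative.

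\textit{Step 2 (block structure).} Within each sector $(n,m)$, $n,m\ge 1$, the source-vacuum direction $|0,0\rangle_s\otimes|n,m\rangle_a$ is orthogonal to the two-dimensional coherent subspace spanned by
\begin{equation*}
|e_1\rangle=|1,0\rangle_s\otimes|n-1,m\rangle_a,\qquad |e_2\rangle=|0,1\rangle_s\otimes|n,m-1\rangle_a.
\end{equation*}
The source-vacuum piece carries no $g$-dependence. The coherent piece is the subnormalized $2\times 2$ matrix
\begin{equation*}
\sigma_{n,m}=\tfrac{\epsilon}{2}\begin{pmatrix}|f_{n-1,m}|^2 & g\,f_{n-1,m}f^*_{n,m-1}\\ g^*f^*_{n-1,m}f_{n,m-1} & |f_{n,m-1}|^2\end{pmatrix}.
\end{equation*}
Both the internal source-vacuum/coherent split and the sector trace $\tr(\sigma_{n,m})$ are $g$-independent, so the classical Fisher contributions from the block weights vanish and QFI block-additivity collapses to
\begin{equation*}
\mbb{H}_\mu[\mc{E}_\text{l-ssr}(\rho^{AB})]=\sum_{n,m\ge 1}\tr(\sigma_{n,m})\,\mbb{H}_\mu[\tau_{n,m}],\quad \tau_{n,m}:=\sigma_{n,m}/\tr(\sigma_{n,m}).
\end{equation*}

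\textit{Step 3 (two-level QFI and ratio).} Setting $\alpha_{n,m}=|f_{n-1,m}|^2/(|f_{n-1,m}|^2+|f_{n,m-1}|^2)$, the Bloch vector of $\tau_{n,m}$ satisfies $|\vec r|^2=1-4\alpha_{n,m}(1-\alpha_{n,m})(1-|g|^2)$, and the Bloch formula $\mbb{H}_\mu[\tau]=|\partial_\mu\vec r|^2+(\vec r\cdot\partial_\mu\vec r)^2/(1-|\vec r|^2)$ gives
\begin{equation*}
\mbb{H}_{|g|}[\tau_{n,m}]=\frac{4\alpha_{n,m}(1-\alpha_{n,m})}{1-|g|^2},\quad \mbb{H}_\theta[\tau_{n,m}]=4|g|^2\alpha_{n,m}(1-\alpha_{n,m}),
\end{equation*}
with the ancilla-dependent phase $\arg(f_{n-1,m}f^*_{n,m-1})$ merely rotating the Bloch azimuth and cancelling out. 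Multiplying by $\tr(\sigma_{n,m})=\tfrac{\epsilon}{2}(|f_{n-1,m}|^2+|f_{n,m-1}|^2)$ and dividing by the optimal QFIs of Eq.~\ref{eq: telescope QFI} produces the \emph{same} ratio for $|g|$ and for $\theta$, so the minimum in Eq.~\ref{eq: QFI ratio} collapses to the claimed closed form.

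The main obstacle is the bookkeeping in Step 1: I must verify that the local-SSR projection retains exactly one coherent $2\times 2$ block per sector $(n,m)$ with $n,m\ge 1$ and nothing else at order $\epsilon$, and that all sector probabilities carry no $g$-dependence (so that no inter-sector classical Fisher term arises). Once that is settled, Lemma~\ref{lem:ancilla-equivalence} removes any ambiguity about the internal ancilla modes and Step 3 reduces to a short two-level QFI calculation.
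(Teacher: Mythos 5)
Your proposal is correct and follows essentially the same route as the paper's proof in Appendix~\ref{appendixC}: decompose the locally twirled joint state into local-photon-number sectors, observe that each sector $(n,m)$ with $n,m\ge 1$ contains a single $g$-dependent $2\times 2$ block with $g$-independent weight, invoke block-additivity of the QFI, and evaluate the two-level QFI. The only cosmetic difference is that you compute the qubit QFI via the Bloch-vector formula whereas the paper computes the symmetric logarithmic derivatives explicitly; both yield $\mbb H_{|g|}[\tau_{n,m}]=4\alpha(1-\alpha)/(1-|g|^2)$ and $\mbb H_{\theta}[\tau_{n,m}]=4|g|^2\alpha(1-\alpha)$, and your Step~1 bookkeeping already establishes the sector-matching condition you flag as the remaining obstacle.
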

\begin{remark}
If one first applies the global SSR to the ancilla by replacing $\rho_a$ with $\mc E_{\text{g-ssr}}(\rho_a)$ (its global SSR-invariant form), the QFI ratio in Proposition~\ref{prop: assisted-QFI} is unchanged. We do not impose the global SSR a priori because (i) the subsequent local SSR already subsumes it, and (ii) we wish to compare ancilla states that are not global-SSR-invariant (such as a two-mode squeezed vacuum state) to assess their usefulness for quantum telescopy.
\blk
\end{remark}

To facilitate quantitative discussion and comparison, we first consider the scenario where the ancilla state is restricted to at most $N$ photons:
\begin{theorem}
For ancilla state $\ket{\psi}_a$ with at most $N$-photon in total, we have:
\begin{align}
\mathbb{h}[\rho^{AB}_a]\le \cos(\frac{\pi}{N+2})\stackrel{\text{Large $N$}}{\approx} 1-\frac{\pi^2}{2N^2}
\end{align}
\label{thm: QFI-upper}
\end{theorem}
This upper bound is derived in Appendix~\ref{appendixC}, and it is noteworthy that the optimal ancilla states take the form $\ket{\psi}_a=\sum_{n=0}^Nf_{n}\ket{n_A,(N-n)_B}$ with $f_n\propto \sin\left(\frac{(n+1)\pi}{N+2} \right)$.
\begin{remark}
Asymptotically, under the mild assumption that $f_{n}$ above is 'smooth enough', we can prove the existence of a tighter upper bound:
\begin{equation}
\mathbb{h}[\rho^{AB}_a] \stackrel{\text{Large $N$}}{\approx} 1-\frac{\pi^2}{\langle N\rangle^2}+O(\frac{1}{\langle N\rangle^3})
\end{equation}
\end{remark}

Alternatively, one can also impose a constraint on the average photon number $\langle N\rangle = \sum_{n,m}|f_{n,m}|^2(n+m)$, for which a similar bound can be shown:
\begin{corollary}
\label{lem:QIF-upper}
For an ancilla state $|\psi\rangle_a$ with average photon number $\langle N\rangle$,
\begin{align}
\mathbb{h}[\rho^{AB}_a]\le \cos(\frac{\pi}{\langle N\rangle+2})\stackrel{\text{Large $\langle N\rangle$}}{\approx} (1-\frac{\pi^2}{2\langle N\rangle^2})
\end{align}
\end{corollary}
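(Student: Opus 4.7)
The plan is to reduce the average-photon-number bound of the corollary to the fixed-photon-number bound of Theorem~\ref{thm: QFI-upper} via a total-photon-number decomposition, followed by Jensen's inequality.

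First I would decompose the QFI-ratio formula of Proposition~\ref{prop: assisted-QFI} across total-photon-number sectors. The key observation is that in each summand the two amplitudes $f_{n,m-1}$ and $f_{n-1,m}$ live in the same subspace of total photon number $T := n+m-1$. Defining $a_k^T := f_{k,T-k}$ for $k = 0, \ldots, T$ and reindexing by $T$ and $k := n-1$, the formula rearranges as
\begin{equation}
\mbb{h}[\ket{\psi}_a] = \sum_{T \geq 1} S_T,
\qquad
S_T := \sum_{k=1}^{T} \frac{2|a_k^T|^2 |a_{k-1}^T|^2}{|a_k^T|^2 + |a_{k-1}^T|^2},
\end{equation}
so each $S_T$ depends only on amplitudes within the $T$-photon sector and the total QFI ratio splits cleanly.

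Next I would apply Theorem~\ref{thm: QFI-upper} sector by sector. Let $p_T := \sum_{k=0}^{T} |a_k^T|^2$, so that $\sum_T p_T = 1$ and $\langle N \rangle = \sum_T T\,p_T$. Factoring out $p_T$ by writing $|a_k^T|^2 = p_T\,|b_k^T|^2$ with $\sum_k |b_k^T|^2 = 1$, each term in $S_T$ scales linearly in $p_T$, giving $S_T = p_T\, S_T'$, where $S_T'$ is exactly the QFI ratio of a unit-norm state supported on the exact-$T$-photon subspace. Theorem~\ref{thm: QFI-upper} applied with $N=T$ then yields $S_T' \leq \cos(\pi/(T+2))$, so
\begin{equation}
\mbb{h}[\ket{\psi}_a] \leq \sum_{T \geq 0} p_T\,\cos\!\left(\frac{\pi}{T+2}\right).
\end{equation}

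Finally I would close with Jensen's inequality applied to $g(T) := \cos(\pi/(T+2))$. Setting $u := \pi/(T+2)$, a short calculation gives $g''(T) = -(u^2/\pi^2)\,[u^2\cos u + 2u \sin u]$, which is strictly negative for $u \in (0,\pi/2]$, i.e.\ for all $T \geq 0$; hence $g$ is concave on the relevant domain. Jensen's inequality then produces $\sum_T p_T\,g(T) \leq g\!\left(\sum_T T\,p_T\right) = \cos(\pi/(\langle N\rangle + 2))$, which is the desired bound. The asymptotic form $\cos(\pi/(\langle N\rangle + 2)) \approx 1 - \pi^2/(2\langle N\rangle^2)$ is then immediate by Taylor expansion. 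The main subtle point is the sector decoupling in the first step — one must check that no term of the sum couples amplitudes across different total-$T$ sectors; once that is verified, the rest is a standard concavity argument leveraging the stronger theorem already proven.
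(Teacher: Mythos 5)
Your proof is correct and follows essentially the same route as the paper's: decompose the QFI ratio over total-photon-number sectors (each summand couples only amplitudes with the same $n+m-1$), bound each sector's contribution by $p_T\cos\!\left(\pi/(T+2)\right)$, and finish with Jensen's inequality on $\sum_T p_T\cos\!\left(\pi/(T+2)\right)$. As a minor remark, your explicit check that $g(T)=\cos\!\left(\pi/(T+2)\right)$ is \emph{concave} for $T\ge 0$ is actually more careful than the paper's own wording, which calls the function convex while nevertheless applying the concave form of Jensen's inequality.
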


\par
\section{Comparing various protocols}
\label{sectionIV}
In the following, we provide explicit examples of different ancilla-assisted quantum telescopy protocols and their corresponding quantum Fisher information under SSR constraints. The performance of different ancilla states is summarized in Table~\ref{tab: assisted scheme}, and compared to the numerically optimal bound shown in Fig.~\ref{fig: QFI scaling}.

\yujie 
Throughout the section, the notation $\ket{n_A, m_B}$ denotes some unit vector supported on $\mc{H}^A_n\otimes\mc H^B_m$ whose internal mode configuration will be specified within each protocol.
\blk
\begin{table*}[t]
    \centering
        \renewcommand{\arraystretch}{1.2} 
    \setlength{\tabcolsep}{4pt} 
    \begin{tabular}{lllcccc}
    \hline
    \hline
     &\multirow{2}{*}{Ancilla state: $\ket{\psi}_a$}   &\multirow{2}{*}{QFI ratio: $\mathbb{h}$} &   \multicolumn{2}{c}{Resource\quad}  & \multicolumn{2}{c}{
 Realization\quad}\\
   &   & & NL  &\quad PR & LOCC\quad & Linear optics \\
           \hline
 GJC scheme\cite{Gottesman2012} & $\frac{1}{\sqrt{2}}(\ket{0}_A\ket{1}_B+\ket{1}_A\ket{0}_B)$      &   $\frac{1}{2}$   &  \cmark  & \cmark &  \cmark  &\quad  \cmark\\
 $N$-copy SPE\cite{Czupryniak2022,Marchese2023} &   $\frac{1}{\sqrt{2^N}}(\ket{0}_A\ket{1}_B+\ket{1}_A\ket{0}_B)^{\otimes N}$      &   $1-\frac{1}{N+1}$   &  \cmark  & \cmark &  \cmark  &\quad  \xmark\\
 KLM scheme&    $\frac{1}{\sqrt{N+1}}\sum\limits_{n=0}^N\ket{1}^{\otimes n}_A\ket{0}^{\otimes N-n}_A\ket{0}^{\otimes n}_B\ket{1}^{\otimes N-n}_B$      &   $1-\frac{1}{N+1}$   &  \cmark  & \cmark &  \cmark  &\quad  \cmark\\
Optimal-KLM &     $\sum\limits_{n=0}^N\frac{\sqrt{2}\sin(\frac{n+1}{N+2}\pi)}{\sqrt{N+2}}\ket{1}^{\otimes n}_A\ket{0}^{\otimes N-n}_A\ket{0}^{\otimes n}_B\ket{1}^{\otimes N-n}_B$      &   $\approx 1-\frac{\pi^2}{N^2}$   &  \cmark  & \cmark &  \cmark  &\quad  \cmark\\
CV-teleportation\cite{wang2023astronomical,huang2024} &  $\sum\limits_{n=0}^{\infty} \frac{\sqrt{2\langle N\rangle^{n}}}{\sqrt{(2+\langle N\rangle)^{n+1}}}\ket{n}_A\ket{n}_B\ket{\alpha}_A\ket{\alpha}_B$      &   $\ge 1-\frac{1}{\epsilon\langle N\rangle}^{\dagger}$   &  \cmark  & \cmark &  \cmark  &\quad  \cmark\\
Coherent state &  $\ket{\alpha}_A\ket{\alpha}_B$      &   $\approx 1-\frac{1}{2|\alpha|^2}$   &  \xmark  & \cmark &  \xmark  &\quad  \cmark\\
Two-photon (dual-rail) state  &  $\frac{1}{\sqrt{2}}(\ket{01}_A\ket{10}_B+\ket{10}_A\ket{01}_B)$      &   $0$   &  \cmark  & \xmark &  --  &\quad  -- \\
     \hline
     \hline
     \end{tabular}
    \caption{Summary of the QFI ratio $\mathbb{h}$ computed for various ancilla states used in quantum telescopy protocols. The ancilla state is classified as a nonlocal (NL) resource, a phase-reference (PR) resource, or both. The realizations of these QFI ratios are categorized according to whether they can be achieved using local operations with classical communication (LOCC) and linear optics. Importantly, not all nonlocal ancilla states can serve as a phase reference and overcome the SSR, and some classically correlated states can act as a phase reference but cannot overcome locality constraints. Therefore, a good ancilla state should provide both NL and PR resources. $^{\dagger}$: lower bound. }
    \label{tab: assisted scheme}
\end{table*}
\begin{figure}[t]
    \centering
\includegraphics[width=0.46\textwidth]{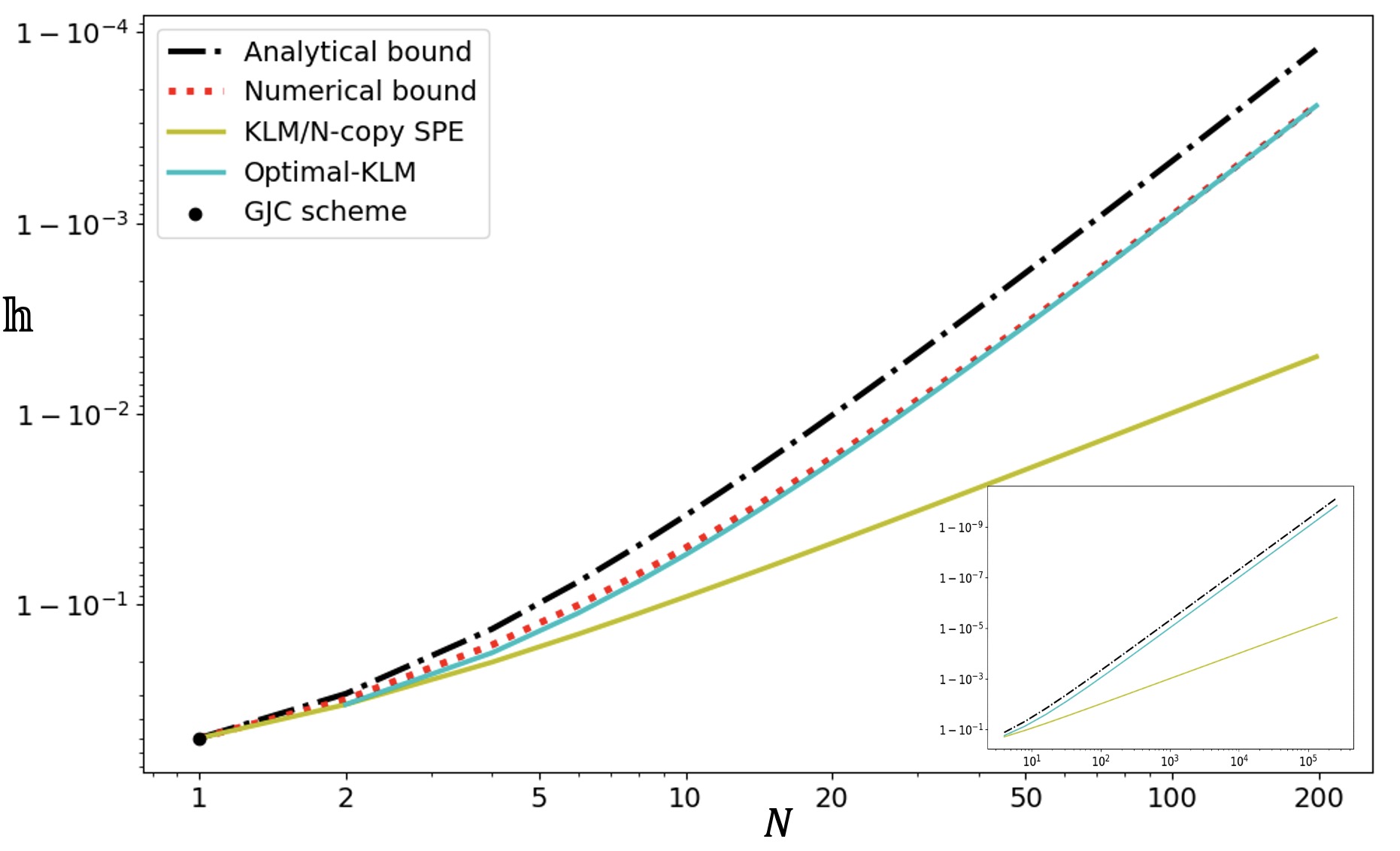}
    \caption{The Quantum Fisher information (QFI) ratio $\mathbb{h}$ as  a function of ancilla photon number $N$. Different entangled states are described in Table~\ref{tab: assisted scheme}}
    \label{fig: QFI scaling}
\end{figure}\par 

\subsection{GJC scheme} First, consider the original scheme from~\cite{Gottesman2012}, which uses the two-mode \blk single-particle entangled (SPE) ancilla state $\ket{\psi}^{\text{GJC}}_a = \frac{1}{\sqrt{2}}(\ket{0}_A\ket{1}_B + \ket{1}_A\ket{0}_B)$. Using Proposition~\ref{prop: assisted-QFI}, we can simply show:
\begin{equation}
\mathbb{h}[\ket{\psi}^{\text{GJC}}_a] = \frac{1}{2}.
\end{equation}
This QFI ratio is attainable via correlated interference measurements~\cite{Gottesman2012}, although the scheme only succeeds $\frac{1}{2}$ of the time. Importantly, this $\frac{1}{2}$ is a fundamental limit once the local SSR is correctly imposed, not an artifact of a particular protocol.

\subsection{$N$-copy SPE scheme—} A straightforward generalization of the GJC scheme considers the $N$-copy ancilla state~\cite{Czupryniak2022,Marchese2023} $(\ket{\psi}^{\text{GJC}}_a)^{\otimes N}$, which can be rewritten as:
\begin{align}
(\ket{\psi}^{\text{GJC}}_a)^{\otimes N}&= \frac{1}{\sqrt{2^N}}(\ket{0}_A\ket{1}_B + \ket{1}_A\ket{0}_B)^{\otimes N} \notag \\
&= \sum_{n=0}^{N} \sqrt{\frac{{N\choose n}}{2^N}} \ket{n_A, (N-n)_B}.
\end{align}
Note that the state $\ket{n_A,(N-n)_B}$ here represents a 2N-mode entangled states of the form:
\begin{equation}
  \ket{n_A,(N-n)_B}=\frac{1}{\sqrt{{N\choose n}}}\sum_{\pi}\pi(\ket{1}_A^{\otimes n} \ket{0}_A^{\otimes m}\ket{0}_B^{\otimes n}\ket{1}_B^{\otimes m}) 
\end{equation}
where $\pi$ represents different permutations on the $N$ local mode pairs on A and B.
\blk
Applying proposition~\ref{prop: assisted-QFI}, the QFI ratio is given by:
\begin{equation}
\mathbb{h}[(\ket{\psi}^{\text{GJC}}_a)^{\otimes N}] = \frac{N}{N+1}
\end{equation}
This matches the conjectured optimal Fisher information in~\cite{Czupryniak2022, Marchese2023}. However, as discussed in the Appendix~\ref{appendixDA}, while this QFI is theoretically achievable, whether it is realizable using linear optics alone (for general sources) remains open; existing claims implicitly assume operations that are not strictly linear-optical~\cite{Czupryniak2022, Marchese2023}.\par
\blk
\subsection{KLM-type scheme} We next introduce several quantum telescopy protocols inspired by the Knill–Laflamme–Milburn (KLM) approach to linear-optical quantum computation~\cite{Knill2001}. Additional details are provided in Appendix~\ref{appendixDB}.
 
The original KLM state used in \cite{Knill2001} is given by 
\begin{align}
    \ket{\psi}^{\text{KLM}}_a 
    &= \frac{1}{\sqrt{N+1}} \sum_{n=0}^N \ket{1}^{\otimes n}_A \ket{0}^{\otimes N-n}_A \ket{0}^{\otimes n}_B \ket{1}^{\otimes N-n}_B. \notag\\
    &=\sum_{n=0}^{N} \frac{1}{\sqrt{N+1}} \ket{n_A, (N-n)_B},  
\end{align} 
Different from the $N$-copy GJC state, here $\ket{n_A,(N-n)_B}$ represents a $2N$-mode product states. \blk Applying Proposition~\ref{prop: assisted-QFI} gives:
\begin{equation}
\mathbb{h}[\ket{\psi}_a^{\text{KLM}}]=\frac{N}{N+1},    
\end{equation}
Various modifications of KLM schemes have been discussed for various applications \cite{Franson2002}. In particular, we identify an \textit{Optimal-KLM} state below, which is asymptotically optimal for astronomical interferometric imaging:
\begin{align}
 \ket{\psi}^{\text{Opt}}_a &= \sum_{n=0}^N \frac{\sqrt{2} \sin\left(\frac{n+1}{N+2}\pi\right)}{\sqrt{N+2}} \ket{1}^{\otimes n}_A\ket{0}^{\otimes N-n}_A\ket{0}^{\otimes n}_B\ket{1}^{\otimes N-n}_B  \notag \\
    &=\sum_{n=0}^{N} \frac{\sqrt{2} \sin\left(\frac{n+1}{N+2}\pi\right)}{\sqrt{N+2}} \ket{1}^{\otimes n}_A \ket{n_A, (N-n)_B}
\end{align}
with an asymptotically optimal QFI ratio (see Fig.~\ref{fig: QFI scaling}):
\begin{equation}
\mathbb{h}[\ket{\psi}^{\text{Opt}}_a] \stackrel{\text{Large } N}{\approx} 1 - \frac{\pi^2}{N^2}.
\end{equation}

Unlike the $N$-copy SPE scheme, all KLM-type protocols are implementable with linear optics. Before proceeding, we briefly review several other relevant protocols of interest within the SSR framework.\par 
\subsection{CV protocol}
The continuous-variable (CV) teleportation protocol uses the two-mode squeezed vacuum state $\ket{\psi}^{\text{CV}}_a = \frac{1}{\cosh(r)} \sum_{n=0}^{\infty} \tanh^n r\ket{n}_A\ket{n}_B$ as the resource, 
where the squeezing parameter $r$ relates to the average photon number $\langle N\rangle = 2\sinh^2(r)$. 
\par 
At first glance, Proposition~\ref{prop: assisted-QFI} suggests $\mathbb{h}[\ket{\psi}^{\text{CV}}_a]=0$, which would appear to contradict the quantum enhancement demonstrated in \cite{wang2023astronomical,huang2024}. However, this apparent contradiction arises from a common oversight in the analysis of CV teleportation protocols: CV teleportation also requires a shared correlated coherent state $\ket{\alpha}_A\ket{\alpha}_B$ as a phase reference \cite{Furusama1998, Rudolph2001, Enk2001}. Thus, the true ancilla in the CV protocol is the combined state:
\begin{equation}
    \ket{\psi}^{\text{CV}}_a=\frac{1}{\cosh(r)}\sum\limits_{n=0}^{\infty} \tanh^n{r}\ket{n}_A\ket{n}_B\ket{\alpha}_A\ket{\alpha}_B,
\end{equation}
which, by Proposition~\ref{prop: assisted-QFI}, has nonzero QFI. In fact, as we show in Appendix~\ref{appendixDD}, we have:
\begin{equation}
\mathbb{h}[\ket{\psi}^{\text{CV}}_a] \geq \mathbb{f}[\ket{\psi}^{\text{CV}}_a] \stackrel{\text{Large }\langle N\rangle}{\approx} 1 - \frac{1}{\epsilon\langle N\rangle}.
\end{equation}
where $\mathbb{f}[\ket{\psi}^{\text{CV}}_a]$ is the Fisher information ratio defined similarly to $\mathbb{h}[\ket{\psi}^{\text{CV}}_a]$ in definition~\ref{def:QFI ratio}: 
\begin{align}
    \mathbb{f}[\rho^{AB}_a]= \min \{\frac{\mathbb{F}[\rho^{AB}_a]}{\mbb{H}_{|g|}[\rho_s^{AB}]}, \frac{\mathbb{F}[\rho^{AB}_a]}{\mbb{H}_{\theta}[\rho_s^{AB}]}\}
\end{align}
with $\mbb{F}_{|g|}$ representing the FI one can achieve using the CV protocol discussed in~\cite{wang2023astronomical}.

We note that here $\langle N\rangle$ denotes the mean photon number of the two-mode squeezed vacuum, and we assume, for simplicity, access to an unbounded correlated coherent reference, i.e., $|\alpha|^2\rightarrow \infty$. Since we are only providing an achievable lower bound on the QFI ratio here, this result does not contradict the correlated-coherent-states analysis in the following subsections. 
\blk

So far, we have discussed ancilla states for which the QFI ratio can asymptotically approach unity, since they serve as both nonlocal (NL) and phase reference (PR) resources~\cite{Barlett2006}. Next, we present two commonly confused examples of ancilla states and explain why they do not enable quantum-enhanced telescopy—namely, because they provide only a nonlocal resource or only a phase reference, but not both.

\subsection{Two-photon (dual-rail) entanglement}
Compared with single-particle entanglement, two-particle (dual-rail) entanglement is more common in quantum information\cite{Kok2007},  which can be explicitly written as a standard four-mode, two-photon entangled (dual-rail) state:
\begin{align}
    \ket{\psi}^{\text{TPE}}_a &= \frac{1}{\sqrt{2}}(\ket{1}_{A_1}\ket{0}_{A_2}\ket{0}_{B_1}\ket{1}_{B_2} + \ket{0}_{A_1}\ket{1}_{A_2}\ket{1}_{B_1}\ket{0}_{B_2})\\
    &=\ket{1_A,1_B}\in \mc{H}_1^A\otimes \mc H^B_1 \notag, 
\end{align}
which has exactly one photon at $A$ and one at $B$. By Proposition~\ref{prop: assisted-QFI}, the QFI ratio for this ancilla is zero. Intuitively, a phase reference requires `coherence' between {adjacent local-photon-number subspaces}, i.e., between $(n,m-1)$ and $(n-1,m)$ local-photon-number-subspaces, while the TPE state is confined to $(n,m)=(1,1)$. This distinction between nonlocal entanglement and a phase-reference resource has also been emphasized in the SSR literature (see Fig. 1 of Ref.~\cite{Schuch2004}).

Crucially, simply discarding a local mode does not “activate” the phase reference resource. Tracing out mode $A_2$ and $B_2$ produces an \emph{incoherent mixture} on $A_1B_1$,
$\tr_{A_2B_2}\big[\op{\psi}{\psi}|^{\text{TPE}}_a\big]
=\frac{1}{2}\big(|10\rangle\langle10|_{A_1B_1} + |01\rangle\langle01|_{A_1B_1}\big),$
not the single-particle entangled state; consequently, the QFI remains zero, agreeing with the fact that QFI is non-increasing under postprocessing. 

The same conclusion applies to other familiar ancilla states that serve purely as nonlocal resources. For example, the NOON state $\frac{1}{\sqrt{2}}(\ket{N}_A\ket{0}_B + \ket{0}_A\ket{N}_B)$ with $N\ge 2$ lacks the ability to function as phase references. Consequently, they also exhibit a vanishing QFI ratio and are unsuitable for achieving quantum-enhanced telescopy. 
\blk
\subsection{Correlated coherent state} 
Another relevant ancilla is the correlated coherent state $\ket{\psi}^{\text{Coh}}_a = \ket{\alpha}_A\ket{\alpha}_B$, which models synchronized local oscillators used in classical interferometric homodyne measurements~\cite{lawson2000}. For this state, the QFI ratio $\mathbb{h}[\ket{\psi}^{\text{Coh}}_a]$ is nonzero, indeed, as we evaluate in the Appendix~\ref{appendixDF}: 
\begin{equation}
 \mathbb{h}[\ket{\psi}^{\text{Coh}}_a]=1-\frac{1-e^{-2|\alpha|^2}}{2|\alpha|^2}\stackrel{\text{$|\alpha|^2\rightarrow \infty$}}{\approx} 1-\frac{1}{2|\alpha|^2},
\end{equation}
which approaches 1 when $|\alpha|\rightarrow \infty$. This, however, \emph{does not} mean the existence of a specific local measurement can saturate such QFI. Indeed, as we proved in the Appendix~\ref{appendixDF},  if $\rho_A^{T_A}>0$ (i.e., the state has a positive partial transpose), then for any local measurement strategy the attainable Fisher-information ratio vanishes in the weak-source limit. Therefore, our result does not contradict the fact that homodyne measurements cannot achieve quantum-enhanced telescopy~\cite{tsang2011}. 

This example, in fact,  illustrates that local measurements need not saturate the QFI in genuinely nonlocal estimation tasks~\cite{Zhou_2020}. 
Moreover, it demonstrates that a phase reference alone—such as that provided by $\ket{\alpha}_A\ket{\alpha}_B$—is insufficient; entanglement is needed to enable effectively nonlocal measurements that saturate the QFI.

\blk

\blk
\begin{figure}[t]
    \centering
\includegraphics[width=0.42\textwidth]{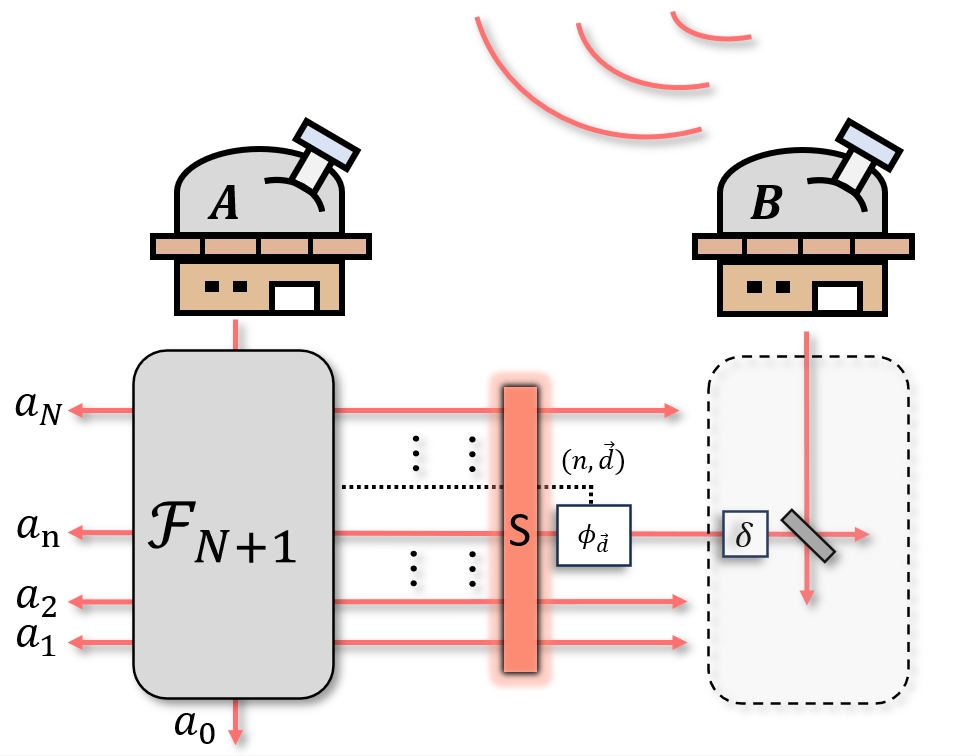}
    \caption{Schematic of near-deterministic teleportation: (1) Quantum Fourier transformation $\mc{F}_{N+1}$ is implemented at telescope A; (2) Measurement outcome $n$: number of photons detected, and $\vec{d}$: arrangement of detection are sent to telescope B for phase correction; (3) A scanning interferometric measurement is performed at telescope B. } 
    \label{fig:teleportation}
\end{figure}

\section{Towards repeater-based linear-optical implementation}
\label{sectionV}
We outline a linear-optical implementation of our protocol, with details provided in Appendix~\ref{appendixE} and illustrated in Fig.~\ref{fig:teleportation}. This approach assumes ideal (lossless) ancilla distribution, e.g., via a repeater network. It also differs from traditional passive schemes in that classical communication is permitted in the scheme below~\cite{Gottesman2012, Marchese2023}.
\begin{enumerate}
    \item A $2N$-mode bipartite ancilla state is shared between telescopes:
    \label{eq: tel-state}
$$\ket{\psi}_a=\sum_{n=0}^Nf_{n}\ket{1}^{\otimes n}_A\ket{0}^{\otimes N-n}_A\ket{0}^{\otimes n}_B\ket{1}^{\otimes N-n}_B $$
\item A quantum Fourier transform is performed on the $N+1$ modes at telescope A, followed by a photon-number-resolving measurement.

    \item The outcome (photon number $n$, configuration $\vec{d}$) is sent to B, which applies a phase shift $\phi_{\vec{d}} = \frac{2\pi}{N+1}\sum_i d_i$ to the $n$-th ancilla mode, where $\vec d$ encodes the detected output ports after the QFT (one entry per detected photon).
\item Telescope B combines the source mode with the 
$n$-th ancilla mode using a tunable beam splitter for interference measurement.
\end{enumerate}
Unlike the passive schemes studied in prior work \cite{Gottesman2012, Marchese2023}, this protocol requires a local delay line at telescope B to account for the latency associated with classical communication. Notably, recent advances, such as broadband all-optical quantum memory \cite{Nathan2024}, can provide robust optical delays of up to approximately 100 $\mu$s in a laboratory environment. These technological developments now make it feasible to implement quantum telescopy protocols involving a few rounds of local operations and classical communication. 

\section{Quantum Repeater Networks and Noisy Quantum Telescopy}
The advent of quantum repeaters has made the lossless distribution of entangled states across distant nodes in a quantum network an increasingly realistic prospect~\cite{Azuma2023, li2024generalized, Miguel2023}. As in the original quantum telescopy proposal~\cite{Gottesman2012}, quantum repeaters address two main challenges in long-baseline optical interferometry: phase noise from path fluctuations can be suppressed by active stabilization or entanglement distillation, and photon loss can be reduced with repeater protocols. These advances enable robust and scalable entanglement distribution, supporting reliable quantum telecopy protocols.

While future quantum networks can achieve lossless entanglement distribution, it remains crucial to study noisy, repeaterless protocols for near-term quantum networks, where transmission loss is significant. In this regime, the original Gottesman scheme offers no advantage over direct transmission, since its single-photon ancilla is equally vulnerable to loss. However, more complex, multiphoton entanglement-assisted protocols may be more robust, as shown in \cite{Marchese2023}.
\par 
Determining optimal ancilla states and measurements for noisy, repeaterless networks remains an open problem. Our formalism provides an upper bound on the achievable quantum Fisher information (QFI) for any quantum telescopy protocol, given the average surviving ancilla-photon number $\langle N \rangle$ (see Corollary~\ref{lem:QIF-upper}). For $\langle N \rangle \ll 1$,  the bound reduces to
$\mathbb{h}[\ket{\psi}^{\text{CV}}_a]\stackrel{\text{$\langle N\rangle \ll 1$}}{\approx} \frac{\pi}{4}\langle N \rangle $ \blk
A key experimental goal is to maximize the number of ancilla photons that survive the lossy channel, although this bound may not always be attainable. Once a high-QFI ancilla state is found, developing practical measurement schemes to approach this bound becomes essential.
\par 
\section{Conclusion} 
We revisited entanglement-assisted telescopy protocols by taking the superselection rule (SSR) into account, which, unlike locality constraints, has been broadly overlooked but is crucial for understanding and characterizing quantum telescopy protocols. As a result, any ancilla state must serve as both a phase reference and a nonlocal resource to overcome these limitations.

\par In this work, we also proposed new entanglement-assisted protocols that asymptotically achieve optimal quantum telescopy. These protocols are compatible with linear-optical implementations and are closely connected to linear-optical quantum computation (LOQC) protocols. Because of this intimate connection, recent graph-state–based LOQC advances offer an opportunity to inspire and develop new, noise-resilient quantum-telescopy schemes, especially once a global quantum network is in place~\cite{rudolph2017}.

Several open questions remain. In particular, it is crucial to investigate the design of practical, quantum-enhanced telescopy schemes for near-term networks, particularly when faced with real-world imperfections such as ground-to-space transmission losses and mode distinguishability errors. The framework presented here allows us to identify which ancilla resources are essential for such protocols by evaluating their quantum Fisher information, thereby guiding numerical searches for optimal ancilla states in lossy settings. Extending this approach to multi-telescope configurations -- each likewise subject to losses and imperfections -- poses another major challenge and is left for future work. Addressing these issues will be crucial for realizing near-term quantum telescopy in repeaterless networks.\par

\textit{Acknowledgments -- } 
The authors thank Virginia Lorenz, Yunkai Wang, Eric Chitambar, Raymond Laflamme, Pieter Kok and Robert Spekkens for helpful discussions. 
The research has been conducted at the Institute for
Quantum Computing, at the University of Waterloo,
which is supported by Innovation, Science, and Economic Development Canada. This research was supported with funds in part by NSERC Alliance Hyperspace, NSERC Discovery, and the Canada Excellence Research Chair (CERC) program.

\yujie
\section*{Note added}
After submitting our paper, an experimental paper appeared that demonstrated memory-assisted quantum telescopy~\cite{stas2025}, which is highly relevant to confirming our approach. This work shows that two-particle entanglement (TPE) and a phase-reference (PR) resource used \emph{together}, underpin quantum-memory–based protocols for quantum telescopy studied theoretically in~\cite{Khabiboulline2019, Khabiboulline2019b}.  

Concretely, in this scheme, a two-photon entangled state is first used to generate a remote entangled pair of quantum memories. Once the astronomical light interacts locally with the two memories, effectively, the first-order term of the source, Eq.~\eqref{eq: rho_1}, is mapped onto the memory degrees of freedom. Then, to complete the photon state storage to read out the information on the complex visibility $g$, a `$X$-basis measurements' is performed to erase the \emph{which-path} information~\cite{Khabiboulline2019, Khabiboulline2019b}. In practice, this is implemented using phase-locked, correlated coherent states (see Fig.~1e of~\cite{stas2025}). In our framework, these correlated coherent states supply the phase reference resource, while the entangled memories supply the required nonlocal resource. 

Thus, as exemplified by the recent experiment~\cite{stas2025}, even in the presence of quantum memories, achieving quantum-enhanced telescopy is constrained by local SSR and requires \emph{both} ingredients: (i) a phase reference resource and (ii) a nonlocal (entangled) resource.
\blk
\bibliography{ref}

@article{Czupryniak2023,
  title = {Optimal qubit circuits for quantum-enhanced telescopes},
  author = {Czupryniak, Robert and Steinmetz, John and Kwiat, Paul G. and Jordan, Andrew N.},
  journal = {Phys. Rev. A},
  volume = {108},
  issue = {5},
  pages = {052408},
  numpages = {13},
  year = {2023},
  month = {Nov},
  publisher = {American Physical Society},
  doi = {10.1103/PhysRevA.108.052408},
  url = {https://link.aps.org/doi/10.1103/PhysRevA.108.052408}
}

@article{wang2023astronomical,
  title = {Astronomical interferometry using continuous variable quantum teleportation},
  author = {Wang, Yunkai and Zhang, Yujie and Lorenz, Virginia O.},
  journal = {Phys. Rev. Res.},
  volume = {7},
  issue = {2},
  pages = {023154},
  numpages = {15},
  year = {2025},
  month = {May},
  publisher = {American Physical Society},
  doi = {10.1103/PhysRevResearch.7.023154},
  url = {https://link.aps.org/doi/10.1103/PhysRevResearch.7.023154}
}

@article{Matthew2023,
  title = {Interferometric Imaging Using Shared Quantum Entanglement},
  author = {Brown, Matthew R. and Allgaier, Markus and Thiel, Val\'erian and Monnier, John D. and Raymer, Michael G. and Smith, Brian J.},
  journal = {Phys. Rev. Lett.},
  volume = {131},
  issue = {21},
  pages = {210801},
  numpages = {6},
  year = {2023},
  month = {Nov},
  publisher = {American Physical Society},
  doi = {10.1103/PhysRevLett.131.210801},
  url = {https://link.aps.org/doi/10.1103/PhysRevLett.131.210801}
}

@inproceedings{Diaz21,
author = {David Diaz and Yujie Zhang and Virginia O. Lorenz and Paul G. Kwiat},
booktitle = {Frontiers in Optics $+$ Laser Science 2021},
journal = {Frontiers in Optics $+$ Laser Science 2021},
keywords = {Large telescopes; Light sources; Phase shift; Polarization maintaining fibers; Quantum light sources; Spatial resolution},
pages = {FTh6D.7},
publisher = {Optica Publishing Group},
title = {Emulating Quantum-enhanced Long-Baseline Interferometric Telescopy},
year = {2021},
url = {https://opg.optica.org/abstract.cfm?URI=FiO-2021-FTh6D.7},
doi = {10.1364/FIO.2021.FTh6D.7},
abstract = {We demonstrate the underlying mechanism for quantum-enhanced telescopy, using multiple interconnected Hong-Ou-Mandel interferometers to recover the visibility amplitude and relative phase of the source light into multiple simulated telescopes.},
}

@article{Marchese2023,
  title = {Large Baseline Optical Imaging Assisted by Single Photons and Linear Quantum Optics},
  author = {Marchese, Marta Maria and Kok, Pieter},
  journal = {Phys. Rev. Lett.},
  volume = {130},
  issue = {16},
  pages = {160801},
  numpages = {6},
  year = {2023},
  month = {Apr},
  publisher = {American Physical Society},
  doi = {10.1103/PhysRevLett.130.160801},
  url = {https://link.aps.org/doi/10.1103/PhysRevLett.130.160801}
}

@article{Czupryniak2022,
  title = {Quantum telescopy clock games},
  author = {Czupryniak, Robert and Chitambar, Eric and Steinmetz, John and Jordan, Andrew N.},
  journal = {Phys. Rev. A},
  volume = {106},
  issue = {3},
  pages = {032424},
  numpages = {16},
  year = {2022},
  month = {Sep},
  publisher = {American Physical Society},
  doi = {10.1103/PhysRevA.106.032424},
  url = {https://link.aps.org/doi/10.1103/PhysRevA.106.032424}
}

@article{rudolph2017,
  title={Why I am optimistic about the silicon-photonic route to quantum computing},
  author={Rudolph, Terry},
  journal={APL photonics},
  volume={2},
  number={3},
  year={2017},
  publisher={AIP Publishing}
}

@article{Miguel2023,
  title = {Quantum Repeater for $W$ States},
  author = {Miguel-Ramiro, Jorge and Riera-S\`abat, Ferran and D\"ur, Wolfgang},
  journal = {PRX Quantum},
  volume = {4},
  issue = {4},
  pages = {040323},
  numpages = {13},
  year = {2023},
  month = {Nov},
  publisher = {American Physical Society},
  doi = {10.1103/PRXQuantum.4.040323},
  url = {https://link.aps.org/doi/10.1103/PRXQuantum.4.040323}
}

@article{Gottesman2012,
  title = {Longer-Baseline Telescopes Using Quantum Repeaters},
  author = {Gottesman, Daniel and Jennewein, Thomas and Croke, Sarah},
  journal = {Phys. Rev. Lett.},
  volume = {109},
  issue = {7},
  pages = {070503},
  numpages = {5},
  year = {2012},
  month = {Aug},
  publisher = {American Physical Society},
  doi = {10.1103/PhysRevLett.109.070503},
  url = {https://link.aps.org/doi/10.1103/PhysRevLett.109.070503}
}

@article{li2024generalized,
  title={Generalized quantum repeater graph states},
  author={Li, Bikun and Goodenough, Kenneth and Rozpedek, Filip and Jiang, Liang},
  journal={arXiv preprint arXiv:2407.01429},
  year={2024}
}

@article{tsang2011,
  title = {Quantum Nonlocality in Weak-Thermal-Light Interferometry},
  author = {Tsang, Mankei},
  journal = {Phys. Rev. Lett.},
  volume = {107},
  issue = {27},
  pages = {270402},
  numpages = {5},
  year = {2011},
  month = {Dec},
  publisher = {American Physical Society},
  doi = {10.1103/PhysRevLett.107.270402},
  url = {https://link.aps.org/doi/10.1103/PhysRevLett.107.270402}
}

@Article{Knill2001,
author={Knill, E.
and Laflamme, R.
and Milburn, G. J.},
title={A scheme for efficient quantum computation with linear optics},
journal={Nature},
year={2001},
month={Jan},
day={01},
volume={409},
number={6816},
pages={46-52},
doi={10.1038/35051009},
url={https://doi.org/10.1038/35051009}
}

@article{Franson2002,
  title = {High-Fidelity Quantum Logic Operations Using Linear Optical Elements},
  author = {Franson, J. D. and Donegan, M. M. and Fitch, M. J. and Jacobs, B. C. and Pittman, T. B.},
  journal = {Phys. Rev. Lett.},
  volume = {89},
  issue = {13},
  pages = {137901},
  numpages = {4},
  year = {2002},
  month = {Sep},
  publisher = {American Physical Society},
  doi = {10.1103/PhysRevLett.89.137901},
  url = {https://link.aps.org/doi/10.1103/PhysRevLett.89.137901}
}

@article{Kok2007,
  title = {Linear optical quantum computing with photonic qubits},
  author = {Kok, Pieter and Munro, W. J. and Nemoto, Kae and Ralph, T. C. and Dowling, Jonathan P. and Milburn, G. J.},
  journal = {Rev. Mod. Phys.},
  volume = {79},
  issue = {1},
  pages = {135--174},
  numpages = {0},
  year = {2007},
  month = {Jan},
  publisher = {American Physical Society},
  doi = {10.1103/RevModPhys.79.135},
  url = {https://link.aps.org/doi/10.1103/RevModPhys.79.135}
}

@inproceedings{Mandel1995,
  title={Optical Coherence and Quantum Optics},
  author={Leonard Mandel and Emil Wolf},
  year={1995},
  url={https://api.semanticscholar.org/CorpusID:120605878}
}

@article{Monnier2003,
doi = {10.1088/0034-4885/66/5/203},
url = {https://dx.doi.org/10.1088/0034-4885/66/5/203},
year = {2003},
month = {apr},
publisher = {},
volume = {66},
number = {5},
pages = {789},
author = {John D Monnier},
title = {Optical interferometry in astronomy},
journal = {Reports on Progress in Physics}
}

@article{Khabiboulline2019b,
  title = {Quantum-assisted telescope arrays},
  author = {Khabiboulline, E. T. and Borregaard, J. and De Greve, K. and Lukin, M. D.},
  journal = {Phys. Rev. A},
  volume = {100},
  issue = {2},
  pages = {022316},
  numpages = {9},
  year = {2019},
  month = {Aug},
  publisher = {American Physical Society},
  doi = {10.1103/PhysRevA.100.022316},
  url = {https://link.aps.org/doi/10.1103/PhysRevA.100.022316}
}

@misc{stas2025,
      title={Entanglement Assisted Non-local Optical Interferometry in a Quantum Network}, 
      author={P. -J. Stas and Y. -C. Wei and M. Sirotin and Y. Q. Huan and U. Yazlar and F. Abdo Arias and E. Knyazev and G. Baranes and B. Machielse and S. Grandi and D. Riedel and J. Borregaard and H. Park and M. Lončar and A. Suleymanzade and M. D. Lukin},
      year={2025},
      eprint={2509.09464},
      archivePrefix={arXiv},
      primaryClass={quant-ph},
      url={https://arxiv.org/abs/2509.09464}, 
}

@article{Zernike1938,
title = {The concept of degree of coherence and its application to optical problems},
journal = {Physica},
volume = {5},
number = {8},
pages = {785-795},
year = {1938},
issn = {0031-8914},
doi = {https://doi.org/10.1016/S0031-8914(38)80203-2},
url = {https://www.sciencedirect.com/science/article/pii/S0031891438802032},
author = {F. Zernike}
}

@article{Braunstein1994,
  title = {Statistical distance and the geometry of quantum states},
  author = {Braunstein, Samuel L. and Caves, Carlton M.},
  journal = {Phys. Rev. Lett.},
  volume = {72},
  issue = {22},
  pages = {3439--3443},
  numpages = {0},
  year = {1994},
  month = {May},
  publisher = {American Physical Society},
  doi = {10.1103/PhysRevLett.72.3439},
  url = {https://link.aps.org/doi/10.1103/PhysRevLett.72.3439}
}

@article{Barlett2007,
  title = {Reference frames, superselection rules, and quantum information},
  author = {Bartlett, Stephen D. and Rudolph, Terry and Spekkens, Robert W.},
  journal = {Rev. Mod. Phys.},
  volume = {79},
  issue = {2},
  pages = {555--609},
  numpages = {0},
  year = {2007},
  month = {Apr},
  publisher = {American Physical Society},
  doi = {10.1103/RevModPhys.79.555},
  url = {https://link.aps.org/doi/10.1103/RevModPhys.79.555}
}

@article{Barlett2006,
author = {BARTLETT, STEPHEN D. and RUDOLPH, TERRY and SPEKKENS, ROBERT W.},
title = {DIALOGUE CONCERNING TWO VIEWS ON QUANTUM COHERENCE: FACTIST AND FICTIONIST},
journal = {International Journal of Quantum Information},
volume = {04},
number = {01},
pages = {17-43},
year = {2006},
doi = {10.1142/S0219749906001591},
URL = {https://doi.org/10.1142/S0219749906001591},
eprint = { https://doi.org/10.1142/S0219749906001591},
}

@article{Verstraete2003,
  title = {Quantum Nonlocality in the Presence of Superselection Rules and Data Hiding Protocols},
  author = {Verstraete, F. and Cirac, J. I.},
  journal = {Phys. Rev. Lett.},
  volume = {91},
  issue = {1},
  pages = {010404},
  numpages = {4},
  year = {2003},
  month = {Jul},
  publisher = {American Physical Society},
  doi = {10.1103/PhysRevLett.91.010404},
  url = {https://link.aps.org/doi/10.1103/PhysRevLett.91.010404}
}

@article{Schuch2004,
  title = {Nonlocal Resources in the Presence of Superselection Rules},
  author = {Schuch, N. and Verstraete, F. and Cirac, J. I.},
  journal = {Phys. Rev. Lett.},
  volume = {92},
  issue = {8},
  pages = {087904},
  numpages = {4},
  year = {2004},
  month = {Feb},
  publisher = {American Physical Society},
  doi = {10.1103/PhysRevLett.92.087904},
  url = {https://link.aps.org/doi/10.1103/PhysRevLett.92.087904}
}

@article{Bartlett2003,
  title = {Entanglement Constrained by Superselection Rules},
  author = {Bartlett, Stephen D. and Wiseman, H. M.},
  journal = {Phys. Rev. Lett.},
  volume = {91},
  issue = {9},
  pages = {097903},
  numpages = {4},
  year = {2003},
  month = {Aug},
  publisher = {American Physical Society},
  doi = {10.1103/PhysRevLett.91.097903},
  url = {https://link.aps.org/doi/10.1103/PhysRevLett.91.097903}
}

@Article{Brown1956,
author={BROWN, R. HANBURY
and TWISS, R. Q.},
title={Correlation between Photons in two Coherent Beams of Light},
journal={Nature},
year={1956},
month={Jan},
day={01},
volume={177},
number={4497},
pages={27-29},
abstract={Classical interferometry works by detecting correlations in the phases of two waves. In Nature in 1956, R. Hanbury-Brown and R. Q. Twiss demonstrated another technique that probes quantum-mechanical correlations in the electromagnetic field. Splitting an incoherent light beam, they found that photon detections in the two daughter beams were correlated: the photons were bunching together. This corresponds to a correlation in the intensity of light in the two beams, which Hanbury-Brown and Twiss suggested could be used to infer the angular size of distant stars. Physicists now rely on the effect to probe the quantum character of complex light sources. [Obituary of Robert Hanbury Brown: Nature 416, 34 (2002)]},
issn={1476-4687},
doi={10.1038/177027a0},
url={https://doi.org/10.1038/177027a0}
}

@misc{huang2024,
      title={Limited quantum advantage for stellar interferometry via continuous-variable teleportation}, 
      author={Zixin Huang and Ben Q. Baragiola and Nicolas C. Menicucci and Mark M. Wilde},
      year={2024},
      eprint={2311.05159},
      archivePrefix={arXiv},
      primaryClass={quant-ph}
}

@article{Rudolph2001,
  title = {Requirement of Optical Coherence for Continuous-Variable Quantum Teleportation},
  author = {Rudolph, Terry and Sanders, Barry C.},
  journal = {Phys. Rev. Lett.},
  volume = {87},
  issue = {7},
  pages = {077903},
  numpages = {4},
  year = {2001},
  month = {Jul},
  publisher = {American Physical Society},
  doi = {10.1103/PhysRevLett.87.077903},
  url = {https://link.aps.org/doi/10.1103/PhysRevLett.87.077903}
}

@article{Enk2001,
  title = {Quantum State of an Ideal Propagating Laser Field},
  author = {van Enk, S. J. and Fuchs, Christopher A.},
  journal = {Phys. Rev. Lett.},
  volume = {88},
  issue = {2},
  pages = {027902},
  numpages = {4},
  year = {2001},
  month = {Dec},
  publisher = {American Physical Society},
  doi = {10.1103/PhysRevLett.88.027902},
  url = {https://link.aps.org/doi/10.1103/PhysRevLett.88.027902}
}

@misc{purvis2024,
      title={Practical Approach to Extending Baselines of Telescopes using Continuous-Variable Quantum Information}, 
      author={Bran Purvis and Randy Lafler and R. Nicholas Lanning},
      year={2024},
      eprint={2403.03491},
      archivePrefix={arXiv},
      primaryClass={quant-ph}
}

@article{Enk2005,
  title = {Quantifying the resource of sharing a reference frame},
  author = {van Enk, S. J.},
  journal = {Phys. Rev. A},
  volume = {71},
  issue = {3},
  pages = {032339},
  numpages = {8},
  year = {2005},
  month = {Mar},
  publisher = {American Physical Society},
  doi = {10.1103/PhysRevA.71.032339},
  url = {https://link.aps.org/doi/10.1103/PhysRevA.71.032339}
}

@article{Bartlett2006a,
  title = {Entanglement under restricted operations: Analogy to mixed-state entanglement},
  author = {Bartlett, Stephen D. and Doherty, Andrew C. and Spekkens, Robert W. and Wiseman, H. M.},
  journal = {Phys. Rev. A},
  volume = {73},
  issue = {2},
  pages = {022311},
  numpages = {9},
  year = {2006},
  month = {Feb},
  publisher = {American Physical Society},
  doi = {10.1103/PhysRevA.73.022311},
  url = {https://link.aps.org/doi/10.1103/PhysRevA.73.022311}
}

@article{Furusama1998,
author = {A. Furusawa  and J. L. Sørensen  and S. L. Braunstein  and C. A. Fuchs  and H. J. Kimble  and E. S. Polzik },
title = {Unconditional Quantum Teleportation},
journal = {Science},
volume = {282},
number = {5389},
pages = {706-709},
year = {1998},
doi = {10.1126/science.282.5389.706},
URL = {https://www.science.org/doi/abs/10.1126/science.282.5389.706},
eprint = {https://www.science.org/doi/pdf/10.1126/science.282.5389.706},
abstract = {Quantum teleportation of optical coherent states was demonstrated experimentally using squeezed-state entanglement. The quantum nature of the achieved teleportation was verified by the experimentally determined fidelity F
exp = 0.58 ± 0.02, which describes the match between input and output states. A fidelity greater than 0.5 is not possible for coherent states without the use of entanglement. This is the first realization of unconditional quantum teleportation where every state entering the device is actually teleported.}}

@article{wiseman1997,
  title = {Adaptive single-shot phase measurements: A semiclassical approach},
  author = {Wiseman, H. M. and Killip, R. B.},
  journal = {Phys. Rev. A},
  volume = {56},
  issue = {1},
  pages = {944--957},
  numpages = {0},
  year = {1997},
  month = {Jul},
  publisher = {American Physical Society},
  doi = {10.1103/PhysRevA.56.944},
  url = {https://link.aps.org/doi/10.1103/PhysRevA.56.944}
}

@article{Bartlett2006b,
doi = {10.1088/1367-2630/8/4/058},
url = {https://dx.doi.org/10.1088/1367-2630/8/4/058},
year = {2006},
month = {apr},
publisher = {},
volume = {8},
number = {4},
pages = {58},
author = {Stephen D Bartlett and Terry Rudolph and Robert W Spekkens and Peter S Turner},
title = {Degradation of a quantum reference frame},
journal = {New Journal of Physics}
}

@article{Ticky2010,
  title = {Zero-Transmission Law for Multiport Beam Splitters},
  author = {Tichy, Malte Christopher and Tiersch, Markus and de Melo, Fernando and Mintert, Florian and Buchleitner, Andreas},
  journal = {Phys. Rev. Lett.},
  volume = {104},
  issue = {22},
  pages = {220405},
  numpages = {4},
  year = {2010},
  month = {Jun},
  publisher = {American Physical Society},
  doi = {10.1103/PhysRevLett.104.220405},
  url = {https://link.aps.org/doi/10.1103/PhysRevLett.104.220405}
}

@article{Gour2009,
  title = {Measuring the quality of a quantum reference frame: The relative entropy of frameness},
  author = {Gour, Gilad and Marvian, Iman and Spekkens, Robert W.},
  journal = {Phys. Rev. A},
  volume = {80},
  issue = {1},
  pages = {012307},
  numpages = {11},
  year = {2009},
  month = {Jul},
  publisher = {American Physical Society},
  doi = {10.1103/PhysRevA.80.012307},
  url = {https://link.aps.org/doi/10.1103/PhysRevA.80.012307}
}

@phdthesis{marvian2012,
author={{Marvian Mashhad, Iman}},
title={Symmetry, Asymmetry and Quantum Information},
year={2012},
publisher="UWSpace",
url={http://hdl.handle.net/10012/7088}
}

@article{Weedbrook2012,
  title = {Gaussian quantum information},
  author = {Weedbrook, Christian and Pirandola, Stefano and Garc\'{\i}a-Patr\'on, Ra\'ul and Cerf, Nicolas J. and Ralph, Timothy C. and Shapiro, Jeffrey H. and Lloyd, Seth},
  journal = {Rev. Mod. Phys.},
  volume = {84},
  issue = {2},
  pages = {621--669},
  numpages = {0},
  year = {2012},
  month = {May},
  publisher = {American Physical Society},
  doi = {10.1103/RevModPhys.84.621},
  url = {https://link.aps.org/doi/10.1103/RevModPhys.84.621}
}

@article{Braunstein2005,
  title = {Quantum information with continuous variables},
  author = {Braunstein, Samuel L. and van Loock, Peter},
  journal = {Rev. Mod. Phys.},
  volume = {77},
  issue = {2},
  pages = {513--577},
  numpages = {0},
  year = {2005},
  month = {Jun},
  publisher = {American Physical Society},
  doi = {10.1103/RevModPhys.77.513},
  url = {https://link.aps.org/doi/10.1103/RevModPhys.77.513}
}

@article{Braunstein1998,
  title = {Teleportation of Continuous Quantum Variables},
  author = {Braunstein, Samuel L. and Kimble, H. J.},
  journal = {Phys. Rev. Lett.},
  volume = {80},
  issue = {4},
  pages = {869--872},
  numpages = {0},
  year = {1998},
  month = {Jan},
  publisher = {American Physical Society},
  doi = {10.1103/PhysRevLett.80.869},
  url = {https://link.aps.org/doi/10.1103/PhysRevLett.80.869}
}

@article{Vaidman1994,
  title = {Teleportation of quantum states},
  author = {Vaidman, Lev},
  journal = {Phys. Rev. A},
  volume = {49},
  issue = {2},
  pages = {1473--1476},
  numpages = {0},
  year = {1994},
  month = {Feb},
  publisher = {American Physical Society},
  doi = {10.1103/PhysRevA.49.1473},
  url = {https://link.aps.org/doi/10.1103/PhysRevA.49.1473}
}

@article{Cittert1934,
title = {Die Wahrscheinliche Schwingungsverteilung in Einer von Einer Lichtquelle Direkt Oder Mittels Einer Linse Beleuchteten Ebene},
journal = {Physica},
volume = {1},
number = {1},
pages = {201-210},
year = {1934},
issn = {0031-8914},
doi = {https://doi.org/10.1016/S0031-8914(34)90026-4},
url = {https://www.sciencedirect.com/science/article/pii/S0031891434900264},
author = {P.H {van Cittert}}
}

@book{thompson2017,
  title={Interferometry and synthesis in radio astronomy},
  author={Thompson, A Richard and Moran, James M and Swenson, George W},
  year={2017},
  publisher={Springer Nature}
}

@article{Akiyama2019a,
doi = {10.3847/2041-8213/ab0ec7},
url = {https://dx.doi.org/10.3847/2041-8213/ab0ec7},
year = {2019},
month = {apr},
publisher = {The American Astronomical Society},
volume = {875},
number = {1},
pages = {L1},
author = {The Event Horizon Telescope Collaboration},
title = {First M87 Event Horizon Telescope Results. I. The Shadow of the Supermassive Black Hole},
journal = {The Astrophysical Journal Letters}
}

@article{shao1992,
   author = "Shao, M. and Colavita, M. M.",
   title = "Long-Baseline Optical and Infrared Stellar Interferometry", 
   journal= "Annual Review of Astronomy and Astrophysics",
   year = "1992",
   volume = "30",
   number = "Volume 30, 1992",
   pages = "457-498",
   doi = "https://doi.org/10.1146/annurev.aa.30.090192.002325",
   url = "https://www.annualreviews.org/content/journals/10.1146/annurev.aa.30.090192.002325",
   publisher = "Annual Reviews",
   issn = "1545-4282",
   type = "Journal Article",
   keywords = "high resolution imaging",
   keywords = "adaptive optics",
   keywords = "astrometry",
  }

@book{lawson2000,
  title={Principles of long baseline stellar interferometry},
  author={Lawson, Peter},
  year={2000},
  publisher={JPL}
}

@article{Awschalom2021,
  title = {Development of Quantum Interconnects (QuICs) for Next-Generation Information Technologies},
  author = {Awschalom, et al.},
  journal = {PRX Quantum},
  volume = {2},
  issue = {1},
  pages = {017002},
  numpages = {21},
  year = {2021},
  month = {Feb},
  publisher = {American Physical Society},
  doi = {10.1103/PRXQuantum.2.017002},
  url = {https://link.aps.org/doi/10.1103/PRXQuantum.2.017002}
}

@Article{Simon2017,
author={Simon, Christoph},
title={Towards a global quantum network},
journal={Nature Photonics},
year={2017},
month={Nov},
day={01},
volume={11},
number={11},
pages={678-680},
abstract={The creation of a global quantum network is now a realistic proposition thanks to developments in satellite and fibre links and quantum memory. Applications will range from secure communication and fundamental physics experiments to a future quantum internet.},
issn={1749-4893},
doi={10.1038/s41566-017-0032-0},
url={https://doi.org/10.1038/s41566-017-0032-0}
}

@article{Khabiboulline2019,
  title = {Optical Interferometry with Quantum Networks},
  author = {Khabiboulline, E. T. and Borregaard, J. and De Greve, K. and Lukin, M. D.},
  journal = {Phys. Rev. Lett.},
  volume = {123},
  issue = {7},
  pages = {070504},
  numpages = {7},
  year = {2019},
  month = {Aug},
  publisher = {American Physical Society},
  doi = {10.1103/PhysRevLett.123.070504},
  url = {https://link.aps.org/doi/10.1103/PhysRevLett.123.070504}
}

@article{Huang2022,
  title = {Imaging Stars with Quantum Error Correction},
  author = {Huang, Zixin and Brennen, Gavin K. and Ouyang, Yingkai},
  journal = {Phys. Rev. Lett.},
  volume = {129},
  issue = {21},
  pages = {210502},
  numpages = {6},
  year = {2022},
  month = {Nov},
  publisher = {American Physical Society},
  doi = {10.1103/PhysRevLett.129.210502},
  url = {https://link.aps.org/doi/10.1103/PhysRevLett.129.210502}
}

@article{Zhou_2020,
doi = {10.1088/2058-9565/ab71f8},
url = {https://dx.doi.org/10.1088/2058-9565/ab71f8},
year = {2020},
month = {mar},
publisher = {IOP Publishing},
volume = {5},
number = {2},
pages = {025005},
author = {Sisi Zhou and Chang-Ling Zou and Liang Jiang},
title = {Saturating the quantum Cramér–Rao bound using LOCC},
journal = {Quantum Science and Technology},
abstract = {The quantum Cramér–Rao bound (QCRB) provides an ultimate precision limit allowed by quantum mechanics in parameter estimation. Given any quantum state dependent on a single parameter, there is always a positive-operator valued measurement (POVM) saturating the QCRB. However, the QCRB-saturating POVM cannot always be implemented efficiently, especially in multipartite systems. In this paper, we show that the POVM based on local operations and classical communication is QCRB-saturating for arbitrary pure states or rank-two mixed states with varying probability distributions over fixed eigenbasis. Local measurements without classical communication, however, is not QCRB-saturating in general.}
}

@article{Azuma2023,
  title = {Quantum repeaters: From quantum networks to the quantum internet},
  author = {Azuma, Koji and Economou, Sophia E. and Elkouss, David and Hilaire, Paul and Jiang, Liang and Lo, Hoi-Kwong and Tzitrin, Ilan},
  journal = {Rev. Mod. Phys.},
  volume = {95},
  issue = {4},
  pages = {045006},
  numpages = {66},
  year = {2023},
  month = {Dec},
  publisher = {American Physical Society},
  doi = {10.1103/RevModPhys.95.045006},
  url = {https://link.aps.org/doi/10.1103/RevModPhys.95.045006}
}

@inproceedings{Nathan2024,
author = {Nathan T. Arnold and Colin P. Lualdi and Michael E. Goggin and Paul G. Kwiat},
title = {{All-optical quantum memory}},
volume = {12911},
booktitle = {Quantum Computing, Communication, and Simulation IV},
editor = {Philip R. Hemmer and Alan L. Migdall},
organization = {International Society for Optics and Photonics},
publisher = {SPIE},
pages = {129111C},
keywords = {Quantum memory, Photonic memory, Delay line, Herriott cell, Robert cell},
year = {2024},
doi = {10.1117/12.3003228},
URL = {https://doi.org/10.1117/12.3003228}
}

@misc{huang2025,
      title={Quantum-enabled optical large-baseline interferometry: applications, protocols and feasibility}, 
      author={Zixin Huang and Oleg Titov and Mikolaj K. Schmidt and Benjamin Pope and Gavin K. Brennen and Daniel Oi and Pieter Kok},
      year={2025},
      eprint={2505.04765},
      archivePrefix={arXiv},
      primaryClass={quant-ph},
      url={https://arxiv.org/abs/2505.04765}, 
}

\appendix
\setcounter{equation}{0}
\setcounter{lemma}{0}
\setcounter{proposition}{0}
\setcounter{theorem}{0}
\section{Thermal source state and optimal quantum Fisher information}
\label{appendixA}
In quantum optics, light from an astronomical object collected by two telescopes can be modeled as a two-mode quasi-monochromatic thermal source~\cite{tsang2011, Mandel1995}:
\begin{equation}
    \rho_s=\iint d^2\alpha d^2\beta\frac{\exp[-(\alpha^*, \beta^*)\Gamma^{-1}\begin{pmatrix}
\alpha \\
\beta
\end{pmatrix}]}{\pi^2\det\Gamma}\op{\alpha,\beta}{\alpha,\beta}
\end{equation}
where $\Gamma=\frac{\epsilon}{2}\begin{pmatrix}1 & g \\
g^*&  1 \end{pmatrix}$ and $g=|g|e^{i\theta}$ is the complex visibility, also known as mutual coherent function or first-order spatial coherence.
The parameter $\epsilon$, which quantifies the strength of the source, is typically small for astronomical sources at optical wavelengths~\cite{Monnier2003}. In the limit $\epsilon\ll 1$,  we can write the density operator of the state of the optical field on the image plane as
 \begin{equation}
\rho_s=(1-\epsilon)\rho_s^{(0)}+\epsilon\rho_s^{(1)}+O(\epsilon^2)\rho_s^{(>1)},
\label{eq: rho-app}
 \end{equation}
Where the leading-order term encoded with the complex visibility (coherence function) $g=|g|e^{i\theta}$ resides in the single-photon term $\rho_s^{(1)}$, which can be written in the telescope Fock basis $\{\ket{0}_A\ket{1}_B,\ket{1}_A\ket{0}_B\}$ as:
\begin{equation}\begin{aligned}\label{eq: rho_1-app}
\rho_s^{(1)}=&\frac{1}{2}[\ket{0}_A\ket{1}_B \bra{0}_A\bra{1}_B+\ket{1}_A\ket{0}_B\bra{1}_A\bra{0}_B\\
&+g\ket{0}_A\ket{1}_B\bra{1}_A\bra{0}_B+g^*\ket{1}_A\ket{0}_B\bra{0}_A\bra{1}_B] 
\end{aligned}
\end{equation}
The van Cittert-Zernike theorem \cite{Zernike1938} states that the visibility $g(\vec{u})$ (as a function of baseline $\vec{u}$) is the Fourier transform of the source distribution $I(\vec{x})$, i.e., $I(\vec{x})=\int d\vec{u} e^{-i2\pi \vec{u}\cdot\vec{x}}g(\vec{u})$ . Thus, the primary goal in astronomical interferometric imaging is to obtain an accurate estimate of the complex visibility for all baselines. 

In quantum‐estimation theory, the precision with which a parameter $\nu$ can be extracted from a quantum state $\rho_{\nu}$ is quantified by the quantum Fisher information (QFI) \cite{Braunstein1994}.  In particular, for the density operator given in Eq.~\ref{eq: rho_1-app}, the QFI is defined in terms of the symmetric logarithmic derivative (SLD) $\hat{L}_{\mu}$, where 
\begin{align}
\frac{\hat{L}_{\mu}\rho_s^{(1)}+\rho_s^{(1)}\hat{L}_{\mu}}{2}=\frac{\partial \rho_s^{(1)}}{\partial \mu}\quad\quad \mu\in\{|g|, \theta\} \notag  \\
\hat{L}_{|g|}=\frac{1}{1-|g|^2}\begin{pmatrix} -|g| &e^{i\theta}\\
e^{-i\theta}& -|g| \end{pmatrix},\hat{L}_{\theta}=|g|\begin{pmatrix} 0&ie^{i\theta}\\
-ie^{-i\theta}& 0 \end{pmatrix}
\end{align}
From these SLDs, the QFI metric of $\rho_s$ to first order in $\epsilon$ is $\hat{H}_{\nu,\mu}[\rho_s]=\epsilon\tr[\rho^{(1)}_s\frac{1}{2}\{\hat{L}_{\nu},\hat{L}_{\mu}\}]+O(\epsilon^2)$
\begin{subequations}
\begin{align}
\label{eq:telescope QFI}
&\mathbb{H}_{|g|}[\rho_s]=\frac{\epsilon}{1-|g|^2}+O(\epsilon^2) \\
&\mathbb{H}_{\theta}[\rho_s]=|g|^2\epsilon+O(\epsilon^2) \\
&\mathbb{H}_{\theta,|g|}[\rho_s]=O(\epsilon^2)
\end{align}
\end{subequations}
\blk
However, we note that this quantum Fisher information is not always attainable in practice. A well‐known obstacle is the locality constraint in quantum telescopy. As shown by Tsang \cite{tsang2011}, any protocol relying solely on local operations and classical communication (LOCC) cannot reach the optimal QFI. In this work, we focus instead on a ubiquitous constraint in quantum telescopy setups — the superselection rule (SSR). In the absence of a shared phase reference, both state preparation and measurement must respect the SSR‐imposed symmetry, as detailed below. This perspective lets us address questions that locality alone cannot resolve. For example, we compare entanglement-assisted protocols under SSRs and introduce schemes that, asymptotically, attain the optimal QFI with improved resource efficiency.

\section{Superselection rule, reference frames, and quantum estimation under SSR}
\label{appendixB}

In this section, we discuss the photon‐number superselection rule (SSR) encountered in interferometric imaging systems and its implications for the quantum estimation problem. We begin with a brief overview of reference frames and SSRs. In particular, we focus exclusively on the phase reference (PR) frame and the photon‐number $U(1)$-SSR. For a more comprehensive treatment, we refer the reader to Bartlett, Rudolph, and Spekkens \cite{Barlett2007}.
\begin{align}
    \text{no phase reference}\;\Longleftrightarrow\;\text{a $U(1)$ SSR is in force} \notag \\
    \text{supplying a phase reference}\;\Longleftrightarrow\;\text{lifting the SSR.} \notag
\end{align}

\textit{Global photon–number SSR:}
In a quantum optics experiment, the quantum states of optical modes are always defined relative to some phase references~\cite{Barlett2006}.   Consider a quantum state of $K$ different modes defined relative to a phase reference $\ket{\psi}=\sum c_{n_1,\cdots,n_K}\otimes_{i=1}^K\ket{n_i}$ with $\{\otimes_{i=1}^K\ket{n_i}_i\}$ being the Fock state basis on Hilbert space $\mc{H}$, where $\ket{n_i}_i$ represents $n_i$ photons in mode $i$.  Let 
$\hat N=\sum_{i=1}^{K}\hat N_i$ and $U(\phi)=e^{i\phi \hat N}$. Now consider another party, which has a different phase reference; the state $\ket{\psi}$ relative to the new phase reference frame will be given by the transformed state 
\begin{align}
U(\phi)\ket{\psi}=e^{i\phi\hat{N}}\ket{\psi}
\end{align}
where the unitary transformation $U(\phi)=e^{i\phi\hat{N}}$ depends on the number operator $\hat{N}$ and the angle $\phi$ between the two phase references and Alice's phase references.

Now consider the situation where the phase differences $\phi$ are entirely unknown, the lack of phase information restricts the kind of state prepared and measured, more rigorously, the state $\ket{\psi}$ can be effectively represented by a different quantum state obtained by averaging over an unknown global phase (“$U(1)$ twirling’’), which gives the global–SSR map:
\begin{align}
\label{eq:twirl-global}
\mathcal E_{\mathrm{g\mbox{-}ssr}}(\op{\phi}{\phi})
&=\int_0^{2\pi}\frac{d\phi}{2\pi}U(\phi)\op{\phi}{\phi} U^\dagger(\phi)\notag \\
&=\int_0^{2\pi}\frac{d\phi}{2\pi}\sum_{n,n'}e^{i\phi n}P_n\op{\phi}{\phi} P_{n'}e^{-i\phi n'} \notag \\
&=\sum_{n}P_n\op{\phi}{\phi} P_{n},
\end{align}
where the Hilbert space decompose as a direct sum of total–photon–number subspaces:
\begin{equation}
\label{eq:Hn-def}
\mathcal H=\bigoplus_{n=0}^{\infty}\mathcal H_n,~~
\mathcal H_n:=\mathrm{Span}\big\{\ket{n_1,\ldots,n_K}:\textstyle\sum_i n_i=n\big\},
\end{equation}
And the projector $P_{n}$ projects the state onto the entire $n$–photon subspace $\mc H_n$ across all $K$-modes (it is not a single mode projector), i.e.:
\begin{equation}
\label{eq:Pn-def}
P_n=\sum_{\sum_i n_i=n}\ket{n_1,\ldots,n_K}\bra{n_1,\ldots,n_K}.
\end{equation}

Thus, any state is operationally equivalent to its block–diagonal representative $\bigoplus_n P_n\rho P_n$.

\textit{Local photon–number SSR for two sites.} Similarly, consider a bipartite system with two spatially separated sites $A$ and $B$, for each bipartite state $\rho^{AB}$ (possibly multimode on each site), the absence of a \emph{correlated} phase reference between $A$ and $B$ implies that 
the invariant state in this case should be invariance under \emph{independent} local $U(1)$ shifts. Specifically, writing
$U_A(\phi_1)=e^{i\phi_1 \hat N_A}$ and $U_B(\phi_2)=e^{i\phi_2 \hat N_B}$ with
$\hat N_A$ ($\hat N_B$) the total local photon–number operator at $A$ ($B$). Under the local SSR, it is convenient to decompose the Hilbert space into total-local-photon-number subspaces: 
\begin{equation}
\label{eq:local-decomp}
\mathcal H^A\otimes\mathcal H^B
=\bigoplus_{n,m\ge 0}\mathcal H^A_n\otimes\mathcal H^B_m
=\bigoplus_{N=0}^{\infty}\bigoplus_{n=0}^{N}\mathcal H^A_n\otimes\mathcal H^B_{N-n}.
\end{equation}
And operationally, a bipartite state will be mapped by the local SSR twirling map:
\begin{lemma}
A bipartite state $\rho^{AB}$ under photon-number
superselection rules will be represented as~\cite{Verstraete2003}:
\begin{equation}
\mc{E}_{\text{l-ssr}}(\rho^{AB})=\sum_{n, m=0}^{\infty}(P^A_{n}\otimes P^B_{m})\rho^{AB}(P^A_{n}\otimes P^B_{m})
\end{equation}
where $P^A_n=\op{n}{n}$ represent projector onto $H^A_n$, and similarly for $P^B_n$
\label{lem:SSR}
\end{lemma}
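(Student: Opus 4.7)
The plan is to derive $\mathcal{E}_{\text{l-ssr}}$ as a Haar (uniform) twirl over \emph{independent} local $U(1)$ phases acting at the two sites, and then carry out a short Fourier-orthogonality computation. First, I would formalize the operational motivation stated just before the lemma: since $A$ and $B$ share no correlated phase reference, the physical equivalence class of $\rho^{AB}$ must be invariant under independent local phase rotations generated by the local number operators $\hat{N}_A$ and $\hat{N}_B$. Enforcing this invariance by averaging over both phases gives the natural candidate
\begin{equation}
\mathcal{E}_{\text{l-ssr}}(\rho^{AB})=\int_0^{2\pi}\!\!\int_0^{2\pi}\frac{d\phi_1\,d\phi_2}{(2\pi)^2}\,[U_A(\phi_1)\otimes U_B(\phi_2)]\,\rho^{AB}\,[U_A(\phi_1)\otimes U_B(\phi_2)]^\dagger,
\end{equation}
which is the bipartite analogue of the single-site map already computed in Eq.~\eqref{eq:twirl-global}.

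Next I would substitute the spectral decompositions $U_A(\phi_1)=\sum_n e^{i\phi_1 n}P^A_n$ and $U_B(\phi_2)=\sum_m e^{i\phi_2 m}P^B_m$, which follow from $\hat{N}_A$ and $\hat{N}_B$ having integer spectra with eigenprojectors onto $\mathcal{H}^A_n$ and $\mathcal{H}^B_m$ respectively (cf.\ Eq.~\eqref{eq:local-decomp}). Expanding the conjugation produces a quadruple sum whose integrand is
\begin{equation}
\sum_{n,n',m,m'} e^{i\phi_1(n-n')}\,e^{i\phi_2(m-m')}\,(P^A_n\otimes P^B_m)\,\rho^{AB}\,(P^A_{n'}\otimes P^B_{m'}).
\end{equation}
Using the standard orthogonality relation $\int_0^{2\pi}\tfrac{d\phi}{2\pi}e^{i\phi(k-k')}=\delta_{k,k'}$ independently in $\phi_1$ and $\phi_2$ kills every off-diagonal term $n\neq n'$ or $m\neq m'$, leaving precisely $\sum_{n,m}(P^A_n\otimes P^B_m)\rho^{AB}(P^A_n\otimes P^B_m)$, which is the claim.

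Finally, I would observe that because each $P^A_n\otimes P^B_m$ is the full projector onto $\mathcal{H}^A_n\otimes\mathcal{H}^B_m$, the result is manifestly block-diagonal with respect to the decomposition $\bigoplus_{n,m}\mathcal{H}^A_n\otimes\mathcal{H}^B_m$, while coherences \emph{within} each block are untouched — matching the physical intuition that a local SSR erases only inter-sector phases, not internal mode structure. There is no real technical obstacle here; the one conceptual step worth emphasizing is the justification for \emph{two independent} $U(1)$ averages rather than a single joint one. The product twirl corresponds precisely to the absence of a \emph{correlated} phase reference between $A$ and $B$, and this is exactly what distinguishes the local SSR (Lemma~\ref{lem:SSR}) from the strictly weaker global SSR treated in Eq.~\eqref{eq:twirl-global}; a single joint average would recover only the global-SSR map, which would fail to dephase states such as $\ket{0}_A\ket{1}_B+e^{i\phi}\ket{1}_A\ket{0}_B$ that belong to the same total-photon-number sector but different local-photon-number sectors.
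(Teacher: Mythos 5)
Your proposal is correct and follows essentially the same route as the paper: define $\mathcal{E}_{\text{l-ssr}}$ as the twirl over two independent local $U(1)$ phases generated by $\hat N_A$ and $\hat N_B$, expand via the spectral decompositions $U_{A/B}(\phi)=\sum_n e^{i\phi n}P^{A/B}_n$, and apply Fourier orthogonality to eliminate the off-diagonal $(n,n')$ and $(m,m')$ terms. Your closing remark distinguishing the product twirl from a single joint average likewise mirrors the paper's observation that the local SSR subsumes the global one.
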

\begin{proof}
A superselection rule imposed on the bipartite quantum state $\rho^{AB}$ implies a lack of shared phase references on the bipartite system. Since the phase references on systems A and B are uncorrelated, the local SSR implies a twirling map of the form:
\begin{align}
\label{eq:twirl-local}
&\mathcal E_{\mathrm{l\mbox{-}ssr}}(\rho^{AB})\notag \\
&=\int_0^{2\pi}\frac{d\phi_1d\phi_2}{(2\pi)^2}
\big(U_A(\phi_1)\otimes U_B(\phi_2)\big)\rho^{AB}
\big(U_A^\dagger(\phi_1)\otimes U_B^\dagger(\phi_2)\big)\notag \\
&=\sum_{n,m\ge 0}(P^A_n\otimes P^B_m)\rho^{AB}(P^A_n\otimes P^B_m),
\end{align}
where $\big(U_A(\phi_1)\otimes U_B(\phi_2)\big)\rho^{AB}
\big(U_A^\dagger(\phi_1)\otimes U_B^\dagger(\phi_2)\big)=\sum_{n,n',m,m'\ge 0}e^{i\phi_1(n-n')+i\phi_2(m-m')}(P^A_n\otimes P^B_m)\rho^{AB}(P^A_{n'}\otimes P^B_{m'})$ is the $U(1)$-twirling operation\cite{Barlett2007}, and $P^A_n$ ($P^B_m$) projects onto the \emph{entire} local $n$– ($m$–) photon subspace
$\mathcal H^A_n$ ($\mathcal H^B_m$). 
\end{proof}
Because independent local twirling maps include the global one, the local SSR subsumes the global SSR, i.e.
$\mathcal E_{\mathrm{l\mbox{-}ssr}}\circ\mathcal E_{\mathrm{g\mbox{-}ssr}}=\mathcal E_{\mathrm{l\mbox{-}ssr}}$. Henceforth, we impose only the local SSR.

As a direct consequence of the lemma above, 
applying the local SSR to the thermal source of Eq.~\eqref{eq: rho-app} gives
\begin{equation}
\label{eq:source-under-SSR}
\mathcal E_{\mathrm{l\mbox{-}ssr}}(\rho_s)
=(1-\epsilon)\mathbb I^{(0)}+\epsilon\frac{\mathbb I^{(1)}}{2}
+O(\epsilon^2)\mathcal E_{\mathrm{l\mbox{-}ssr}}\big(\rho_s^{(>1)}\big),
\end{equation}
where $\mbb I^{(n)}$ denotes the identity map on $\mc{H}_n=\oplus_{i=0}^n \mc{H}_i^A\otimes\mc{H}_{n-i}^B$. So the first–order term $\rho_s^{(1)}$ contribution becomes fully dephased and independent of $g$.
Hence, with no additional resource, the achievable QFI scales at most as $O(\epsilon^2)$, consistent with the inefficiency of intensity interferometry for weak thermal light~\cite{Brown1956, Monnier2003}.

However, it is always possible to use a shared bipartite ancilla state to build shared phase references and lift the restriction of SSR~\cite{Enk2005}. This idea is also studied and known as activating entanglement under SSR (see, e.g., Refs .~\cite{Enk2005, Bartlett2006a}). Simply put, by introducing an additional ancillary system $\rho_a$, the composite system of 
\begin{align}
    \rho^{AB}=\rho^{AB}_a\otimes \rho^{AB}_s
\end{align}
could contain more information about the complex visibility $g$ that is not erased by the superselection rules. And the physically accessible joint state is the locally twirled composite:
\begin{equation}
\label{eq:joint-under-SSR}
\mathcal E_{\mathrm{l\mbox{-}ssr}}(\rho^{AB})
=\sum_{n,m\ge 0}(P^A_n\otimes P^B_m)
\big[\rho^{AB}_s\otimes \rho^{AB}_a\big]
(P^A_n\otimes P^B_m).
\end{equation}
where $P^A_n$ and $P^B_m$ act on the joint multimode spaces at each site (source + ancilla).

Because $\rho^{AB}_a$ and $\mathcal E_{\mathrm{l\mbox{-}ssr}}$ do not depend on $g$, additivity and monotonicity of QFI under CPTP maps imply:

\begin{lemma}
\label{lem:QFI-bound-app}
For $\mu\in\{|g|,\theta\}$,
\begin{equation}
\mathbb H_{\mu}\left[\mathcal E_{\mathrm{l\mbox{-}ssr}}(\rho^{AB})\right]
\le
\mathbb H_{\mu}\left[\rho^{AB}_s\right].
\end{equation}
\end{lemma}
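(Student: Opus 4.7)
The plan is to deduce the bound by chaining two standard properties of the quantum Fisher information: additivity on tensor-product states, and monotonicity (the data-processing inequality) under parameter-independent CPTP maps. Both properties are used in the earlier Lemma~\ref{lem:QFI-bound} in the main text, so the goal here is really to spell out the argument carefully with the local-SSR twirl in place.

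First, I would observe that the parameter $\mu\in\{|g|,\theta\}$ enters only through the source $\rho_s^{AB}$, since the ancilla $\rho_a^{AB}$ is, by construction, prepared independently of the astronomical signal and thus has vanishing QFI. By additivity of the QFI on product states, this gives
\begin{equation}
\mbb{H}_\mu\!\left[\rho_s^{AB}\otimes\rho_a^{AB}\right]
=\mbb{H}_\mu\!\left[\rho_s^{AB}\right]+\mbb{H}_\mu\!\left[\rho_a^{AB}\right]
=\mbb{H}_\mu\!\left[\rho_s^{AB}\right].
\end{equation}
Next, I would invoke the data-processing inequality: for any $\mu$-independent CPTP map $\Lambda$, one has $\mbb{H}_\mu[\Lambda(\rho)]\le \mbb{H}_\mu[\rho]$. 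The local-SSR map $\mc{E}_{\mathrm{l\mbox{-}ssr}}$ is, by the twirling construction used to derive Lemma~\ref{lem:SSR}, an average of unitary conjugations by $U_A(\phi_1)\otimes U_B(\phi_2)$ over $\phi_1,\phi_2\in[0,2\pi)$; it is manifestly CPTP and independent of $\mu$. Applying monotonicity to $\rho^{AB}=\rho_s^{AB}\otimes\rho_a^{AB}$ and combining with the previous identity yields
\begin{equation}
\mbb{H}_\mu\!\left[\mc{E}_{\mathrm{l\mbox{-}ssr}}(\rho^{AB})\right]
\le \mbb{H}_\mu\!\left[\rho_s^{AB}\otimes\rho_a^{AB}\right]
=\mbb{H}_\mu\!\left[\rho_s^{AB}\right],
\end{equation}
which is the claim.

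There is no real obstacle at the level of the argument, but two small points are worth explicitly securing. The first is conceptual: the CPTP map being applied is fixed by the physical constraint (the absence of a correlated phase reference) and is not optimized over any $\mu$-dependent degree of freedom; this is what licenses the parameter-independent DPI. The second is technical: the states live on a bosonic (infinite-dimensional) Hilbert space, so one should confirm that additivity and monotonicity of the QFI apply. This can be handled by truncating at total photon number $\le M$ (the truncation tail is $O(\epsilon^{M+1})$ for the source in Eq.~\eqref{eq: rho} and suppressed by the amplitudes $f_{n,m}$ for the ancilla in Eq.~\eqref{eq:ancilla-state1}), applying additivity and DPI in the finite-dimensional setting, and then sending $M\to\infty$.
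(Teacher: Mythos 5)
Your proof is correct and follows essentially the same route as the paper: additivity of the QFI on the product $\rho_s^{AB}\otimes\rho_a^{AB}$ (with $\mathbb{H}_\mu[\rho_a^{AB}]=0$ since the ancilla is parameter-independent), followed by monotonicity under the parameter-independent CPTP twirl $\mathcal E_{\mathrm{l\mbox{-}ssr}}$. The extra care you take with the infinite-dimensional setting via truncation is a reasonable technical refinement the paper leaves implicit, but it does not change the argument.
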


\begin{proof}
By additivity and monotonicity of  quantum Fisher information, we have for $\mu\in \{|g|,\theta\}$
\begin{equation}
\mbb{H}_{\mu}[\rho_{s}^{AB}]+\mbb{H}_{\mu}[\rho_{a}^{AB}] \ge  \mbb{H}_{\mu}[\rho_s^{AB}\otimes\rho_a^{AB}] \ge  \mbb{H}_{\mu}[E_{\mathrm{l\mbox{-}ssr}}(\rho^{AB})]\notag 
\end{equation}
Since $\mbb{H}_{\mu}[\rho_{a}^{AB}]$=0, we have the lemma proved. 
\end{proof}

It is therefore natural to quantify the usefulness of a given ancilla by the (dimensionless) QFI–ratio
\begin{equation}
\label{eq:h-ratio-app}
\mathbb h[\rho^{AB}_a]
:=\min\left\{
\frac{\mathbb H_{|g|}\left[\mathcal E_{\mathrm{l\mbox{-}ssr}}(\rho^{AB})\right]}
     {\mathbb H_{|g|}\left[\rho^{AB}_s\right]},\;
\frac{\mathbb H_{\theta}\left[\mathcal E_{\mathrm{l\mbox{-}ssr}}(\rho^{AB})\right]}
     {\mathbb H_{\theta}\left[\rho^{AB}_s\right]}
\right\}
\end{equation}
which takes value in $[0, 1]$ from lemma~\ref{lem:QFI-bound}. In the following, we will refer to this quantity as the QFI ratio and determine how to find the optimal entanglement ancilla state such that the QFI ratio can asymptotically approach one, thereby achieving optimal quantum telescopy. 
\blk
\section{Upper Bound on the Quantum Fisher Information Ratio. }
\label{appendixC}
\yujie 
As we discussed in the main text, in this section, we assume the most general ancilla state to be of the form:
\begin{equation}
\ket{\psi}_a = \sum_{n,m=0} f_{n,m} \ket{n_A, m_B},\quad \ket{n_A, m_B}\in\mc{H}^A_n\otimes\mc H^B_m.
\label{eq: ancilla-state1_app}
\end{equation}
Where $\ket{n_A, m_B}$ denotes any unit vector supported on $\mc{H}^A_n\otimes\mc H^B_m$, i.e., having $n$ photons at $A$ and $m$ photons at $B$, which can have different mode configurations such as $\ket{n}_A\ket{m}_B$ (or equivalently $\ket{n}_A\ket{0}_A^{\otimes n+m-1}\ket{m}_B\ket{0}_B^{\otimes n+m-1}$) and $\ket{n_A,m_B}=\ket{1}_A^{\otimes n} \ket{0}_A^{\otimes m}\ket{0}_B^{\otimes n}\ket{1}_B^{\otimes m}$. Since any choice of mode configurations is related by passive unitaries $U_{nm}^{AB}$ on $\mc{H}^A_n\otimes\mc H^B_m$, i.e., 
\begin{align}
&U_{nm}^{AB}\ket{n}_A\ket{0}_A^{\otimes n+m-1}\ket{m}_B\ket{0}_B^{\otimes n+m-1}\notag \\
&= \ket{1}_A^{\otimes n} \ket{0}_A^{\otimes m}\ket{0}_B^{\otimes n}\ket{1}_B^{\otimes m}
\end{align}
The following lemma allows us to discuss the QFI ratio of $\ket{\psi}_a$ in a mode-configuration-agnostic manner. 
\begin{lemma}
\label{lem:ancilla-equivalence}
Let $\tilde\rho_a^{AB}=U\rho_a^{AB}U^\dagger$ where $U=\bigoplus_{n,m}U_{nm}^{AB}$ is block-diagonal on $\bigoplus_{n,m}\mc{H}^A_n\otimes\mc H^B_m$. Then, the QFI ratio $\mathbb{h}[\rho_a^{AB}]=\mbb{h}[\tilde\rho_a^{AB}]$.
\end{lemma}
\begin{proof}
Let $\hat U:=\mathbb 1_{s}\otimes U$ act on the joint state
$\rho_s^{AB}\otimes\rho_a^{AB}$. Since $U$
preserves the ancilla’s local photon numbers, $\hat U$ should preserve the total (ancilla+source) local photon numbers and thus commute with the 
projectors: $ [\hat U, P^A_n\otimes P^B_m]=0$ for all $n,m$. Hence
\[
\mathcal E_{\mathrm{l\mbox{-}ssr}}\!\left(\rho_s^{AB}\otimes\tilde\rho_a^{AB}\right)
=\hat U\,\mathcal E_{\mathrm{l\mbox{-}ssr}}\!\left(\rho_s^{AB}\otimes\rho_a^{AB}\right)\hat U^\dagger.
\]
Since the quantum Fisher information is invariant under parameter–independent
unitaries, for each $\mu$,
\[
\mathbb H_\mu\!\left[\mathcal E_{\mathrm{l\mbox{-}ssr}}\!\left(\rho_s^{AB}\otimes\tilde\rho_a^{AB}\right)\right]
=
\mathbb H_\mu\!\left[\mathcal E_{\mathrm{l\mbox{-}ssr}}\!\left(\rho_s^{AB}\otimes\rho_a^{AB}\right)\right],
\]
and because the denominators $\mathbb{H}_{\mu}[\rho_s^{AB}]$ in the QFI–ratio in Eq~\ref{eq:h-ratio-app} do
not depend on the ancilla, both ratios (for $|g|$ and $\theta$) and hence their minimums are unchanged. Therefore
$\mathbb h[\rho_a^{AB}]=\mathbb h[\tilde\rho_a^{AB}]$.
\end{proof}
\blk
\begin{proposition}
\label{prop: assisted-QFI}
Given an ancilla state $\rho_a=\op{\psi}{\psi}_a$ of the form $|\psi\rangle_a = \sum_{n,m=0} f_{n,m} |n_A, m_B\rangle$, the quantum Fisher information in estimating $|g|$ and $\theta$ of source state $\rho_s$ in the presence of superselection rule is given by:
\begin{align}
&\mbb{H}_{|g|}[\mc{E}_{\text{l-ssr}}(\rho^{AB})]=\frac{\epsilon}{1-|g|^2}\sum_{n,m=1}\frac{2|f_{n,m-1}|^2|f_{n-1,m}|^2}{|f_{n,m-1}|^2+|f_{n-1,m}|^2}  \notag  \\
&\mbb{H}_{\theta}[\mc{E}_{\text{l-ssr}}(\rho^{AB})]=\epsilon|g|^2\sum_{n,m=1}\frac{2|f_{n,m-1}|^2|f_{n-1,m}|^2}{|f_{n,m-1}|^2+|f_{n-1,m}|^2}  \notag  \\
&\mathbb{h}[\rho^{AB}_a]=\sum_{n,m=1}\frac{2|f_{n,m-1}|^2|f_{n-1,m}|^2}{|f_{n,m-1}|^2+|f_{n-1,m}|^2}    
\end{align}
Here we neglect the higher term $O(\epsilon^2)$ for simplicity. Thus, in what follows, $\mathbb{h}=0$ actually means $\mathbb{h}=O(\epsilon^2)$.
\end{proposition}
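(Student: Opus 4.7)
The plan is to compute $\mbb{H}_\mu\!\bigl[\mc{E}_{\text{l-ssr}}(\rho_s\otimes\rho_a)\bigr]$ sector by sector in the local-photon-number decomposition. First I would expand $\rho_s\otimes\rho_a$ to first order in $\epsilon$ and apply the projectors from Lemma~\ref{lem:SSR}. The key step is to identify which cross terms survive: a term (source piece)$\otimes$(ancilla piece) survives the local twirl iff the joint source+ancilla local photon numbers $(N_A,N_B)$ agree on the left and the right. In particular, the $g$-dependent coherence $\tfrac{g}{2}\ket{0}_A\ket{1}_B\bra{1}_A\bra{0}_B$ from $\rho_s^{(1)}$, tensored with the ancilla coherence $f_{n,m-1}f_{n-1,m}^*\ket{n_A,(m-1)_B}\bra{(n-1)_A,m_B}$, survives in sector $(n,m)$ with coefficient $\tfrac{\epsilon g}{2}f_{n,m-1}f^*_{n-1,m}$; all remaining $\rho_s\otimes\rho_a$ cross terms are either $g$-independent or killed by the twirl, and the $O(\epsilon^2)\rho_s^{(>1)}$ piece contributes only at subleading order.

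Next I would isolate the relevant block in each sector. Within the three-dimensional subspace spanned by $\ket{v_0}:=\ket{0,0}\ket{n_A,m_B}$, $\ket{v_1}:=\ket{0,1}\ket{n_A,(m-1)_B}$, $\ket{v_2}:=\ket{1,0}\ket{(n-1)_A,m_B}$, the projected state decomposes as a $g$-independent weight $(1-\epsilon)|f_{n,m}|^2$ on $\ket{v_0}\bra{v_0}$, plus an unnormalized $2\times 2$ matrix $M^{(n,m)}$ on $\{\ket{v_1},\ket{v_2}\}$ with diagonal $\tfrac{\epsilon}{2}(|f_{n,m-1}|^2,|f_{n-1,m}|^2)$ and off-diagonal $\tfrac{\epsilon g}{2}f_{n,m-1}f^*_{n-1,m}$. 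Since the sector weights and the $\ket{v_0}$ piece are all $g$-independent, by additivity of the QFI over $g$-independent block decompositions only the blocks $M^{(n,m)}$ contribute to $\mbb{H}_\mu$.

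The remaining step is a standard qubit-QFI computation on each normalized block. Setting $p:=|f_{n,m-1}|^2/(|f_{n,m-1}|^2+|f_{n-1,m}|^2)$ and $q:=1-p$, the Bloch vector of $M^{(n,m)}/\tr M^{(n,m)}$ has $r_z=p-q$ and transverse components of magnitude $2\sqrt{pq}|g|$, with an irrelevant $g$-independent phase shift coming from $\arg(f_{n,m-1}f^*_{n-1,m})$. Applying $\mbb{H}_\mu=|\partial_\mu\vec r|^2+(\vec r\cdot\partial_\mu\vec r)^2/(1-|\vec r|^2)$ gives $\mbb{H}_{|g|}=4pq/(1-|g|^2)$ and $\mbb{H}_\theta=4pq|g|^2$ per normalized block. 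Multiplying by the block weight $\tfrac{\epsilon}{2}(|f_{n,m-1}|^2+|f_{n-1,m}|^2)$ collapses $4pq$ with the weight into the harmonic-mean combination $2|f_{n,m-1}|^2|f_{n-1,m}|^2/(|f_{n,m-1}|^2+|f_{n-1,m}|^2)$. Summing over $n,m\ge 1$ and dividing by $\mbb{H}_\mu[\rho_s]$ from Theorem~1 yields the claimed formula; notably, the same sum appears for both $\mu=|g|$ and $\mu=\theta$, so the minimum in Definition~\ref{def:QFI ratio} is unambiguous.

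The main obstacle is the combinatorial bookkeeping in the first step: carefully enumerating which $\rho_s\otimes\rho_a$ cross terms survive $\mc{E}_{\text{l-ssr}}$ and carry $g$-dependence, and verifying that none of the $g$-independent diagonal contributions from $\rho_s^{(1)}$ accidentally reintroduce $\theta$- or $|g|$-dependence after twirling. A useful consistency check is that the relative phases of the $f_{n,m}$ enter only as a $g$-independent shift of the transverse Bloch angle, which drops out of $\partial_\theta\vec r$ and $\partial_{|g|}\vec r$; this is exactly the invariance guaranteed by Lemma~\ref{lem:ancilla-equivalence}, since such phases are block-diagonal unitaries on $\mc{H}^A_n\otimes\mc{H}^B_m$.
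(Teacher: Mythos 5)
Your proposal is correct and follows essentially the same route as the paper's proof: decompose the twirled joint state into local-photon-number sectors, observe that only the $2\times2$ block spanned by $\ket{0,1}_s\ket{n_A,(m-1)_B}_a$ and $\ket{1,0}_s\ket{(n-1)_A,m_B}_a$ carries $g$-dependence, compute the per-block qubit QFI, and sum with the $g$-independent weights. The only (cosmetic) difference is that you evaluate the block QFI via the Bloch-vector formula rather than the explicit symmetric logarithmic derivatives used in Appendix~\ref{appendixC}; both give $4pq/(1-|g|^2)$ and $4pq|g|^2$ and hence the same harmonic-mean sum.
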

\begin{proof}
Assuming the general form of the shared ancilla state:
\begin{equation}
\ket{\psi}_a = \sum_{n,m=0} f_{n,m} \ket{n_A, m_B},\quad \ket{n_A, m_B}\in\mc{H}^A_n\otimes\mc H^B_m,
\label{eq: ancilla-state1-app}
\end{equation}
with $\sum_{n,m=0}^{\infty}|f_{n,m}|^2=1$. By neglecting the higher order term from $\rho_s$, we have:
\begin{align}
 \mc{E}_{\text{l-ssr}}(\rho^{AB})&=\sum_{n,m}(P^A_{n}\otimes P^B_{m})\rho_s^{AB}\otimes\rho_a^{AB} (P^A_{n}\otimes P^B_{m})\notag \\
 &=\bigoplus_{n,m}\sigma^{AB}_{n,m}=\bigoplus_{n,m}\left[\sigma^{AB0}_{n,m}+\sigma^{AB
\epsilon}_{n,m}+O(\epsilon^2)\right]\notag
\end{align}
where $\sigma^{AB}_{n,m}$ are subnormalized. The zero-order term:
\begin{align}
  \sigma^{AB0}_{n,m}=(1 - \epsilon)|f_{n,m}|^2 \rho^{AB}_0\otimes\ket{n_A,m_B}\bra{n_A,m_B}  
\end{align}
is $g$-independent, and the first order term $\sigma^{AB\epsilon}_{n, m}$ can be expressed in the $2$-dimensional subspace spanned by $\{\ket{0}^s_A\ket{1}^s_B\ket{n_A,(m-1)_B}^a,\ket{1}^s_A\ket{0}^s_B\ket{(n-1)_A,m_B}^a\}$ (Here, we include superscript $s$ and $a$ for source and ancilla to avoid confusion):
\begin{subequations}
\begin{align}
\sigma^{AB\epsilon}_{n, m}&=\frac{\epsilon}{2}\begin{pmatrix}
{|f_{n,m-1}|^2}& {f^*_{n-1,m}f_{n,m-1}} g \\
{f_{n-1,m}f^*_{n,m-1}}g^* & {|f_{n-1,m}|^2}, \label{eq:sigmanm}
\end{pmatrix}
\end{align} 
\end{subequations}
Since $\sigma_{n,m}^{A,B}$ are density operators with disjoint support thus mutually orthogonal and $\sigma_{n,m}^{AB0}$ is $g$-independent, we have:
\begin{align}
\label{eq:QFI-app}
 \mbb{H}_{\mu}[\mc{E}_{\text{l-ssr}}(\rho^{AB})]&=\sum_{n,m=0}\mbb{H}_{\mu}[\sigma_{n,m}^{AB}]\quad\quad\quad \mu\in\{|g|,\theta\}.    \notag \\
 &=\sum_{n,m=0}\mbb{H}_{\mu}[\sigma_{n,m}^{AB\epsilon}]+O(\epsilon^2)
\end{align}
Writing $\sigma^{AB\epsilon}_{n, m}=\frac{(|f_{n,m-1}|^2+|f_{n-1,m}|^2)\epsilon}{2}\tilde \sigma^{AB\epsilon}_{n, m}$ with 
\begin{align}
\tilde \sigma^{AB\epsilon}_{n, m}=
\begin{pmatrix}
\frac{|f_{n,m-1}|^2}{|f_{n,m-1}|^2+|f_{n-1,m}|^2} & \frac{f^*_{n-1,m}f_{n,m-1}}{|f_{n,m-1}|^2+|f_{n-1,m}|^2} g \\
\frac{f_{n-1,m}f^*_{n,m-1}}{|f_{n,m-1}|^2+|f_{n-1,m}|^2}g^* & \frac{|f_{n-1,m}|^2}{|f_{n,m-1}|^2+|f_{n-1,m}|^2}
\end{pmatrix}\notag 
\end{align}
Similar to the calculation in section~\ref{appendixA}, one could show that the symmetric logarithmic derivative of  $\tilde{\sigma}^{AB\epsilon}_{n,m}$ in estimating $|g|$ and $\theta$ is given by:
\begin{widetext}
\begin{align*}
\hat{L}_{|g|}[\tilde{\sigma}^{AB\epsilon}_{n,m}]&=\frac{1}{1-|g|^2}\frac{2}{|f_{n,m-1}|^2+|f_{n-1,m}|^2}\begin{pmatrix} -|f_{n-1,m}|^2|g| &f_{n-1,m}f^*_{n,m-1}e^{i\theta}\\
f^*_{n-1,m}f_{n,m-1}e^{-i\theta}& -|f_{n,m-1}|^2|g| \end{pmatrix} \notag \\
\hat{L}_{\theta}[\tilde{\sigma}^{AB\epsilon}_{n,m}]&=|g|\frac{2}{|f_{n,m-1}|^2+|f_{n-1,m}|^2}\begin{pmatrix} 0&if_{n-1,m}f^*_{n,m-1}e^{i\theta}\\
-if^*_{n-1,m}f_{n,m-1}e^{-i\theta}& 0 \end{pmatrix}
\end{align*}
Therefore, we have:
\begin{subequations}
\begin{align}
 \mbb{H}_{|g|}[\sigma_{n,m}^{AB\epsilon}]&=\frac{(|f_{n,m-1}|^2+|f_{n-1,m}|^2)\epsilon}{2}\mbb{H}_{|g|}[\tilde{\sigma}^{AB\epsilon}_{n,m}]=\frac{\epsilon}{1-|g|^2}\frac{2|f_{n,m-1}|^2|f_{n-1,m}|^2}{|f_{n,m-1}|^2+|f_{n-1,m}|^2} \\
 \mbb{H}_{\theta}[\sigma_{n,m}^{AB\epsilon}]&=\frac{(|f_{n,m-1}|^2+|f_{n-1,m}|^2)\epsilon}{2}\mbb{H}_{\theta}[\tilde{\sigma}^{AB\epsilon}_{n,m}]=\epsilon|g|^2\frac{2|f_{n,m-1}|^2|f_{n-1,m}|^2}{|f_{n,m-1}|^2+|f_{n-1,m}|^2}
\end{align}
\label{eq:QFI-epsilon-app}
\end{subequations}
\end{widetext}
Combining Eq.~\ref{eq:QFI-app} and Eq.~\ref{eq:QFI-epsilon-app}, we conclude 
that to the leading-order term:
\begin{align}
&\mbb{H}_{|g|}[\mc{E}_{\text{g-ssr}}(\rho^{AB})]=\frac{\epsilon}{1-|g|^2}\sum_{n,m=1}\frac{2|f_{n,m-1}|^2|f_{n-1,m}|^2}{|f_{n,m-1}|^2+|f_{n-1,m}|^2}  \notag  \\
&\mbb{H}_{\theta}[\mc{E}_{\text{g-ssr}}(\rho^{AB})]=\epsilon|g|^2\sum_{n,m=1}\frac{2|f_{n,m-1}|^2|f_{n-1,m}|^2}{|f_{n,m-1}|^2+|f_{n-1,m}|^2}  \notag  \\
&\mathbb{h}[\rho^{AB}_a]=\sum_{n,m=1}\frac{2|f_{n,m-1}|^2|f_{n-1,m}|^2}{|f_{n,m-1}|^2+|f_{n-1,m}|^2}    
\end{align}
\label{lem:QFInm}
where the sum is over $n,m\ge 1$, since the contribution from terms with $n=0$ or $m=0$ are apparently zero. 
\end{proof}
\blk

Let us now discuss the shared ancilla state $\ket{\psi}_a$ with bounded energy. A natural way to implement this restriction is to limit the maximum total photon number to a fixed value $N$, which yields the following theorem. 
\begin{theorem}
For ancilla state $\ket{\psi}_a$ with at most $N$-photon in total, we have:
\begin{align}
\mathbb{h}[\rho^{AB}_a]\le \cos(\frac{\pi}{N+2})\stackrel{\text{Large $N$}}{\approx} 1-\frac{\pi^2}{2N^2}
\end{align}
\label{thm: QFI-upper}
\end{theorem}
\begin{proof}
Assuming the ancilla state uses at most $N$ photons, we need only consider the subnormalized blocks $\sigma_{n,m}^{AB}$ with $n+m\le N+1$ (with at most $N$ photons from the ancilla state and one photon from the source). 
It is thus convenient to rewrite the summation as:
\begin{align}
    \mathbb{h}[\rho^{AB}_a]&=\sum_{n,m=1}^{n+m\le N+1}\frac{2|f_{n,m-1}|^2|f_{n-1,m}|^2}{|f_{n,m-1}|^2+|f_{n-1,m}|^2}\notag \\   
&=\sum_{k=1}^{N}\sum_{n=1}^{k}\frac{2|f_{n,k-n}|^2|f_{n-1,k-n+1}|^2}{|f_{n,k-n}|^2+|f_{n-1,k-n+1}|^2}
\end{align}
For fixed $k$, define:
$$q_k=\sum_{n=0}^{k}|f_{n,k-n}|^2\quad\quad\quad {p_{n|k}}:=\frac{|f_{n,k-n}|^2}{q_k}$$

\blk
where $\sum_k q_k=1$ and $\sum_n p_{n|k}=1$ representing the probability of having $k$ total photon from the ancilla state $\rho^{AB}_a$ and conditional on that, the probability having $n$ photon at telescope A given $k$. With this notation, we could show that:
\begin{align}
&\mathbb{h}[\rho^{AB}_a]=\sum_{k=1}^N h_k \\
&h_k:=q_k \sum_{n=1}^{k}\frac{2p_{n|k}p_{n-1|k}}{p_{n|k}+p_{n-1|k}}\le q_k \sum_{n=1}^{k}{\sqrt{p_{n|k}p_{n-1|k}}}
   \label{eq:upper bound}
\end{align}
Now, we can give a non-tight upper bound $h_k$ by upper bounding $\sum_{n=1}^{k}\sqrt{p_{n|k}p_{n-1|k}}$ from the lemma~\ref{lemma4} below, which shows 
\begin{align}
  &h_k\le q_k\cos(\frac{\pi}{k+2}) \\
  &\mathbb{h}[\rho^{AB}_a]=\sum_{k=1}^{N} h_k\le \sum_{k=1}^{N} q_k\cos(\frac{\pi}{k+2})\le \cos(\frac{\pi}{N+2})\notag
\end{align}
with the second inequality saturated when $q_k=\delta_{k,N}$, which implies:
$$f_{n,m}=f_{n,m}\delta_{n,N-m}=:f_{n}\delta_{n,N-m}$$
Therefore, in the following, we focus on states of the form
\begin{equation}
   \ket{\psi}_a=\sum_{n=0}^Nf_{n}\ket{n_A,(N-n)_B}, 
\label{eq: ancilla-state-optimal}
\end{equation}
\end{proof}
\begin{lemma}
\label{lemma4}
For any distribution $\{x_i\}$ with $\sum_{i=0}^kx_i=1$, we have the inequality:
\begin{equation}
   \sum_{i=1}^k \sqrt{x_ix_{i-1}}\le\cos(\frac{\pi}{k+2}),
\end{equation}
with equality for $x_i={\frac{2}{k+1}}\sin^2(\frac{i+1}{k+2}\pi)$
\end{lemma}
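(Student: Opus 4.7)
The plan is to reduce the bound to a standard spectral computation on the path graph. First, substitute $y_i:=\sqrt{x_i}\ge 0$. Then the constraint $\sum_{i=0}^k x_i=1$ becomes $\|\mathbf{y}\|^2=1$, and the objective turns into a quadratic form $\sum_{i=1}^k y_i y_{i-1}=\mathbf{y}^{T}M\mathbf{y}$, where $M$ is the $(k+1)\times(k+1)$ real symmetric tridiagonal matrix whose only non-zero entries are $M_{i,i-1}=M_{i-1,i}=\tfrac12$. By the Rayleigh--Ritz characterization, the maximum of $\mathbf{y}^{T}M\mathbf{y}$ over all real unit vectors equals $\lambda_{\max}(M)$, so proving the inequality amounts to identifying the top eigenvalue of $M$.

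Second, I would invoke the textbook closed form for the spectrum of the path-graph adjacency matrix: $2M$ is the adjacency matrix of the path $P_{k+1}$, whose eigenvalues are $2\cos\!\left(\frac{j\pi}{k+2}\right)$ for $j=1,\dots,k+1$, with the $i$-th entry of the $j$-th eigenvector proportional to $\sin\!\left(\frac{(i+1)j\pi}{k+2}\right)$. One can verify this directly by plugging the ansatz into the recursion $\tfrac12(v_{i-1}+v_{i+1})=\lambda v_i$ and applying the sum-to-product identity $\sin\alpha+\sin\beta=2\sin\frac{\alpha+\beta}{2}\cos\frac{\alpha-\beta}{2}$, checking the boundary equations at $i=0$ and $i=k$ separately. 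The largest eigenvalue is thus $\lambda_{\max}(M)=\cos\!\left(\frac{\pi}{k+2}\right)$, which is exactly the claimed upper bound.

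Third, I would verify that the non-negativity constraint $y_i\ge 0$ (equivalently $x_i\ge 0$) does not lower the maximum, so that the unconstrained spectral bound is actually attained by a valid probability distribution. This is immediate from the closed form: for $j=1$, the components $\sin\!\left(\frac{(i+1)\pi}{k+2}\right)$ are strictly positive for every $i=0,\dots,k$, so the top eigenvector can be taken entrywise positive (a Perron--Frobenius-type fact that is here checked by direct inspection). Normalizing and squaring then yields the extremizer $x_i\propto \sin^2\!\left(\frac{(i+1)\pi}{k+2}\right)$ announced in the lemma; the overall normalization constant is obtained by the identity $\sum_{i=0}^{k}\sin^2\!\left(\frac{(i+1)\pi}{k+2}\right) = \frac{k+1}{2} - \frac{1}{2}\sum_{j=1}^{k+1}\cos\!\left(\frac{2j\pi}{k+2}\right)$, combined with $\sum_{j=0}^{k+1}e^{i2\pi j/(k+2)}=0$.

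The main obstacle is not the eigenvalue calculation itself, which is classical, but ensuring that the relaxation from the simplex constraint $\{x_i\ge 0, \sum x_i=1\}$ to the sphere $\|\mathbf{y}\|=1$ is tight. The positivity of the top eigenvector resolves this; note also that the step from $\sum \frac{2p_{n|k}p_{n-1|k}}{p_{n|k}+p_{n-1|k}}$ to $\sum \sqrt{p_{n|k}p_{n-1|k}}$ in Eq.~\eqref{eq:upper bound} (the AM--GM/harmonic--geometric mean inequality applied term-by-term) is saturated only when $p_{n|k}=p_{n-1|k}$, so the bound is slightly loose in general but still tight at the maximizer because the extremizer distribution is smooth enough that adjacent ratios approach $1$ in the large-$k$ limit; in any case, the upper bound is what the lemma asserts, and the explicit achievability witness given above establishes equality.
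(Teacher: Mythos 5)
Your proof is correct and takes essentially the same route as the paper's: rewriting the objective as a Rayleigh quotient of the path-graph tridiagonal matrix and identifying its top eigenvalue $\cos\!\left(\frac{\pi}{k+2}\right)$ with eigenvector components $\sin\!\left(\frac{(i+1)\pi}{k+2}\right)$. The only differences are cosmetic --- you verify the spectrum via the sine ansatz and sum-to-product identity rather than the Chebyshev recursion, and you make explicit the (correct and worthwhile) observation that the entrywise positivity of the top eigenvector guarantees the relaxation from the simplex to the unit sphere is tight.
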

\label{lem: Cheb}
\begin{proof}
Defining a $(k+1)\times(k+1)$  tridiagonal matrix $M_{k+1}$ of the form:
\begin{equation}
\begin{pmatrix}
0 & 1 & 0 & \cdots & 0 \\
1 & 0 & 1 & \cdots & 0 \\
0 & 1 & 0 & \cdots & 0 \\
\vdots & \vdots & \ddots & \ddots & 1 \\
0 & 0 & \cdots & 1 & 0
\end{pmatrix}.
\end{equation}
One could easily see that:
\begin{align}
2\sum_{i=1}^k \sqrt{x_i x_{i-1}}=\vec v^{T} M_{k+1} \vec v.
\end{align}
with normalized vector $\vec{v}=(\sqrt{x_0},\cdots,\sqrt{x_k})^T$.\par 

Therefore, maximizing $2\sum_{i=1}^k \sqrt{x_ix_{i-1}}$ for all distribution $\{x_i\}$ is equivalent to compute the largest eigenvalue of $M_{k+1}$ (see also similar quantity discussed in \cite{wiseman1997, Bartlett2006b}), whose characteristic polynomials satisfy
$$
\det(M_{k+1} - \lambda I_{k+1}) = -\lambda \cdot \det(M_{k} - \lambda I_k) - \det(M_{k-1} - \lambda I_{k-1})$$
where $I_k$ are $k\times k$ identity matrix. This recursive formula $\det(M_k - \lambda I_k)$ matches that of the Chebyshev polynomials $U_{k+1}(-\lambda/2)$, whose eigenvalues are 
$$\lambda_i=2\cos(\frac{i+1}{k+2}\pi)\quad\quad\quad i\in[0,\cdots k]$$
with the largest eigenvalue $2\cos(\frac{\pi}{k+2})$ and eigenvector $\vec{v}=\sqrt{\frac{2}{k+1}}(\sin(\frac{1}{k+2}\pi),\sin(\frac{2}{k+2}\pi),\cdots,\sin(\frac{k+1}{k+2}\pi))^T$. 

Therefore, we conclude that $$\sum_{i=1}^k \sqrt{x_ix_{i-1}}\le\cos(\frac{\pi}{k+2}),$$ with equality for $x_i={\frac{2}{k+1}}\sin^2(\frac{i+1}{k+2}\pi)$.
\end{proof}
\begin{lemma}\label{lem:harmonic}
For any distribution $\{x_i\}$ with $\sum_{i=0}^kx_i=1$ that is sufficiently smooth (defined below). Define $S_k(x) :=\;\sum_{i=1}^{k}\frac{2\,x_i\,x_{i-1}}{x_i+x_{i-1}}$, for large $k$, one has:
\begin{align} 
\sup_{\{x_i\}}\;S_k(x)
		&=\;1-\frac{\pi^{2}}{k^{2}}+O\!\bigl(\frac{1}{k^3}\bigr)
\end{align}
\end{lemma}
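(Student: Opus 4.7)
The plan is to convert the optimisation into a Dirichlet eigenvalue problem via a continuum limit. First I would apply the algebraic identity $\tfrac{2ab}{a+b}=\tfrac{a+b}{2}-\tfrac{(a-b)^{2}}{2(a+b)}$ with $a=x_i,\,b=x_{i-1}$; telescoping the linear piece against the normalisation $\sum_{i=0}^{k}x_i=1$ gives
$$S_{k}(x)=1-\frac{x_{0}+x_{k}}{2}-\sum_{i=1}^{k}\frac{(x_{i}-x_{i-1})^{2}}{2(x_{i}+x_{i-1})}.$$
Thus maximising $S_k$ is equivalent to simultaneously minimising a boundary term and a discrete ``Dirichlet energy''.

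Second, I would invoke the smoothness hypothesis by writing $x_i=\rho(i/k)/k$ for a smooth nonnegative $\rho$ on $[0,1]$ with $\int_{0}^{1}\rho=1$. A one-step Taylor expansion gives $x_i-x_{i-1}\approx \rho'(i/k)/k^{2}$ and $x_i+x_{i-1}\approx 2\rho(i/k)/k$, so a Riemann-sum estimate yields
$$\sum_{i=1}^{k}\frac{(x_{i}-x_{i-1})^{2}}{2(x_{i}+x_{i-1})}=\frac{1}{4k^{2}}\int_{0}^{1}\frac{\rho'(t)^{2}}{\rho(t)}\,dt+O\!\left(\tfrac{1}{k^{3}}\right),$$
while the boundary term equals $(\rho(0)+\rho(1))/(2k)=O(1/k)$. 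Since an $O(1/k)$ penalty would dominate the $O(1/k^{2})$ energy, the supremum forces the Dirichlet condition $\rho(0)=\rho(1)=0$ on the optimiser.

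Third, I would solve the constrained variational problem. The substitution $u=\sqrt{\rho}$ gives $\rho'^{2}/\rho=4u'^{2}$ and $\int u^{2}=1$, turning the task into the Rayleigh quotient for $-u''=\lambda u$ on $[0,1]$ with Dirichlet boundary conditions. Its smallest eigenvalue is $\lambda_{1}=\pi^{2}$ with eigenfunction $u(t)=\sqrt{2}\sin(\pi t)$, so $\rho(t)=2\sin^{2}(\pi t)$ and $x_{i}\approx (2/k)\sin^{2}(\pi i/k)$. Hence the minimal energy is $\pi^{2}/k^{2}$, and $\sup_{\{x_i\}}S_{k}(x)=1-\pi^{2}/k^{2}+O(1/k^{3})$, consistent with (and strictly tighter than) the geometric-mean bound $\cos(\pi/(k+2))\approx 1-\pi^{2}/(2k^{2})$ of Lemma~\ref{lem: Cheb}, as expected from $\tfrac{2ab}{a+b}\le\sqrt{ab}$.

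The main obstacle is to justify the continuum limit rigorously near the boundary, where the optimiser has $\rho(t)\sim \pi^{2}t^{2}$ so that $\rho'^{2}/\rho$ stays bounded but $\rho$ itself vanishes. One must make precise what ``sufficiently smooth'' means -- e.g.\ $\rho\in C^{2}([0,1])$, $\rho>0$ on $(0,1)$, with simple zeros at the endpoints -- and then control the Euler–Maclaurin remainder uniformly in a neighbourhood of this optimiser to secure the $O(1/k^{3})$ error. A small additional argument is needed to confirm that nonzero boundary values $x_{0},x_{k}$ can only decrease $S_k$ below the Dirichlet minimum (since they add an $O(1/k)$ penalty while relaxing the boundary condition can only lower the energy by $O(1/k^{2})$), so that the Dirichlet minimiser indeed attains the sup.
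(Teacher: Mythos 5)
Your proposal is correct and follows essentially the same route as the paper: rewrite the harmonic mean as an arithmetic mean minus a discrete Dirichlet energy plus boundary terms, pass to the continuum limit under the smoothness hypothesis, and identify the optimum with the first Dirichlet eigenvalue $\pi^{2}$ on $[0,1]$ with eigenfunction $\sqrt{2}\sin(\pi s)$. The only (cosmetic) differences are that you use the exact identity $\tfrac{2ab}{a+b}=\tfrac{a+b}{2}-\tfrac{(a-b)^{2}}{2(a+b)}$ and substitute $u=\sqrt{\rho}$ at the end, whereas the paper sets $f_i=\sqrt{x_i}$ at the outset and expands to leading order; your closing remarks on the boundary behaviour of the optimiser and the consistency with the $\cos(\pi/(k+2))$ bound match the paper's treatment.
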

\begin{proof}
We first define $f_i=\sqrt{x_i}$ with normalization $\sum f_i^{2}=1$.
Expanding the denominator to leading order gives
\begin{equation}
	S_k(x)=1-\frac{f_0^{2}+f_k^{2}}{2}-\sum_{i=1}^{k}\left[(f_i-f_{i-1})^{2}+O\!\big((f_i-f_{i-1})^{4}\big)\right].
\end{equation}

Assume $f_i$ is sufficiently smooth so that we can write $s=i/k$ and set $f_i=k^{-1/2}u(s)$, where $u'(x)$ is Lipschitz (i.e., $|u'(x)-u'(y)|\le C|x-y|$). For large $k$, we then have 
\blk
\begin{align}
  &\sum_{i=1}^{k}\left[(f_i-f_{i-1})^{2}+O\!\big((f_i-f_{i-1})^{4}\big)\right]\notag  \\
   =&\frac{1}{k^2}\!\int_{0}^{1}\!u'(s)^{2}\,ds+O\!\Big(\frac{1}{k^3}\Big),
\end{align}
hence
\begin{equation}
	S_k(x)=1-\frac{1}{k^{2}}\!\int_{0}^{1}\!u'(s)^{2}\,ds
	         -\frac{u(0)^{2}+u(1)^{2}}{2k}+O\!\Big(\frac{1}{k^3}\Big). \label{eq:continuous}
\end{equation}
Placing any positive mass at the ends ($u(0),u(1)>0$) yields a loss of order $1/k$, so the maximiser must satisfy $u(0)=u(1)=0$. The problem then reduces to minimising $\int_0^1 u'(s)^{2}ds$ subject to $\int_0^1 u(s)^{2}ds=1$ and $u(0)=u(1)=0$, i.e. the Dirichlet eigenvalue problem
\[
u''+\lambda u=0,\qquad u(0)=u(1)=0,
\]
whose lowest eigenpair is $\lambda_1=\pi^{2}$ with $u_1(s)=\sqrt{2}\sin(\pi s)$. Plugging $\lambda_1=\pi^{2}$ into \eqref{eq:continuous} gives
\[
S_k^{\max}=1-\frac{\pi^{2}}{k^{2}}+O\!\Big(\frac{1}{k^3}\Big).
\]
\end{proof}
Plugging this into Eq.~\ref{eq:upper bound}, we get an better upper bound of $\mbb{h}[\rho_a^{AB}]$ in the asymptotical region:
\begin{equation}
  \mbb{h}[\rho_a^{AB}]  \stackrel{\text{Large $N$}}{\approx} 1-\frac{\pi^2}{N^2}+O(\frac{1}{N^3})
\end{equation}

Another way to impose the constraint could be on the average photon number $\langle N\rangle=\sum_{n,m}|f_{n,m}|^2(n+m)$, where a similar bound could be given as:
\begin{corollary}
For ancilla state $\ket{\psi}_a$ with average photon number $\langle N\rangle$, we have:
\begin{align}
\mathbb{h}[\rho^{AB}_a]\le \cos(\frac{\pi}{\langle N\rangle+2})\stackrel{\text{Large $\langle N\rangle$}}{\approx} (1-\frac{\pi^2}{2\langle N\rangle^2})
\end{align}    
\end{corollary}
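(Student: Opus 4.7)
The plan is to reuse the per-sector decomposition already established in the proof of Theorem~\ref{thm: QFI-upper} and then pass from a maximum-photon-number constraint to a mean-photon-number constraint via Jensen's inequality. Recall from the proof of Theorem~\ref{thm: QFI-upper} that, writing $q_k=\sum_{n=0}^{k}|f_{n,k-n}|^2$ for the probability that the ancilla carries $k$ total photons, one already has the bound
\begin{equation}
h_k \;\le\; q_k \cos\!\left(\tfrac{\pi}{k+2}\right),
\end{equation}
where $h_k$ is the contribution to $\mathbb h[\rho^{AB}_a]$ from the total-photon-number sector $k$. This bound does not invoke any cap on $k$, only the inner normalization within sector $k$, and therefore holds whenever $\ket{\psi}_a$ has support up to arbitrarily large photon sectors.

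Summing over $k\ge 0$ (the $k=0$ term vanishes on both sides since $\cos(\pi/2)=0$) gives
\begin{equation}
\mathbb h[\rho^{AB}_a] \;=\; \sum_{k\ge 1} h_k \;\le\; \sum_{k\ge 0} q_k \cos\!\left(\tfrac{\pi}{k+2}\right).
\end{equation}
Next I would verify that $g(k) := \cos(\pi/(k+2))$ is concave on $[0,\infty)$ by a direct computation of $g''(k)$: with $h=k+2$ one finds $g'(k)=\pi\sin(\pi/h)/h^2$ and then $g''(k)=-\pi^{2}\cos(\pi/h)/h^{4}-2\pi\sin(\pi/h)/h^{3}<0$ throughout $h\ge 2$, so $g$ is concave on the relevant range. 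Applying Jensen's inequality to the discrete probability distribution $\{q_k\}$ and using the definition $\langle N\rangle=\sum_{n,m}|f_{n,m}|^2(n+m)=\sum_{k}q_k\,k$ then yields
\begin{equation}
\mathbb h[\rho^{AB}_a] \;\le\; g\!\left(\sum_k q_k\, k\right) \;=\; \cos\!\left(\tfrac{\pi}{\langle N\rangle+2}\right),
\end{equation}
which is the claimed finite-$\langle N\rangle$ bound. The large-$\langle N\rangle$ asymptotic then follows from the Taylor expansion $\cos(x)=1-x^{2}/2+O(x^{4})$ with $x=\pi/(\langle N\rangle+2)$, giving $1-\pi^{2}/(2\langle N\rangle^{2})+O(1/\langle N\rangle^{3})$.

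I do not expect significant obstacles. The only slightly delicate point is the concavity check for $g(k)$, since a naive dimensional sanity check is not enough: one must confirm that both terms in $g''(k)$ are of the same sign, which does hold here because $\cos(\pi/h)$ and $\sin(\pi/h)$ are both nonnegative for $h\ge 2$. A minor bookkeeping point is ensuring that $k=0$ can be legitimately included in the Jensen step; this is automatic because $g(0)=0$ forces $q_0\,g(0)=0$, so extending the sum does not loosen the inequality. Beyond these essentially one-line verifications, the corollary follows immediately from the sector-wise bound of Theorem~\ref{thm: QFI-upper}.
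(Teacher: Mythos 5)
Your proof is correct and follows essentially the same route as the paper: the sector-wise bound $h_k \le q_k\cos\!\left(\frac{\pi}{k+2}\right)$ from Theorem~\ref{thm: QFI-upper} (which indeed needs no cap on $k$), followed by Jensen's inequality against the distribution $\{q_k\}$ with $\langle N\rangle=\sum_k k\,q_k$. Your explicit check that $g(k)=\cos\!\left(\frac{\pi}{k+2}\right)$ is \emph{concave} on $[0,\infty)$ is the correct justification for the direction of the Jensen step; the paper's own proof calls this function ``convex'' while applying the concave form of Jensen's inequality, so your version is the more careful one.
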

\begin{proof}
Defining $q_k=\sum_{n=0}^k|f_{n,k-n}|^2$, the average photon number $\langle N\rangle=\sum_{n,m}|f_{n,m}|^2(n+m)$ can be rewritten as:
$$\langle N\rangle=\sum_{k=0}^{\infty}kq_k $$
And we now solve the following optimization problem 
\begin{align*}
&\text{sup}\sum_{k=0}^{\infty} q_k\cos(\frac{\pi}{k+2}) \\
&\text{s.t. }\sum_{k=0}^{\infty} q_k=1 \text{ and }\sum_{k=0} q_k k=\langle N \rangle
\end{align*}
Since $\cos(\frac{\pi}{k+2})$ is a convex function, from Jensen's inequality ($E(\Phi(x))\le \Phi(E(X))$ for any random variable $X$ and convex function $\Phi$):
$$\sum_{k=0}^{\infty} q_k\cos(\frac{\pi}{k+2})\le \cos(\frac{\pi}{\langle N\rangle+2}) $$
And the inequality saturate if $\langle N\rangle $ is an integer and $q_k= \delta_{k,\langle N\rangle}$
\end{proof}

\section{Discussion of different entanglement-assisted protocols}
\subsection{Single-particle entangled states}
\label{appendixD1}
We begin this section by discussing the Gottesman–Jennewein–Croke (GJC) scheme \cite{Gottesman2012} within our theoretical framework. In this scheme, a single-particle entangled state, $\ket{\psi}_a = \frac{1}{\sqrt{2}}(\ket{0}_A\ket{1}_B + \ket{1}_A\ket{0}_B)$, is used to implement a quantum telescope. Applying Proposition~\ref{prop: assisted-QFI}, it is straightforward to verify that:
\begin{equation}
    \mathbb{h}[\ket{\psi}_a]=\frac{1}{2} \notag
\end{equation}
The QIF ratio bound establishes a fundamental limit for quantum estimation for quantum telescopes employing single-particle entangled states $\ket{\psi}_a$, and it has already been achieved via local, correlated interference in the original quantum telescopy scheme \cite{Gottesman2012, tsang2011}. While \cite{Czupryniak2023} argues that certain nonlinear-optical measurements can increase the classical Fisher information (FI) for particular setups, this does not contradict our result: our quantum Fisher information (QFI) upper bound still applies. The apparent tension disappears once we note that the coherent-control gates proposed in \cite{Czupryniak2023} would have to act at two spatially separated telescopes and are not implementable under a $U(1)$ superselection rule without additional resources that circumvent the SSR. Since under a 
$U(1)$ SSR, both states and allowed operations must be $U(1)$-invariant \cite{Barlett2006}. Consequently, such gates cannot be implemented at both sites when their phase references are uncorrelated.

Although the original GJC scheme cannot be improved by changing only the local operations or measurements, one can enhance performance by changing the ancilla. Next, we analyze a natural extension that shares $N$-copy single-particle entangled states between the telescopes.
\subsection{N-copy single-particle entangled states}
\label{appendixDA}
We note that the use of \textit{N}-copy SPE, $\ket{\psi}_a^{\otimes N} = \frac{1}{\sqrt{2^N}}(\ket{0}_A\ket{1}_B + \ket{1}_A\ket{0}_B)^{\otimes N}$ has been considered previously in \cite{Czupryniak2022, Marchese2023}. However, their approach does not fully resolve the general problem, as we will discuss later. More importantly, the performance of N-copy SPE has not yet been thoroughly analyzed. Applying Proposition~\ref{prop: assisted-QFI}, for the N-copy SPE we obtain:
\begin{equation}
\ket{\psi}^{\otimes N}_a=\sum_{n=0}^{N}\sqrt{\frac{{N\choose n}}{2^N}}\ket{n_A,(N-n)_B} \Rightarrow     \mathbb{h}[\ket{\psi}^{\otimes N}_a]=\frac{N}{N+1},   \notag
\end{equation}
Note that the state $\ket{n_A,(N-n)_B}$ is not a product state, but instead mode-entangled of the form:
\begin{equation}
  \ket{n_A,(N-n)_B}=\frac{1}{\sqrt{{N\choose n}}}\sum_{\pi}\pi(\ket{1}_A^{\otimes n} \ket{0}_A^{\otimes m}\otimes \ket{0}_B^{\otimes n}\ket{1}_B^{\otimes m})
\end{equation}
where $\pi$ represents different joint local permutations on the photon modes (There are $N=n+m$ modes in both telescope A and B, so there are in total ${N\choose n}$ such configurations. ).
\par 

The QFI ratio approaches one as the number of SPE copies increases. In certain special cases, this bound has been saturated using various local measurement schemes, including the estimation of a point source (source with $|g|-1$) with linear optics \cite{Marchese2023}, and the estimation of a general source using nonlinear operations \cite{Czupryniak2022}. Here, we provide brief comments on these two results.

In \cite{Czupryniak2022}, a phase extraction scheme was proposed in Section IV, where it was argued that their scheme could be done with only linear-optical elements. However, we found out that the operation required to transform Eq. 51 to Eq. 52 in their paper cannot be implemented with linear optics. More specifically, linear-optical operations are not capable of performing quantum Fourier transformations over different $N$ photon states for $N>1$ (i.e., $\hat{a}_2\hat{a}_3 \nRightarrow \frac{1}{\sqrt{3}}(\hat{a}_2\hat{a}_3+\hat{a}_1\hat{a}_3+\hat{a}_1\hat{a}_2)$ is not allowed), since the linear optics operations should not only be linear but also be described as a unitary operation on individual creation operator.(i.e., $\hat{a}_3 \Rightarrow \frac{1}{\sqrt{3}}(\hat{a}_3+\hat{a}_1+\hat{a}_2)$ is allowed). However, if one allows LOCC beyond passive linear optics, the phase-extraction in \cite{Czupryniak2022} can estimate both $\theta$ and $|g|$, saturating the Quantum Fisher information ratio we report here. 
\par 

In \cite{Marchese2023}, a protocol is presented that employs only linear optics to estimate the phase $\theta$ of a point source (with $|g|=1$). This protocol achieves a Fisher information ratio for estimating $\mathbb{f}^N_{\theta}$ that scales as $1 - \frac{1}{N+1}$, where $N$ is the number of copies of the shared single-particle entangled state (SPE). In this work, we extend their results by explicitly computing the Fisher information for estimating both $|g|$ and $\theta$ using their scheme for $N \in \{1, \ldots, 5\}$.
\begin{align}
&\mathbb{F}_{\theta}^1=\frac{1}{2}\frac{\sin^2\phi}{1-|g|^2\cos^2\phi}|g|^2\epsilon\notag \\ &\mathbb{F}_{|g|}^1=\frac{1}{2}\frac{\cos^2\phi}{1-|g|^2\cos^2\phi}\epsilon \notag\\
&\mathbb{F}_{\theta}^2=\frac{3\sin^2\phi}{(1-|g|\cos\phi)(5+4|g|\cos\phi)}|g|^2\epsilon
\notag \\
& \mathbb{F}_{|g|}^2=\frac{3\cos^2\phi}{(1-|g|\cos\phi)(5+4|g|\cos\phi)}\epsilon\notag\\
&\mathbb{F}_{\theta}^3=\frac{3\sin^2\phi(9+7|g|\cos\phi)}{4(1-|g|^2\cos^2\phi)(10+6|g|\cos\phi)}|g|^2\epsilon\notag \\
&\mathbb{F}_{|g|}^3=\frac{3\cos^2\phi(9+7|g|\cos\phi)}{4(1-|g|^2\cos^2\phi)(10+6|g|\cos\phi)}\epsilon \notag\\
&\mathbb{F}_{\theta}^4=\frac{10\sin^2\phi(16+9|g|\cos\phi)}{(1-|g|\cos\phi)(13+12|g|\cos\phi)(17+8|g|\cos\phi)}|g|^2\epsilon\notag \\
&\mathbb{F}_{|g|}^4=\frac{10\cos^2\phi(16+9|g|\cos\phi)}{(1-|g|\cos\phi)(13+12|g|\cos\phi)(17+8|g|\cos\phi)}\epsilon \notag\\
&\mathbb{F}_{\theta}^5=\frac{5\sin^2\phi(79+106|g|\cos\phi+31|g|^2\cos^2\phi)}{(1-|g|^2\cos^2\phi)(26+10|g|\cos\phi)(20+16|g|\cos\phi)}|g|^2\epsilon\notag \\
&\mathbb{F}_{|g|}^5=\frac{5\cos^2\phi(79+106|g|\cos\phi+31|g|^2\cos^2\phi)}{(1-|g|^2\cos^2\phi)(26+10|g|\cos\phi)(20+16|g|\cos\phi)}\epsilon
\label{eq: FI-point}
\end{align}
where $\phi=\delta-\theta$ with $\delta$ being the phase delay used in the measurement. In \cite{Marchese2023}, they showed implicitly that $\phi=0$ is the optimal point to get the optimal Fisher information ratio $\mathbb{f}^N_{\theta}=1-\frac{1}{N+1}$ for a point source with $|g|=1$, which agrees with Eq.~\ref{eq: FI-point} above. However, one could easily check that $\mathbb{f}^N_{\theta}=0$ at $\phi=0$ for a source with $|g|<1$ (which is almost always the case in astronomical interferometric imaging), in fact the Fisher information ratio scaling $1-\frac{1}{N+1}$ is an artifact that occurs because $|g|=1, \phi=0$ is a removable singular point. As some concrete examples, one can verify that:
\begin{align}
\mathbb{F}_{\theta}^2&=\frac{3\sin^2\phi}{(1-|g|\cos\phi)(5+4|g|\cos\phi)}|g|^2\epsilon \notag \\
&\stackrel{\text{$|g|=1$, $\phi\rightarrow 0$}}{\Longrightarrow} \frac{3\phi^2}{(\frac{1}{2}\phi^2)(5+4)}\epsilon =\frac{2}{3}|g|^2\epsilon  \\
\mathbb{F}_{\theta}^2&=\frac{3\sin^2\phi}{(1-|g|\cos\phi)(5+4|g|\cos\phi)}|g|^2\epsilon \notag \\
&\stackrel{\text{$|g|<1$, $\phi\rightarrow 0$}}{\Longrightarrow} 0\\
\mathbb{F}_{\theta}^2&=\frac{3\sin^2\phi}{(1-|g|\cos\phi)(5+4|g|\cos\phi)}|g|^2\epsilon \notag \\
&\stackrel{\text{$|g|\rightarrow 0$, $\phi\rightarrow \frac{\pi}{2}$}}{\Longrightarrow} \frac{3}{5}|g|^2\epsilon 
\end{align}
Unfortunately, we don't have a close-form answer on the achievable Fisher information ratio $\mathbb{f}^N_{\theta}$ and $\mathbb{f}^N_{|g|}$ using their scheme\cite{Marchese2023}, because the computational complexity of their protocols (And in fact, any protocols that involves $N$-copy SPE states and linear-optical circuit $\mc{U}$) is essentially related to the computational complexity of Boson sampling\cite{Ticky2010}.  However, from our observations, we conjecture that:
\begin{conjecture}
linear-optical implementation of $N$-copy SPE achieves optimal quantum telescopy asymptotically, however, the FI ratio $\mathbb{f}_{\mu}^N$ does not saturate the QFI ratio  $\mathbb{h}^N$ and scales $\mathbb{f}_{\mu}^N \approx 1-\frac{2}{N+3}\stackrel{\text{Large $N$}}{\approx} 1-\frac{2}{N}$ for $\mu\in [|g|, \theta]$
\end{conjecture}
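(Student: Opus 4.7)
The plan is to extend the closed-form computations in Eq.~\eqref{eq: FI-point} to all $N$ and then extract the large-$N$ asymptotics of the optimized Fisher-information ratio. First, I would recast the linear-optical scheme of~\cite{Marchese2023} as a single passive unitary acting on the $N{+}1$ modes at each telescope (the $N$ ancilla modes plus the source mode), followed by photon-number-resolving detection. Because the source contributes at most one photon to leading order in $\epsilon$, the outcome probabilities are quadratic polynomials in $|g|\cos(\phi-\theta)$ and $|g|\sin(\phi-\theta)$, so it suffices to evaluate $\mathbb{F}^N_\theta$ at $\phi=\theta+\pi/2$ (the saddle point for the weak-source regime consistent with astronomical imaging) and then recover $\mathbb{F}^N_{|g|}$ at $\phi=\theta$ by the same $\sin\leftrightarrow\cos$ symmetry visible in Eq.~\eqref{eq: FI-point}.

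Second, I would exploit the Schwinger-boson/symmetric-subspace structure of the ancilla. The state $(\ket{\psi}^{\text{GJC}}_a)^{\otimes N}$ is invariant under simultaneous permutations of the $N$ ancilla mode pairs, and when the interferometer is chosen permutation-symmetric on the ancilla modes (as in~\cite{Marchese2023}), the dynamics stays confined to the totally symmetric $(N{+}1)$-dimensional subspace, where the ancilla becomes a spin-$N/2$ coherent state on the equator and the interferometer acts as an $SU(2)$ rotation. In that sector the relevant permanents collapse to Wigner $d^{N/2}_{m,m'}(\phi)$ amplitudes, so the Fisher information reduces to a finite sum that can be evaluated in closed form. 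I would then verify the conjectured pattern $\mathbb{f}^N_\mu \approx 1-\tfrac{2}{N+3}$ against the five explicit cases in Eq.~\eqref{eq: FI-point} (at the $|g|\to 0$ limit where the conjecture is sharpest), and prove the formula by induction on $N$ using recursions on the $d^{N/2}$ coefficients.

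Third, the large-$N$ asymptotic $\mathbb{f}^N_\mu = 1 - 2/N + O(N^{-2})$ would follow from the standard semiclassical expansion of $d^{N/2}_{m,m'}(\phi)$ about the equator, which produces Gaussian envelopes of width $\sim 1/\sqrt{N}$ and a $1/N$ correction -- twice the $1/(2N)$ correction that would saturate the QFI bound $\mathbb{h}^N=1-1/(N+1)$. The main obstacle, as the authors explicitly flag, is the boson-sampling hardness of outcome probabilities once one steps outside the symmetric sector: a rigorous proof must rule out the possibility that some permutation-asymmetric linear-optical unitary outperforms the symmetric choice, which is a minimax problem over the full linear-optical group and does not obviously reduce to the tractable $SU(2)$ sector. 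A secondary subtlety is that $1-\tfrac{2}{N+3}$ matches the finite-$N$ data exactly only at $|g|\to 0$; for generic $|g|$ the ratio acquires $|g|$-dependent corrections, and one must argue (either by a perturbative expansion in $|g|$, or by stationary-phase analysis of the Wigner-function sum) that the leading $2/N$ gap against the QFI bound is uniform in $|g|<1$.
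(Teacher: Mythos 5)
The statement you are addressing is left as an \emph{open conjecture} in the paper: the authors' only support for it is the set of closed-form Fisher informations for $N\in\{1,\dots,5\}$ in Eq.~\eqref{eq: FI-point}, evaluated at $\phi=0$ for $|g|$ and $\phi=\pi/2$ for $\theta$, together with the explicit admission that a general-$N$ closed form is blocked by the boson-sampling-like complexity of the outcome probabilities. So there is no paper proof to compare against; what you have written is a research program for settling the conjecture, and it must be judged as such rather than as a completed argument.

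As a program it identifies the right structural tool (permutation symmetry of the $N$-copy ancilla) but has two concrete gaps. First, the reduction to a single spin-$N/2$ sector with Wigner $d^{N/2}_{m,m'}$ amplitudes is not as clean as claimed: at each telescope the state lives on $N$ ancilla modes \emph{plus} one distinguished source mode, so even with a permutation-symmetric interferometer the relevant structure is (symmetric ancilla sector) tensored with the source mode at each of two sites --- four collective modes, not two --- and photon-number-resolving detection on the individual output ports is not a measurement that factors through the symmetric subspace; the probabilities therefore do not collapse to single $d$-matrix elements without a further coarse-graining argument. Second, and more fundamentally, the conjecture concerns the \emph{best achievable} FI over all linear-optical implementations, so a proof must optimize over the full passive-unitary group; you flag this minimax problem yourself but supply no mechanism for excluding permutation-asymmetric circuits, and that is precisely the obstruction the authors cite when they decline to prove the statement. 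Your observation that $1-\tfrac{2}{N+3}$ matches the finite-$N$ data exactly only as $|g|\to 0$ is correct and important --- e.g.\ for $N=2$ at $\phi=0$ the ratio is $\tfrac{3(1+|g|)}{5+4|g|}$, which interpolates between $\tfrac{3}{5}=1-\tfrac{2}{5}$ at $|g|=0$ and the QFI value $\tfrac{2}{3}$ at $|g|=1$ --- so the conjecture needs a precise quantifier over $|g|$ (and over the tuning parameter $\phi$, which cannot be set to the optimal value for both parameters simultaneously) before any induction can even be set up.
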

One could plug in $\phi=0$ for $\mathbb{f}_{|g|}^N$ and $\phi=\pi/2$ for $\mathbb{f}_{\theta}^N$ to Eq.~\ref{eq: FI-point} to check our conjecture. 
\subsection{KLM and modified KLM schemes}
\label{appendixDB}
As established in the previous section, an optimal ancilla state (for a given total photon budget) has the form:
$\ket{\psi}_a=\sum_{n=0}^Nf_{n}\ket{n_A,(N-n)_B}$. 
Accordingly, from now on, we consider shared ancilla states
\begin{align}
\ket{\psi}_a&=\sum_{n=0}^Nf_{n}\ket{n_A,(N-n)_B} \label{eq:LOQC source}\\
&=\sum_{n=0}^Nf_{n}\ket{1}^{\otimes n}_A\ket{0}^{\otimes N-n}_A\ket{0}^{\otimes n}_B\ket{1}^{\otimes N-n}_B.\notag 
\end{align}
And the QFI ratio in proposition~\ref{prop: assisted-QFI} gets simplified to:
\begin{equation}
 \mathbb{h}[\rho^{AB}_a]=\sum_{n=1}^N\frac{2|f_{n}|^2|f_{n-1}|^2}{|f_{n}|^2+|f_{n-1}|^2}     
\end{equation}
\blk

In this subsection, we discuss schemes inspired by teleportation tricks used in linear-optical quantum computation.~\cite{Knill2001, Franson2002}. In the original paper by Knill, a $2N$-mode state ($N$ modes on Alice and $N$ modes on Bob\blk) of the following form is prepared
$$\ket{\psi}^{\text{KLM}}_a=\frac{1}{\sqrt{N+1}}\sum\limits_{n=0}^N\ket{1}^{\otimes n}_A\ket{0}^{\otimes N-n}_A\ket{0}^{\otimes n}_B\ket{1}^{\otimes N-n}_B.$$
We will provide a detailed discussion of the implementation of state teleportation using shared entangled states of the above form in a later section. Briefly, in the KLM teleportation protocol, teleportation succeeds perfectly except with failure probability $\frac{1}{N+1}$, which occurs when no photon arrives at either telescope A or B. Applying Proposition~\ref{prop: assisted-QFI} to this scenario, we obtain: 
\begin{equation}
\mathbb{h}[\ket{\psi}_a^{\text{KLM}}]=\frac{N}{N+1},     
\end{equation}
which scales the same as the $N$-copy SPE; however, as we will show in a later section, this bound is achievable using LOCC and linear optics. 
\par 
A natural question is whether other shared entangled states (i.e., different choices of amplitudes $f_n$)can yield improved teleportation and hence a larger QFI ratio. Related questions have been studied previously, with performance quantified by average teleportation fidelity (Franson 2002). The intuition behind these protocols is to reduce the failure probability of the teleportation protocol (to be less than $\frac{1}{N+1}$) by allowing minor errors upon successful events.

It is important to note that failure events occur precisely when no photon reaches either telescope A or B. Thus, a natural amplitude-shaping strategy is to suppress $f_0$ and $f_{N}$. Motivated by this, we analyze two modified KLM families: one with triangular intensity $|f_n|^2$ and one with triangular amplitude $|f_n|$

In the "triangular intensity" modification, we take $$\ket{\psi}^{\text{Itri}}_a=\sum\limits_{n=0}^Nf^{\text{Itri}}_n\ket{1}^{\otimes n}_A\ket{0}^{\otimes N-n}_A\ket{0}^{\otimes n}_B\ket{1}^{\otimes N-n}_B$$
with $|f^{\text{Itri}}_n|^2\propto \frac{N}{2}-|\frac{N}{2}-n|$, i.e., the intensity of different terms shaped as a triangle. We assume $N$ as an even number for simplicity (the odd number case follows similarly), we have $|f^{\text{Itri}}_n|^2=\frac{4}{N^2}(\frac{N}{2}-|\frac{N}{2}-n|)$, which gives:
\begin{align}
&\mathbb{h}[\ket{\psi}^{\text{Itri}}_a]=\frac{16}{N^2}\sum_{n=1}^{N/2}\frac{i(i-1)}{i+i-1}=\frac{16}{N^2}\sum_{n=1}^{N/2}(\frac{i}{2}-\frac{1}{4}-\frac{1}{4}\frac{1}{2i-1}) \notag \\
&=1-\frac{4}{N^2} \sum_{n=1}^{N/2}\frac{1}{2i-1}\stackrel{\text{Large $N$}}{\approx} 1-2\frac{\log(N)}{N^2},  
\end{align}
where we used the fact that $\sum_{n=1}^{N/2}\frac{1}{2i-1}\approx \frac{1}{2}(\log(2N)+\gamma)$. \\

Similarly, in the "triangular amplitude," we could modify the amplitude as
$$\ket{\psi}^{\text{Atri}}_a=\sum\limits_{n=0}^Nf^{\text{Atri}}_n\ket{1}^{\otimes n}_A\ket{0}^{\otimes N-n}_A\ket{0}^{\otimes n}_B\ket{1}^{\otimes N-n}_B$$
where $|f^{\text{Atri}}_n|\propto \frac{N}{2}-|\frac{N}{2}-n|$, i.e., the amplitude of different terms shaped as a triangle. Where again, take $N$ to be an even number for simplicity, we have $|f^{\text{Atri}}_n|^2=\frac{12}{N(N^2+2)}(\frac{N}{2}-|\frac{N}{2}-n|)^2$, which gives:
\begin{align}
&\mathbb{h}[\ket{\psi}^{\text{Atri}}_a]=\frac{48}{N(N^2+2)}\sum_{n=1}^{N/2}\frac{i^2(i-1)^2}{i^2+(i-1)^2}
\notag \\
&=\frac{48}{N(N^2+2)}\sum_{n=1}^{N/2}(\frac{i^2}{2}-\frac{i}{2}-\frac{1}{4}+\frac{1}{4}\frac{1}{2i^2-2i+1})\notag \\
&\stackrel{\text{Large $N$}}{\approx} 1-12\frac{1}{N^2},  
\end{align}
This achieves a better asymptotic QFI ratio and is worse than the analytic upper bound only by a constant factor. However, as shown in Fig. 2, for small $N$, the intensity-triangle (first modified KLM) performs better.

\subsection{Optimal-KLM scheme}
\label{appendixDC}
We now present the best scheme we identified; numerically, it is very close to the optimum. The shared entangled state considered here takes the form
\begin{equation}
    \ket{\psi}^{\text{Opt}}_a=\sqrt{\frac{2}{N+2}}\sum\limits_{n=0}^N\sin(\frac{n+1}{N+2}\pi)\ket{1}^{\otimes n}_A\ket{0}^{\otimes N-n}_A\ket{0}^{\otimes n}_B\ket{1}^{\otimes N-n}_B, 
\end{equation}
which is inspired by the state used to obtain the analytical upper bound and has appeared in several other quantum information tasks \cite{Barlett2007, wiseman1997, Barlett2006}. Recall that in Lemma~\ref{lem: Cheb}, we showed that a state vector of the above form maximizes the quantity $\sum_{n=1}^N |f_n f_{n-1}|$. However, the quantity of interest here is the quantum Fisher information ratio:
$$ \mathbb{h}[\rho^{AB}_a]=\sum_{n=1}^N\frac{2|f_{n}|^2|f_{n-1}|^2}{|f_{n}|^2+|f_{n-1}|^2}  $$
Plugging $|f_{n}|=\sqrt{\frac{2}{N+2}}\sin(\frac{n+1}{N+2}\pi)$ in, we have:
\begin{align}
&\mathbb{h}[\rho^{AB}_a]=\frac{4}{N+2}\sum_{n=1}^N\frac{\sin^2(\frac{n+1}{N+2}\pi)\sin^2(\frac{n}{N+2}\pi)}{\sin^2(\frac{n+1}{N+2}\pi)+\sin^2(\frac{n}{N+2}\pi)}\notag \\
&=\frac{1}{N+2}\sum_{n=1}^N\frac{[\cos(\frac{\pi}{N+2})-\cos(\frac{2n+1}{N+2}\pi)]^2}{1-\cos(\frac{\pi}{N+2})\cos(\frac{2n+1}{N+2}\pi)} \notag\\
&= \frac{\cos(\frac{\pi}{N+2})}{N+2}\sum_{n=1}^N[\cos(\frac{\pi}{N+2})-\cos(\frac{2n+1}{N+2}\pi)]\notag \\
&-\frac{\sin^2(\frac{\pi}{N+2})}{N+2}\sum_{n=1}^N\frac{\cos(\frac{2n+1}{N+2}\pi)[\cos(\frac{1}{N+2}\pi)-\cos(\frac{2n+1}{N+2}\pi)]}{1-\cos(\frac{\pi}{N+2})\cos(\frac{2n+1}{N+2}\pi)}
\end{align}
where the first term can be simplified, using the identity $\sum_{n=0}^{N+1}\cos(\frac{2n+1}{N+2}\pi)=0$, thus
$$\frac{\cos(\frac{\pi}{N+2})}{N+2}\sum_{n=1}^N[\cos(\frac{\pi}{N+2})-\cos(\frac{2n+1}{N+2}\pi)]=\cos^2(\frac{\pi}{N+2}).$$
Moreover, the coefficient $\frac{\sin^2\left(\frac{\pi}{N+2}\right)}{N+2} = O\left(\frac{1}{N^3}\right)$ in the second term, while the remaining factor $\sum_{n=1}^N\frac{\cos(\frac{2n+1}{N+2}\pi)[\cos(\frac{1}{N+2}\pi)-\cos(\frac{2n+1}{N+2}\pi)]}{1-\cos(\frac{\pi}{N+2})\cos(\frac{2n+1}{N+2}\pi)}$ converge to a negative constant independent of $N$ (numerically, it converges to $-\frac{11\pi}{12}$ ). \par 
Thus, we conclude that:
\begin{equation}
\mathbb{h}[\rho^{AB}_a]\approx \cos^2(\frac{\pi}{N+2}) \stackrel{\text{Large $\langle N\rangle$}}{\approx} 1-\frac{\pi^2}{N^2}
\end{equation}
\subsection{Two-mode squeezed vacuum state}
\label{appendixDD}
The two-mode squeezed vacuum state (TMSV) has served as the standard bipartite entangled state in continuous-variable systems \cite{Weedbrook2012, Braunstein2005}. It was pointed out in the original quantum telescopy paper by Gottesman \cite{Gottesman2012} and later in several other works \cite{wang2023astronomical, huang2024} that continuous-variable quantum teleportation \cite{Vaidman1994, Braunstein1998} using TMSV can facilitate quantum telescopy. \par 
A TMSV state is written as 
\begin{equation} \ket{\psi}^{\text{CV}}_a=\frac{1}{\cosh(r)}\sum\limits_{n=0}^N \tanh^n{r}\ket{n}_A\ket{n}_B\end{equation}\\
where the squeezing parameter $r$ is related to the total average photon number $\langle N\rangle=2\sinh^2(r)$, so equivalently, one could also express it as:
\begin{equation}\ket{\psi}^{\text{CV}}_a=\sum\limits_{n=0}^{\infty} \frac{\sqrt{2\langle N\rangle^{n}}}{\sqrt{(2+\langle N\rangle)^{n+1}}}\ket{n}_A\ket{n}_B]\end{equation}
At first glance, proposition~\ref{prop: assisted-QFI} implies that $$\mathbb{h}[\ket{\psi}^{\text{CV}}_a] = 0,$$ 
which seemingly contradicts the CV protocol introduced in \cite{wang2023astronomical, huang2024}, but aligns with the recent argument in \cite{purvis2024}, where TMSV is utilized, but CV-teleportation is not implemented.

However, this apparent contradiction can be resolved by closely examining the standard continuous-variable quantum teleportation protocol \cite{Braunstein1998}, which involves three key steps:
\begin{itemize} \item[(1)] Mixing the state (to be teleported) with one mode of the TMSV on a beam splitter. \item[(2)] Applying a homodyne measurement to the joint state after mixing. \item[(3)] Sending the measurement outcomes of the homodyne measurement $(x_1, p_2)$ (two quadratures at two outputs) is sent to the other mode of TWSQ, where a displacement operator $\mc{D}(x_1,p_2)$ is implemented correspondingly. \end{itemize} 
Crucially, steps (2)–(3) require a shared phase reference so that both homodynes and the displacement are phase-aligned. As emphasized in \cite{Furusama1998, Rudolph2001}. More precisely, an extra pair of correlated coherent state $\ket{\alpha}_A\ket{\alpha}_B$ needs to be shared, such that Homodyne measurement at step (2) and displacement operator at step (3) can be done (note that this is different from the classical information send from A to B, but an extra pre-shared correlated coherent state (or phase-locked synchronized coherent state) to ensure the Homodyne measurement and displacement operator are performed with the same phase reference).\par 
Including this extra classical correlated coherent state, the overall ancilla state used in the CV teleportation protocol is:
\begin{equation}
    \ket{\psi}^{\text{CV}}_a=\frac{1}{\cosh(r)}\sum\limits_{n=0}^N \tanh^n{r}\ket{n}_A\ket{n}_B\ket{\alpha}_A\ket{\alpha}_B,
\end{equation}
which admits a non-vanishing QFI ratio according to Proposition~\ref{prop: assisted-QFI}.

To avoid the messy calculation (ideal CV teleportation corresponds to $|\alpha|\rightarrow \infty$), here we use results derived in~\cite{wang2023astronomical} to compare its performance. It was shown in~\cite{wang2023astronomical} that: 
\begin{widetext}
\begin{align}
& \mathbb{f}_{\theta}=\frac{2\epsilon^2|g|^2}{2y+\epsilon(2+\epsilon-\epsilon|g|^2+2y)}\\
&\mathbb{f}_{|g|}=\frac{2\epsilon^2[-\epsilon(2+\epsilon)^2+\epsilon^3|g|^4-4(1+\epsilon)(2+\epsilon)y-4(2+\epsilon)y^2]}{[\epsilon(-1+|g|^2)-2y][\epsilon(-2-\epsilon+\epsilon|g|^2)-2(1+\epsilon)y][\epsilon^2(-1+|g|^2)-4(1+y)-2\epsilon(2+y)]}
\end{align}    
\end{widetext}
Where $y$ is the squeezing parameter $y=2e^{-2r}$. Replacing $y=2(\sqrt{\frac{\langle N\rangle}{2}}+\sqrt{\frac{\langle N\rangle}{2}+1})^{-2}$, as $\langle N\rangle\rightarrow \infty$ we have, $y=\frac{1}{N}$ and thus:
\begin{equation}
  \mathbb{h}[\ket{\psi}^{\text{CV}}_a]\ge \mathbb{f}[\ket{\psi}^{\text{CV}}_a] \stackrel{\text{ $N\gg \frac{1}{\epsilon}$}}{\approx} 1-\frac{1}{\epsilon \langle N\rangle}, 
\end{equation}
Which is worse than the optimal scheme we proposed in the above sections, even if we grant an unbounded correlated coherent state $\ket{\alpha}_A\ket{\alpha}_B$.

We also note that this FI ratio result above does not contradict the following subsections on correlated coherent state, where the QFI ratio for that scheme is discussed. Here, we present an explicit attainable Fisher information, which, in general, only serves as a lower bound on the QFI ratio. 
\blk

\subsection{Two particle entangled state, and NOON state}
\label{appendixDE}
In this section, we utilize our framework to address a common question in the community: Can a standard two-photon, four-mode entangled state aid in quantum telescopy? \par 
A two-photon entangled state can be defined as : 
\begin{align}
    \ket{\psi}^{\text{TPE}}_a &= \frac{1}{\sqrt{2}}(\ket{1}_{A_1}\ket{0}_{A_2}\ket{0}_{B_1}\ket{1}_{B_2} + \ket{0}_{A_1}\ket{1}_{A_2}\ket{1}_{B_1}\ket{0}_{B_2})\\
    &=\ket{1_A,1_B}\in \mc{H}_1^A\otimes \mc H^B_1 \notag, 
\end{align}
This state represents a form of dual-rail encoding in quantum information theory \cite{Kok2007} (while the single-particle entangled state is referred to as single-rail encoding). However, by Proposition~\ref{prop: assisted-QFI}, this state yields zero QFI ratio.

The reason is simple: the standard two-photon entangled state is not a phase-reference (PR) resource, so it cannot establish a shared phase reference. This distinction between nonlocal (entangled) resources and phase-reference resources was already emphasized in \cite{Schuch2004}.
\par 
A natural misconception is that discarding one local mode “activates” a PR resource. However, this is not the case. Specifically, tracing out mode $A_2$ and $B_2$ produces an \emph{incoherent mixture} on $A_1B_1$,
$\tr_{A_2B_2}\big[\op{\psi}{\psi}|^{\text{TPE}}_a\big]
=\frac{1}{2}\big(|10\rangle\langle10|_{A_1B_1} + |01\rangle\langle01|_{A_1B_1}\big),$
not the single-particle entangled state; consequently, the QFI remains zero, agreeing with the fact that QFI is non-increasing under postprocessing. 
\par 

Following the same logic, we briefly examine the NOON state:
\begin{equation}
\ket{\psi}^{\text{NOON}}=\frac{1}{\sqrt{2}}\ket{N}_A\ket{0}_B+\ket{0}_A\ket{N}_B.
\end{equation}
For $N\ge2$, we could quickly check that $\mathbb{h}[\ket{\psi}^{\text{NOON}}]=0$, for the same reason as above: there is no coherence between adjacent local-number sectors. There is also no known deterministic conversion of a NOON state to single-particle entanglement, nor other useful resource state for quantum telescopy.
\blk
\subsection{Correlated coherent states}
\label{appendixDF}
In contrast to the previous subsection (where a nonlocal state fails to supply a PR resource), here we give an example of a state that does supply a PR resource but is not entangled. This example can be used to expose an FI–QFI gap for quantum telescopy under local operations and classical communications (LOCC).
\par 
\blk
A correlated coherent state—long known as a classical PR resource—is formally defined as:
\begin{align}
\ket{\psi}^{\text{Coh}}_a=\ket{\alpha}_A\ket{\alpha}_B=\sum_{n,m=0}^{\infty}\frac{e^{-|\alpha|^2}\alpha^{n+m}}{\sqrt{n!m!}}\ket{n}_A\ket{m}_B
\end{align}
It is straightforward to see that the QFI ratio $\mathbb{h}[\ket{\psi}^{\text{Coh}}_a$] is nonzero. With $f(n,m)=\frac{e^{-|\alpha|^2}\alpha^{n+m}}{\sqrt{n!m!}}$, let $r=|\alpha|^2$ one has
\begin{align}
 \mathbb{h}[\ket{\psi}^{\text{Coh}}_a]=&\sum_{n,m=1}^{\infty}\frac{2}{(n-1)!(m-1)!(n+m)}e^{-2r}r^{(n+m-1)}\notag \\
 \stackrel{s=n+m-2}{=}& e^{-2r}\sum_{s=0}^{\infty}\sum_{q=0}^s\frac{2r^{s+1}}{q!(s-q)!(s+2)}\notag \\
{=}& e^{-2r}\sum_{s=0}^{\infty}\frac{(2r)^{s+1}}{s!(s+2)}\notag 
 =1-\frac{1-e^{-2r}}{2r}\notag \\
 \stackrel{\text{Large $r$}}{\approx}&1-\frac{1}{2r}=1-\frac{1}{\langle N\rangle}
\end{align}
where in the third equality, we use the identity $\sum_{q=0}^s\frac{s!}{q!(s-q)!}=2^s$, and in the fourth equality we use the identity $\frac{1}{s+2}=\int_0^1t^{s+1}dt$, we have $\sum_{s=0}^{\infty}\frac{(2r)^s}{s!(s+2)}=\int_{0}^1dt 2rt\sum_{s=0}^{\infty}\frac{(2rt)^s}{s!}=\int_{0}^1dt 2rte^{2tr}={e^{2r}}+\frac{1-e^{2r}}{(2r)}$.  

Importantly, this does not contradict the fact that coherent states with homodyne detection achieve FI scaling only as $O(\epsilon^2)$\cite{tsang2011} (i.e., a FI ratio scales $O(\epsilon)$). Instead, it illustrates that LOCC cannot always saturate the QFI in genuinely nonlocal estimation tasks, where nonlocal measurement is required. 
\par 
Thus, to be useful for quantum telescopy, a shared ancilla should be checked for both properties: (i) PR resource and (ii) nonlocal (entangled) resource. Entanglement is crucial to enable effective nonlocal measurements that attain the QFI. Effectively, the entanglement (nonlocality) in the state is essential to implement a nonlocal measurement that attains the QFI. 

Similar to what has already been proved in \cite{tsang2011}, we can prove:
\begin{proposition}
 If $[\rho_a^{AB}]^{T_A}>0$ (positive partial transpose), the FI ratio (instead of QFI ratio) will always go to 0.
\end{proposition}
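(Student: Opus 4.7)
The plan is to combine the partial-transpose identity with Schur-complement positivity to bound the Fisher information of an arbitrary LOCC POVM directly. Since every LOCC effect is separable, I would write $E_i=\sum_k A_{i,k}\otimes B_{i,k}$ with $A_{i,k},B_{i,k}\ge 0$ acting on each party's joint source-plus-ancilla space. Expanding $p_i=\tr[E_i(\rho_s^{AB}\otimes\rho_a^{AB})]$ in $\epsilon$ and noting that the $|g|$-dependence sits entirely in the off-diagonal coherence of $\rho_s^{(1)}$, one finds, to leading order,
\begin{equation}
 \partial_{|g|}p_i=\tfrac{\epsilon}{2}\bigl(e^{i\theta}\tr[\langle 10|E_i|01\rangle_s\,\rho_a^{AB}]+\text{c.c.}\bigr)+O(\epsilon^2),
\end{equation}
where $\langle s'|E_i|s\rangle_s$ denotes the source-basis matrix element of $E_i$, viewed as an operator on the ancilla modes.

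To bound the coherence trace, I would apply the partial-transpose identity on Alice's full subsystem,
\begin{equation}
\tr[\langle 10|E_i|01\rangle_s\,\rho_a^{AB}]=\tr[\langle 00|E_i^{T_A}|11\rangle_s\,\rho_a^{AB,T_A}],
\end{equation}
observing that $E_i^{T_A}=\sum_k A_{i,k}^T\otimes B_{i,k}\ge 0$ because each $A_{i,k}^T\ge 0$. The PPT hypothesis now enters in two places: $\rho_a^{AB,T_A}\ge 0$ is a genuine positive operator, and the $2\times 2$ source block of $E_i^{T_A}$ on $\{|00\rangle,|11\rangle\}_s$ is positive, so the Schur-complement inequality yields
\begin{equation}
\bigl|\tr[\langle 00|E_i^{T_A}|11\rangle_s\,\rho_a^{AB,T_A}]\bigr|^2\le p_i^{00}\cdot p_i^{11},
\end{equation}
with $p_i^{nm}:=\tr[E_i(|nm\rangle\langle nm|_s\otimes\rho_a^{AB})]$; the diagonal source blocks map into $p_i^{00}$ and $p_i^{11}$ because they are invariant under the ancilla partial transpose. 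Combining with the elementary positivity bound $p_i\ge(1-\epsilon)p_i^{00}$ and using POVM completeness $\sum_i p_i^{11}=1$, one obtains
\begin{equation}
F=\sum_i\frac{|\partial_{|g|}p_i|^2}{p_i}\le\frac{\epsilon^2}{1-\epsilon}\sum_i p_i^{11}=O(\epsilon^2),
\end{equation}
so the FI ratio $F/\mbb{H}_{|g|}[\rho_s^{AB}]=O(\epsilon)$ vanishes as $\epsilon\to 0$; the $\theta$-direction is handled identically.

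The main obstacle I foresee is structural rather than algebraic: the Schur bound collapses automatically when $p_i^{00}=0$ (forcing $\partial_{|g|}p_i=0$ there), but for infinite-dimensional ancillas---the coherent-state case being the motivating example---one must verify convergence of the relevant traces and of the POVM-completeness sum $\sum_i p_i^{11}=1$, and rigorously justify the separable Kraus decomposition for continuous-outcome POVMs such as homodyne. The conceptual payoff is that the argument pinpoints why PPT is the dividing line: the partial-transpose trick converts the coherence between the single-photon source sectors $|01\rangle_s$ and $|10\rangle_s$ into a vacuum-versus-two-photon overlap, which is doubly suppressed in $\epsilon$ and kills the $O(\epsilon)$ FI contribution that NPT ancillas (such as single-particle entangled states in the GJC scheme) are able to exploit.
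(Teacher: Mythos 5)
Your proof is correct, and it is essentially a self-contained reconstruction of the argument that the paper outsources to a citation. The paper's own proof is two lines: it writes the LOCC measurement as a convex combination of local unitaries followed by a partial trace over the ancilla, observes that the resulting effective POVM on the source inherits positivity under partial transpose from $[\rho_a^{AB}]^{T_A}>0$, and then invokes Tsang's theorem (Ref.~\cite{tsang2011}) that PPT measurement operators yield Fisher information $O(\epsilon^2)$. You instead prove that last step from scratch: the identity $\tr[\langle 10|E_i|01\rangle_s\,\rho_a]=\tr[\langle 00|E_i^{T_A}|11\rangle_s\,\rho_a^{T_A}]$ is exactly the mechanism behind Tsang's bound (it converts the single-photon coherence into a vacuum--two-photon overlap), and your Gram-matrix/Cauchy--Schwarz step $|G_{01}|^2\le G_{00}G_{11}$, which needs \emph{both} $E_i^{T_A}\ge 0$ (from separability of LOCC effects) and $\rho_a^{AB,T_A}\ge 0$ (the hypothesis), correctly isolates where PPT enters. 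Your route is in one respect cleaner than the paper's: bounding over all separable POVMs is a standard, rigorous superset of LOCC, whereas the paper's ``most general POVM'' parametrization via local unitaries and a partial trace is stated without justification. Two small points: the measurement should also respect the local SSR, but since that only shrinks the class of allowed POVMs your upper bound over all separable effects still applies; and the infinite-dimensional/continuous-outcome technicalities you flag (convergence of $\sum_i p_i^{11}=1$, separable decompositions of homodyne POVMs) are genuine but standard and do not affect the structure of the argument.
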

\begin{proof}
When an ancilla state $\rho_A$ is provided, the most general POVM that one could implement on the source can be written as a convex combination of the ones below. 
\begin{align}
&E(y)=\tr_a[U_{sa}^{A}(y)\otimes U_{sa}^B(y)\mc{E}_{\text{l-ssr}}(\rho^{AB})U_{sa}^{A\dagger}(y)\otimes U_{sa}^{B\dagger}(y)] \notag \\
&\mc{E}_{\text{l-ssr}}(\rho^{AB})=\sum_{n, m=0}^{\infty}(P^A_{n}\otimes P^B_{m})\rho_s^{AB}\otimes\rho_a^{AB} (P^A_{n}\otimes P^B_{m}) \notag
\end{align}
Where $E(y)$ is the measurement operator with outcome $y$, and $U_{sa}^{A/B}(y)$ are local unitary at telescope $A$ and $B$\par 
An immediate consequence of $[\rho_a^{AB}]^{T_A}>0$ is  $E(y)^{T_A}>0$, and following what has been proved in \cite{tsang2011}, if a measurement operator is positive partial transpose, we have the Fisher information ratio. 
\begin{equation}
  \mathbb{f}=O(\epsilon)  \stackrel{\epsilon\rightarrow 0}{\Longrightarrow} 0 
\end{equation}
\end{proof}

Takeaway. A state may be a good PR resource $\mathbb{h} > 0$ and yet fail to deliver comparable FI under local measurements. To achieve the QFI under the local SSR, one typically needs both a PR resource and entanglement (nonlocality) to enable effectively nonlocal measurements. The correlated coherent state exemplifies this gap.

\section{Implementation with linear-optical quantum teleportation}
\label{appendixE}
In this section, we provide further details on the implementation of several entanglement-assisted quantum telescopy protocols. The main ideas for the linear-optical implementations are adapted from earlier work on linear-optical quantum computation \cite{Knill2001, Kok2007}. However, because these specific protocols have not been explicitly laid out for our estimation objective, and because we focus on maximizing Fisher information (FI) for telescopy, we detail the implementation with an emphasis on aspects relevant to FI/QFI in our quantum telescopy task.
\begin{figure}[t]
    \centering
\includegraphics[width=0.5\textwidth]{Teleportation.png}
    \caption{Schematic of near-deterministic teleportation: (1) Quantum Fourier transformation $\mc{F}_{N+1}$ is implemented at telescope A; (2) Measurement outcome $n$: number of photons detected, and $\vec{s}$ arrangement of detection is sent to telescope B for phase correction; (3) A scanning interferometric measurement is performed as in the direct interference scheme. } 
    \label{fig:telescope}
\end{figure}
\par 
Here, we assume the source state is pure for simplicity with $\ket{\psi}_s^{(1)}=\frac{1}{\sqrt{2}}(\ket{0}_A\ket{1}_B+e^{i\theta}\ket{1}_A\ket{0}_B)$, and the case for mixed source state follows straightforwardly by linearity.  Given a source state of the form of Eq.~\ref{eq:LOQC source}, we have the composite system (before applying the superselection rule) as:
\begin{align}
\ket{\psi}^{AB}&=\frac{1}{\sqrt{2}}(\ket{0}_A\ket{1}_B+e^{i\theta}\ket{1}_A\ket{0}_B)\notag \\
&\otimes\sum_{n=0}^Nf_{n}\ket{1}^{\otimes n}_A\ket{0}^{\otimes N-n}_A\ket{0}^{\otimes n}_B\ket{1}^{\otimes N-n}_B   
\end{align}
In the presence of the superselection rule, we have the joint state $\rho^{AB}=\oplus_{n=0}^{N+1}\rho_n^{AB}$, where $\rho_n^{AB}= p_n\op{\psi}{\psi}_n$ with 
\begin{align}
    p_n&=\frac{|f_n|^2}{2}  &n=0, N+1 \\
    p_n&=\frac{{|f_n|^2+|f_{n-1}|^2}}{2} &1\le n\le N
\label{eq:p_n}
\end{align}
Here $p_n$ is the probability weight of the block with $n$ photons at A. The corresponding normalized states are
\begin{widetext}
\begin{align}
  \ket{\psi}_0^{AB}&=\ket{0}^{N+1}_A\ket{1}_B^{N+1}\quad\quad\text{and} \quad\quad  \ket{\psi}_{N+1}^{AB}=\ket{1}^{N+1}_A\ket{0}_B^{N+1} \notag\\
  \ket{\psi}_n^{AB}&=\frac{1}{\sqrt{|f_n|^2+|f_{n-1}|^2}}(f_n\ket{0}_A\ket{1}^{\otimes n}_A\ket{0}^{\otimes N-n}_A\ket{1}_B\ket{0}^{\otimes n}_B\ket{1}^{\otimes N-n}_B+e^{i\theta}f_{n-1}\ket{1}^{\otimes n}_A\ket{0}^{\otimes N-n+1}_A\ket{0}^{\otimes n}_B\ket{1}^{\otimes N-n+1}_B  ) 
\end{align}
\end{widetext}
The first step of the protocol consists of a quantum Fourier transformation on the $N+1$ mode at telescope A with $\mc{F}_{N+1}(a^{\dagger}_p)=\frac{1}{\sqrt{N+1}}\sum_qw^{pq}a^{\dagger}_q=\frac{1}{\sqrt{N+1}}\sum_q\exp[2\pi i\frac{pq}{N+1}]a^{\dagger}_q$ for $p,q \in [0,\cdots, N]$\blk (assuming indistinguishable of the photons). For $n\in [1,\cdots, N-1]$:
\begin{widetext}
\begin{align}
\ket{\psi}_n^{AB}&=\frac{1}{\sqrt{|f_n|^2+|f_{n-1}|^2}}\left(f_n \prod_{i=1}^n a^{\dagger}_ib^{\dagger}_0\prod_{j=n+1}^{N} b^{\dagger}_j+e^{i\theta}f_{n-1}\prod_{i=0}^{n-1} a^{\dagger}_i\prod_{j=n}^{N} b^{\dagger}_j\right)\ket{\text{\textbf{0}}}^{AB} \notag \\
\mc{F}_{N+1}\ket{\psi}_n^{AB}&=\frac{1}{\sqrt{(N+1)^{{n}}(|f_n|^2+|f_{n-1}|^2)}}\left(f_n \prod_{i=1}^n \sum_{q=0}^N \omega^{iq} a^{\dagger}_q b^{\dagger}_0\prod_{j=n+1}^{N} b^{\dagger}_j+e^{i\theta}f_{n-1}\prod_{i=0}^{n-1}\sum_{q=0}^N\omega^{iq} a^{\dagger}_q  \prod_{j=n}^{N} b^{\dagger}_j\right)\ket{\text{\textbf{0}}}^{AB}   
\label{eq: F-output}
\end{align}
\end{widetext}
A certain arrangement $\vec{s}$ is a vector of length $N+1$, representing how many photons are detected at each output port labeled from $[0,\cdots, N]$. As $\vec{s}=[s_0,\cdots, s_N]$ with $\sum_{i=0}^Ns_i=n$. Since both terms in Eq.~\ref{eq: F-output} have contribution to a certain arrangement, there is a phase differences to be fixed, more specifically, the for an arrangement $\vec{s}=[s_0,\cdots, s_N]$ corresponds to a projection to $\ket{\Psi(\vec{s})}^A=\frac{1}{\sqrt{\prod_i s_i!}}  \prod_{i=0}^N (a^{\dagger}_i)^{s_i}\ket{\text{\textbf{0}}}^A$\blk, we have:
\begin{widetext}
\begin{align}
\prescript{A}{}{\bra{\Psi(\vec{s})}}\mc{F}_{N+1}\ket{\psi}_n^{AB}&=\frac{1}{\sqrt{(|f_n|^2+|f_{n-1}|^2)}} \left(f_n c_0 b^{\dagger}_0\prod_{j=k+1}^{N} b^{\dagger}_j+e^{i\theta}f_{n-1}c_1 \prod_{j=k}^{N} b^{\dagger}_j\right)\ket{\text{\textbf{0}}}^{B}  
\end{align}
\end{widetext}
with $ c_0=\frac{1}{\sqrt{(N+1)^{{n}}\prod_n s_n!}}\sum_{\pi\in P^0_{n}}\prod_{i=1}^n\omega^{{d}_i(\vec{s})\pi(i)} $ and $c_1=\frac{1}{\sqrt{(N+1)^{{n}}\prod_n s_n!}}\sum_{\pi\in P^1_{n}}\prod_{i=0}^{n-1}\omega^{{d}_i(\vec{s})\pi(i)}$, 
where $\pi$ is element in permutation group $P^{0}_n$ or $P^{1}_n$, which are permutation acting on the modes $[1,\cdots, n]$, or modes $[0,\cdots, n-1]$ (As the input index set is shifted by $+1$ between the two terms)

Given an arrangement $\vec{s}$, a vector $\vec{d}$ is defined as a length-$n$ vector that specify each particle’s output port\cite{Ticky2010}, i.e., $\vec{s}=[2,1,0]\Rightarrow \vec{d}(\vec{s})=[1,1,2]$, it is not difficult to see that, 
\begin{equation}
    c_0=c_1\omega^{\sum_i d_i(\vec{s})}=c_1\exp[2\pi i\frac{\sum_i{d}_i(\vec{s})}{N+1}]
\end{equation}
Therefore, it is sufficient to introduce a conditional phase gate $\mc{P}_{\phi(\vec{s})}$ with $\phi(\vec{s}) = 2\pi \frac{\sum_i d_i(\vec{s})}{N+1}$ to correct the relative phase. This allows us to obtain phase-corrected states that can be coherently combined, yielding
\begin{align}
&\ket{\psi}_n^B=\sum_{\vec{s}}\mc{P}_{\phi(\vec{s})}\prescript{A}{}{\bra{\Psi(\vec{s})}}\mc{F}_{N+1}\ket{\psi}_n^{AB}\\
&=\frac{1}{\sqrt{|f_n|^2+|f_{n-1}|^2}} \left(f_n b^{\dagger}_0\prod_{j=n+1}^{N} b^{\dagger}_j+e^{i\theta}f_{n-1}\prod_{j=n}^{N} b^{\dagger}_j\right)\ket{\text{\textbf{0}}}^{B} \notag
\end{align}
Where, intuitively, now the bipartite state is teleported to Bob\blk

Then, simply by tracing out all other modes but keeping only the $0$-th and $n$-th modes at Bob's side, with $n$ being the total number of photons detected at telescope A, we have
\begin{equation}
\ket{\psi}_n^B=\frac{1}{\sqrt{|f_n|^2+|f_{n-1}|^2}} \left(f_n b^{\dagger}_0+e^{i\theta}f_{n-1} b^{\dagger}_n\right)\ket{\text{\textbf{0}}}^{B},
\end{equation}
by linearity, assuming a mixed source state $\rho_s^{(1)}$ of the form of Eq.~\ref{eq: rho_1} with $g=|g|e^{i\theta}$, we have:
\begin{equation}
 \rho_n^{B}=\frac{1}{|f_n|^2+|f_{n-1}|^2}\begin{pmatrix}
|f_n|^2 & f_nf^*_{n-1}g \\
 f^*_nf_{n-1}g^* & |f_{n-1}|^2
\end{pmatrix},
\end{equation}
Written in $\{\ket{0}_B\ket{1_n}_B, \ket{1}_B\ket{0_n}_B\}$ basis.

Since the state has now been teleported to Bob, there exists a local measurement whose Fisher information equals the quantum Fisher information (QFI), and the QFI for the state above is proportional to that of Eq. (ref: QFI-epsilon-app). Moreover, the optimal measurement for each parameter is related to its symmetric logarithmic derivative (SLD).

For completeness, we provide the explicit optimal measurements for estimating $|g|$ and $\theta$. Both take the form 
\begin{subequations}
\begin{align}
& \ket{+}_n=(\cos\frac{\alpha_n}{2}b_0^{\dagger}+\sin\frac{\alpha_n}{2}e^{i\delta}b_n^{\dagger})\ket{\text{\textbf{0}}}^{B} \\
& \ket{-}_n=(\sin\frac{\alpha_n}{2}b_0^{\dagger}-\cos\frac{\alpha_n}{2}e^{i\delta}b_n^{\dagger})\ket{\text{\textbf{0}}}^{B}
\end{align}
\end{subequations}
which gives
\begin{widetext}
\begin{align}
P^n_{\pm}&= \epsilon p_n\bra{\pm}\rho^B_n\ket{\pm}_n=\frac{\epsilon p_n}{2}\left[1 \pm \left(\frac{|f_n|^2-|f_{n-1}|^2}{|f_n|^2+|f_{n-1}|^2}\cos\alpha_n+\frac{2|f_n||f_{n-1}||g|}{|f_n|^2+|f_{n-1}|^2}\sin\alpha_n\cos(\theta+\phi-\delta)\right)\right] \notag 
\end{align}
\end{widetext}
where $\epsilon p_n$ is the probability of obtaining the conditional state $\rho^B_n$, with $p_n$ giving in Eq.~\ref{eq:p_n}. And $\phi=\arg(f_nf^{*}_{m-1})$.

\noindent \textit{Optimal measurement for estimating $\theta$}, the optimal measurement for estimating $\theta$:
\begin{align}
  \delta=\theta+\phi+\frac{\pi}{2}  \quad\quad \alpha_n=\frac{\pi}{2}
\end{align}
\textit{Optimal measurement for estimating $|g|$}, the optimal measurement for estimating $\theta$:
\begin{align}
  \delta=\theta+\phi,\quad\tan\alpha_n=\frac{2|f_n||f_{n-1}|}{(|f_n|^2-|f_{n-1}^2|)|g|}
\end{align}
\blk
One can hence evaluate the Fisher information for the corresponding scheme as:
\begin{widetext}
\begin{align}
\mbb{F}_{|g|}&=\sum_{n,\pm}\frac{1}{P^n_{\pm}}\left(\frac{\partial P^n_{\pm} }{\partial |g|}\right)^2\xrightarrow{ \delta=\theta+\phi+\frac{\pi}{2},~~\alpha_n=\frac{\pi}{2},~~\epsilon\ll 1} \sum_{n=1}^N\frac{2|f_n|^2|f_{n-1}|^2}{|f_n|^2+|f_{n-1}|^2}\frac{\epsilon}{1-|g|^2}\\
\mbb{F}_{|\theta|}&=\sum_{n,\pm}\frac{1}{P^n_{\pm}}\left(\frac{\partial P^n_{\pm} }{\partial\theta}\right)^2\xrightarrow{ \delta=\theta+\phi,~~\tan\alpha_n=\frac{2|f_n||f_{n-1}|}{(|f_n|^2-|f_{n-1}^2|)|g|},~~\epsilon\ll 1} \sum_{n=1}^N\frac{2|f_n|^2|f_{n-1}|^2}{|f_n|^2+|f_{n-1}|^2}|g|^2\epsilon\\
\mathbb{f}&=\sum_{n=1}^N\frac{2|f_n|^2|f_{n-1}|^2}{|f_n|^2+|f_{n-1}|^2},
\end{align}
\end{widetext}
 which saturates the Quantum Fisher information we described in Proposition~\ref{prop: assisted-QFI} for any state of the form:
 $$\ket{\psi}_a=\sum_{n=0}^Nf_{n}\ket{1}^{\otimes n}_A\ket{0}^{\otimes N-n}_A\ket{0}^{\otimes n}_B\ket{1}^{\otimes N-n}_B$$
Therefore, the KLM, modified-KLM, and optimal-KLM schemes we introduced are all implementable using local operations and classical communication (LOCC) with linear optics.

\end{document}